\newcolumntype{Y}{>{\centering\arraybackslash}X}
\def\wang#1 {\fbox {\footnote {\ }}\ \footnotetext { From Wang: {\color{red}#1}}}
\def\tl#1 {\fbox {\footnote {\ }}\ \footnotetext { From Niu: {\color{blue}#1}}}
\newtheorem{Th}{Theorem}[section]
\newtheorem{Cor}[Th]{Corollary}
\newtheorem{Prop}[Th]{Proposition}
\newtheorem{Lem}[Th]{Lemma}
\newtheorem{Def}[Th]{Definition}
\newtheorem{example}{Example}
\newtheorem{Rem}[Th]{Remark}
\newcommand{\gf}{{\mathbb F}}
\newcommand{\fq}{{\mathbb F}_q}
\newcommand{\figcaption}{\def\@captype{figure}\caption}
\newcommand{\tabcaption}{\def\@captype{table}\caption}
\begin{document}

\title{On the constructions of $ n $-cycle permutations}

		\author{
		{Yuting Chen, Liqi Wang and Shixin Zhu}
		\thanks{Yuting Chen, Liqi Wang and Shixin Zhu are with the School of Mathematics, Hefei University of Technology, Hefei, Anhui, China.

			The research  is partially supported by the National Natural Science Foundation of China under Grant Nos. 61802102, 61772168, the Fundamental Research Funds for the Central Universities of China under Grant Nos. JZ2020HGTB0045,	PA2019GDZC0097, and the Natural Science Foundation of Anhui Province under Grant No. 2008085QA04.

			E-mail:
     Yuting Chen: yutingc666@outlook.com,
     Liqi Wang: liqiwangg@163.com,
     Shixin Zhu: zhushixin@hfut.edu.cn
		}
	}

\maketitle{}
\begin{abstract}
Any permutation polynomial is an $ n $-cycle permutation.
When $n$ is a specific small positive integer, one can obtain efficient permutations, such as involutions, triple-cycle permutations and quadruple-cycle permutations.
These permutations have important applications in cryptography and coding theory.
Inspired by the AGW Criterion, we propose criteria for $ n $-cycle permutations, which mainly are of the form $ x^rh(x^s) $.
We then propose unified constructing methods including recursive ways and a cyclotomic way for $ n $-cycle permutations of such form.
We demonstrate our approaches by constructing three classes of explicit triple-cycle permutations with high index and two classes of $ n $-cycle permutations with low index.
\end{abstract}

\begin{IEEEkeywords}
	Finite field. permutation polynomial.  the AGW Criterion. $ n $-cycle permutation.
\end{IEEEkeywords}

\section{Introduction}

Let $\gf_q$ be the finite field with $q$ elements, where $ q $ is a prime power.
When the map $f: a \mapsto f(a) $ is a bijection on $ \gf_q $, $f(x) \in\gf_q[x]$ is called a \textit{permutation polynomial} (PP) and $f^{-1}$ denotes the compositional inverse of $ f $.
If there exists an positive integer $ n $ such that $ f^{(n)}=I $ is the identity map, we call $ f $ an \textit{n-cycle permutation}, where the $n$-th functional power of $ f $ is defined inductively by $ f^{(n)} = f \circ f^{(n-1)} = f^{(n-1)} \circ f $ and $ f^{(1)}=f, f^{(0)}=I ,f^{(-n)}=(f^{-1})^{(n)} $ with our notation.
For a small $ n $, $ n $-cycle permutations are called \textit{low-cycle permutations} in this paper.
When $n=2,3$ or $4$, $f$ is called an \textit{involution}, a \textit{triple-cycle permutation} or a \textit{quadruple-cycle permutation} respectively, and they are low-cycle permutations.

The applications of permutation polynomials in cryptography, coding theory, and combinational designs have been extensively studied, readers are refereed to \cite{MullenWang14, hou2015permutation, li2018survey,Wang2019index} and the references therein.
It is a challenging task to find new classes of permutation polynomials.
However in 2011, Akbary et al. \cite{akbary2011constructing} provided a powerful method for constructing PPs over finite fields, which was called the AGW Criterion.
Its significance lies in both providing a unified explanation of earlier constructions of PPs and serving a method to construct many new classes of PPs.
After then, several authors, such as \cite{ding2015permutation,li2017new,gupta2016some,li2017several,zha2017further,cepak2017permutations,li2018newp,wu2017permutation,xu2018some,tu2018two,li2017two}, constructed permutation polynomials of the form $x^{r}h(x^{s}) $ over $\gf_{q}$.
In many situations, both the permutation polynomial and its compositional inverse are necessary.
For instance, in block ciphers, a permutation  is used as an S-box to build the confusion layer during the encryption process.
The compositional inverse of the S-box is required while decrypting the cipher.
Recently, the compositional inverse was also applied to the study of constructing permutations with boomerang uniformity 4, please see\cite{li2019newBCT}.
Due to the importance of the compositional inverse, it attracts a lot of attentions and there are also many researches on it, such as \cite{niu2020general,wu2014compositional,wu2013compositional,tuxanidy2014inverses,li2019compositional,niu2019new,zheng2020inverses}.
Therefore, if both the permutation and its compositional inverse are efficient in terms of implementation, it is advantageous to the designer.

This motivates the use of low-cycle permutations in the S-box of block ciphers.
One immediate practical advantage of a low-cycle permutation is that the implementation of the inverse does not require much resources, which is particularly useful in devices with limited resources as a part of a block cipher.
For instance, involutions have been used frequently in block cipher designs, in AES \cite{daemen2013design}, Khazad \cite{barreto2000khazad}, Anubis \cite{barreto2000anubis} and PRINCE \cite{borghoff2012prince}.
Furthermore, low-cycle permutations (such as involutions) have been also used to construct Bent functions over finite fields \cite{mesnager2016constructions,coulter2018bent,gallager1962low} and to design codes.
Recently, in \cite{canteaut2015behaviors}, behaviours of permutations of an affine equivalent class have been analyzed with respect to some cryptanalytic attacks, and it is shown that low-cycle permutations (such as involutions)  are the best candidates against these attacks.
In addition, the study of $ n $-cycle permutations will be very helpful in classifying permutations in the view of cycle, since each permutation over finite sets must be an $ n $-cycle permutation.
Due to the importance of $ n $-cycle permutations, there are systematic studies about them in recent years.
The explicit study of involutions was started with the paper \cite{charpin2016involutions} for finite fields with even characteristic, where  basic tools and constructions of  involutions were given.
Since then, lots of attentions had been drawn in this direction.
Recently, Zheng et al. \cite{zheng2019constructions} gave a more concise criterion for involutory permutations of the form $x^rh(x^s)$ over $\gf_q$, where $s\mid {(q-1)}$.
By using this criterion, they proposed a general method, from a cyclotomic perspective, to construct involutions of such form  from given involutions over some subgroups of $\mathbb{F}_{q}^{*}$ by solving congruent and linear equations over finite fields.
Independently, Niu et al. \cite{niu2019new} started from the AGW Criterion, and proposed an involutory version of the AGW Criterion.
Then they demonstrated their results by constructing explicit involutions of the forms $ x^rh(x^s) $ and $ g\left(x^{q^i} - x+\delta\right) +cx  $.	
In 2019, \cite{liuTripleCyclePermutationsFinite2019} studied triple-cycle permutations over binary fields of the types of Monomial, Dickson polynomial and Linearized polynomial, respectively. Very recently, Wu et al. \cite{wuCharacterizationsConstructionsTriplecycle2020a} generalized the work of \cite{zheng2019constructions} and obtained some characterizations of triple-cycle permutations of the form $ x^r h(x^s ) $.
However, there are no studies on general $ n $-cycle permutations in the literature as far as we know.
This motivates us to propose unified results and to provide new explicit constructions.


The main purpose of this paper is to study the  constructions of $ n $-cycle permutations of the form $ x^rh(x^s) $ over finite sets.
First, motivated by the AGW Criterion, we generalize the previously known results about involutions in \cite{niu2019new}, and obtain that  $ n $-cycle permutations on a finite set $A$ can be constructed from $ n $-cycle permutations on its proper small set $S$ under suitable conditions.
Next, we obtain a unified criterion for $ n $-cycle permutations of the form $ x^rh(x^s) $, a method for constructing $ n $-cycle permutations from ones over the subfield recursively, and a cyclotomic perspective to characterize the properties of $ n $-cycle permutations.
Finally, we consider a piecewise method in the construction of $ n $-cycle permutations of cyclotomic form.
We will explain our approaches by giving five classes of explicit constructions of PPs.

The rest of this paper is organized as follows.
In Section \ref{pre},  we introduce some basic knowledge about general $ n $-cycle permutations.
  Main tools for constructing $ n $-cycle permutations of the form $ x^rh(x^s) $ are proposed in Section \ref{gen}.
Explicit $ n $-cycle permutations are respectively constructed from two different perspectives in Sections \ref{high} and \ref{low}. A conclusion of this paper is given in Section \ref{cons}.

\section{Preliminaries}
\label{pre}


Before we handling permutations of the form $ x^rh(x^s) $ , we prepare and discuss general $ n $-cycle permutations in this section.
\begin{Def}
	\cite[Definition 4]{charpin2016involutions}
	Let $f$ be a permutation of a finite set $ A $.
	Let $t$ be a positive integer.
	A \textit{cycle} of $f$ is a subset $\left\{x_{1}, \cdots, x_{t}\right\}$ of pairwise distinct elements of $A$ such that $f\left(x_{i}\right)=x_{i+1}$ for $1 \leq i \leq t-1$ and $f\left(x_{t}\right)=x_{1} $.
	The cardinality of a cycle is called its \textit{length}.
\end{Def}

The result below is a generalization of 	\cite[Proposition 6]{charpin2016involutions} for $ n $-cycle permutations.
\begin{Prop}
	\label{nandc}
	Let $f$ be a permutation of a finite set $ A $.
	Then, $ f $ is an $ n $-cycle permutation if and only if the length $ l $ of each cycle of $ f $ is no more than $ n $ and $ l  \mid n $.
\end{Prop}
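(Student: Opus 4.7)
The plan is to prove the two directions separately, using the fact (implicit in the definition of a cycle) that the cycles of $f$ form a partition of the finite set $A$, so that every element $x \in A$ belongs to exactly one cycle.

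For the forward direction, suppose $f^{(n)} = I$. Fix an arbitrary cycle $\{x_1, \ldots, x_l\}$ of $f$. By the definition of a cycle, $f(x_i) = x_{i+1}$ for $1 \le i \le l-1$ and $f(x_l) = x_1$, so iterating shows $f^{(k)}(x_1) = x_{(k \bmod l) + 1}$ for every nonnegative integer $k$, under the convention that the indices wrap around modulo $l$. In particular $f^{(n)}(x_1) = x_1$ forces $n \equiv 0 \pmod{l}$, i.e.\ $l \mid n$, which also gives $l \le n$ since $l$ is a positive integer.

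For the converse, suppose every cycle of $f$ has length $l$ dividing $n$. Take any $x \in A$ and let $\{x_1, \ldots, x_l\}$ be the unique cycle containing $x$, with $x = x_i$ for some $i$. The same iteration formula gives $f^{(l)}(x) = x$, and since $l \mid n$, writing $n = lm$ yields $f^{(n)}(x) = (f^{(l)})^{(m)}(x) = x$. As $x$ was arbitrary, $f^{(n)} = I$, so $f$ is an $n$-cycle permutation.

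The only mild obstacle is being careful with the cyclic indexing when computing $f^{(k)}(x_1)$; once this is written out, both implications reduce to the elementary fact that the least positive integer $k$ with $f^{(k)}(x) = x$ for $x$ in a length-$l$ cycle is exactly $l$. No deeper machinery is needed, and the statement is a direct generalization of the involution case ($n=2$) of \cite[Proposition 6]{charpin2016involutions}.
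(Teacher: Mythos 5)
Your proof is correct and follows essentially the same route as the paper: fix a cycle, use cyclic indexing to deduce from $f^{(n)}=I$ that $n\equiv 0 \pmod{l}$, and conversely use $l\mid n$ to conclude $f^{(n)}$ fixes every element of every cycle. The only (welcome) difference is that you make explicit the fact that the cycles partition $A$, which the paper's converse direction leaves implicit.
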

\begin{proof}
	Let $x_{1},x_{2},\cdots,x_{l}$ be $l$ elements defined by $x_{l}=f(x_{l-1}),x_{l-1}=f(x_{l-2}),\cdots,x_{3}=f(x_{2}),x_{2}=f(x_{1})$, and define $ x_{i}=x_{i+l} $ for any integer $ i $.
	Then $x_{l}=f(x_{l-1})=f(f(x_{l-2}))=f^{(n)}(x_{l-n})=x_{l-n},$ since $f$ is an $ n $-cycle permutation.
	One can obtain $  l-n   \equiv 0   \pmod l $.
	Thus, the length $ l $ of each cycle of an $ n $-cycle permutation is no more than $ n $, and $ l   \mid n $.
	
	Conversely, if the length $ l $ of each cycle of $ f $ is no more than $ n $ and $ l  \mid n $, then for $ x_i $ in each cycle, we have $ f^{(n)}(x_i)=x_i $, i.e., $ f $ is an $ n $-cycle permutation.
\end{proof}
Apparently if $ f $ is an $ n $-cycle permutation, then $ f $ is also an $ kn $-cycle permutation, where $ k $ is a positive integer.
Although we focus on the $ n $-cycle permutation in this paper, we propose the following results for clarity.
\begin{Def}
	For the least positive integer such that $ f^{(n)}=I $, we call $ f $ a \textit{fundamentally n-cycle permutation}.
\end{Def}
\begin{Prop}
	\label{lcm}
	The least common multiple of lengths of all cycles of $ f $ is $ n $ if and only if $ f $ is a fundamentally $ n $-cycle permutation.
\end{Prop}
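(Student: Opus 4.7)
The plan is to leverage Proposition \ref{nandc}, which characterizes $n$-cycle permutations purely in terms of the divisibility of cycle lengths. The key observation that ties everything together is that $f^{(m)} = I$ holds precisely when $m$ is a common multiple of the lengths of all cycles of $f$, so the least positive such $m$ is exactly the least common multiple of those lengths.

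For the forward direction, I would assume that the least common multiple of the cycle lengths of $f$ equals $n$. Since every cycle length divides this LCM, Proposition \ref{nandc} immediately yields that $f$ is an $n$-cycle permutation. To verify that $n$ is in fact the smallest such exponent, I would suppose for contradiction that $f^{(m)} = I$ for some positive integer $m < n$; applying Proposition \ref{nandc} in the opposite direction, each cycle length would then divide $m$, making $m$ a common multiple of the cycle lengths strictly smaller than their least common multiple $n$, which is impossible.

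For the converse, assume $f$ is a fundamentally $n$-cycle permutation and let $L$ denote the least common multiple of the lengths of all cycles of $f$. One direction of Proposition \ref{nandc} guarantees that each cycle length divides $n$, so $L \mid n$, and in particular $L \leq n$. Conversely, by definition of $L$, each cycle length divides $L$, so the other direction of Proposition \ref{nandc} gives $f^{(L)} = I$. The minimality assumption on $n$ then forces $n \leq L$, and so $n = L$.

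The argument is essentially a repackaging of Proposition \ref{nandc} through the lens of the LCM, so I do not expect any substantive obstacle; the only care required is to invoke both directions of Proposition \ref{nandc} correctly and to phrase the minimality step cleanly. I would state the proof in two short paragraphs paralleling the two implications, without introducing additional notation beyond $L = \mathrm{lcm}$ of the cycle lengths.
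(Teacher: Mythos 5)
Your proposal is correct and follows essentially the same route as the paper: both rest entirely on Proposition \ref{nandc}, observing that $f^{(m)}=I$ exactly when every cycle length divides $m$, so the least such $m$ is the least common multiple of the cycle lengths. The paper compresses this into two sentences, while you spell out both implications and the minimality step explicitly, but the underlying argument is identical.
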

\begin{proof}
According to Proposition \ref{nandc}, $ f $ is an $ n $-cycle permutation if and only if the length $ l $ of each cycle of $ f $ satisfying $ l  \mid n $.
Thus, the least common multiple of lengths of all cycles of $ f $ is exactly the least positive integer such that $ f^{(n)}=I $.
\end{proof}

\begin{Prop}
	Assume $ n $ is a prime, then $ f $ is a fundamentally $ n $-cycle permutation if and only if $ f $ is an $ n $-cycle permutation and $ f \ne I $.
\end{Prop}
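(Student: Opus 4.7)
The plan is to handle the two directions separately, leveraging Proposition~\ref{lcm} which characterizes the fundamental period as the least common multiple of the cycle lengths.

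For the forward implication, suppose $f$ is a fundamentally $n$-cycle permutation. Then by definition $f^{(n)} = I$, so $f$ is an $n$-cycle permutation. Since $n$ is prime we have $n \geq 2$, and the fundamental minimality forces $f^{(1)} \ne I$, i.e.\ $f \ne I$. This direction is immediate and does not require primality beyond $n\ge 2$.

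For the converse, suppose $f$ is an $n$-cycle permutation with $f \ne I$. Let $m$ denote the fundamental period of $f$, i.e.\ the least positive integer with $f^{(m)} = I$. The key step is to observe that $m \mid n$: by Proposition~\ref{lcm}, $m$ is the least common multiple of the lengths of all cycles of $f$, and by Proposition~\ref{nandc} every such length divides $n$, hence so does $m$. Since $n$ is prime, the only positive divisors of $n$ are $1$ and $n$. If $m = 1$, then $f = f^{(1)} = I$, contradicting the hypothesis $f \ne I$. Therefore $m = n$, which is exactly the statement that $f$ is a fundamentally $n$-cycle permutation.

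The argument is essentially a one-line application of primality to the divisibility relation $m \mid n$, so there is no real obstacle; the only point that requires care is to justify $m \mid n$ cleanly, and this is precisely what Proposition~\ref{lcm} combined with Proposition~\ref{nandc} supplies. The assumption $f \ne I$ is what rules out the trivial divisor $m = 1$, and the primality of $n$ is what eliminates all intermediate possibilities between $1$ and $n$.
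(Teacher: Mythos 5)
Your proof is correct and follows essentially the same route as the paper: both arguments combine Proposition~\ref{nandc} (every cycle length divides $n$) with Proposition~\ref{lcm} (the fundamental period is the lcm of the cycle lengths) and then invoke primality of $n$ to rule out everything except the trivial case $f=I$. The only cosmetic difference is that you apply primality to the fundamental period $m\mid n$, whereas the paper applies it directly to the individual cycle lengths; the substance is identical.
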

\begin{proof}
According to Proposition \ref{nandc}, the length of each cycle of an $ n $-cycle permutation $ f \ne I $ is either $ 1 $ or $ n $, since $ n $ is prime.
Thus, $ f $ is a fundamentally $ n $-cycle permutation if and only if $ f $ is an $ n $-cycle permutation and $ f \ne I $, by Proposition \ref{lcm}.
\end{proof}
\begin{Rem}
	Thus, for $ 2 $-cycle permutations (involutions) and $ 3 $-cycle permutations (triple-cycle permutations), they are respectively fundamentally $ 2 $-cycle permutations and fundamentally $ 3 $-cycle permutations clearly.
\end{Rem}

Now we recall a famous result named AGW Criterion, which was proposed by  Akbary et al. for constructing PPs in \cite{akbary2011constructing}.
It will play an important role in our following results.
\begin{Lem}
	\label{LGWlemma}
	(\cite{akbary2011constructing}, AGW Criterion)
	Let $A, S$, and $\overline{S}$ be finite sets with $\# S=\# \overline{S}$, and let $f: A\to A,$ $g: S\to \overline{S}$, $\lambda: A\to S$ and $\overline{\lambda}: A\to\overline{S}$ be maps such that $\bar{\lambda}\circ f=g\circ \lambda$.
	If both $\lambda$ and $\bar{\lambda}$ are surjective, then the following statements are equivalent:
	\begin{enumerate}[(1)]
		\item $f$ is a bijection and
		\item $g$ is a bijection from $S$ to $\overline{S}$ and $f$ is injective on $\lambda^{-1}(s)$ for each $s\in S$.
	\end{enumerate}
\end{Lem}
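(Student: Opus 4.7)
The plan is to prove the equivalence in both directions by chasing the commuting square $\bar{\lambda}\circ f=g\circ \lambda$ together with the hypotheses that $\#S=\#\overline{S}$ and that both $\lambda$ and $\bar{\lambda}$ are surjective. Finiteness of $A$, $S$, and $\overline{S}$ will be used repeatedly, since it allows one to promote injectivity or surjectivity alone into bijectivity.

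For the direction $(1)\Rightarrow(2)$, I would first observe that if $f$ is a bijection on $A$, then it is automatically injective on every subset of $A$, so the injectivity of $f$ on each fiber $\lambda^{-1}(s)$ is immediate. To see that $g\colon S\to\overline{S}$ is a bijection, I would argue surjectivity directly: given $\bar{s}\in\overline{S}$, use surjectivity of $\bar{\lambda}$ to write $\bar{s}=\bar{\lambda}(a)$ for some $a\in A$, then use bijectivity of $f$ to write $a=f(a')$ for some $a'\in A$; the commuting relation then gives $\bar{s}=\bar{\lambda}(f(a'))=g(\lambda(a'))$. Since $\#S=\#\overline{S}<\infty$, surjectivity of $g$ forces bijectivity.

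For the direction $(2)\Rightarrow(1)$, by finiteness of $A$ it suffices to prove $f$ is injective. Suppose $f(a_1)=f(a_2)$ with $a_1,a_2\in A$. Applying $\bar{\lambda}$ and using $\bar{\lambda}\circ f=g\circ\lambda$ yields $g(\lambda(a_1))=g(\lambda(a_2))$, and injectivity of $g$ (from (2)) gives $\lambda(a_1)=\lambda(a_2)=:s$. Thus $a_1,a_2\in\lambda^{-1}(s)$, and the hypothesis that $f$ is injective on this fiber finishes the argument by forcing $a_1=a_2$.

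No serious obstacle is expected: the entire argument is a diagram chase, and the only care required is to keep track of which half of the commuting square is invoked at each step. In particular, one must use surjectivity of $\bar{\lambda}$ (not of $\lambda$) in the $(1)\Rightarrow(2)$ direction to produce a preimage on the right-hand side of the square, and one must combine $\#S=\#\overline{S}$ with finiteness at the end to upgrade the surjectivity of $g$ to a bijection; symmetrically, in $(2)\Rightarrow(1)$ one must combine injectivity of $g$ with injectivity of $f$ on each fiber, before invoking finiteness of $A$ to conclude that $f$ is a bijection.
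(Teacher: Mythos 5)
Your proof is correct: both directions are complete, the uses of $\#S=\#\overline{S}$, finiteness, and surjectivity of $\bar{\lambda}$ are all placed where they are actually needed, and the diagram chase in $(2)\Rightarrow(1)$ is exactly right. Note that the paper itself offers no proof of this statement---it is quoted as a known result directly from \cite{akbary2011constructing}---so there is no internal argument to compare against; your proof is the standard one for the AGW Criterion, essentially as in that reference.
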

The AGW Criterion can be illustrated as follows:
\begin{equation*}
	\xymatrix{
		A \ar[rr]^{f}\ar[d]_{\lambda} &   &  A  \ar[d]^{\overline{\lambda}} \\
		S	 \ar[rr]^{g} &  & \overline{S} .}
\end{equation*}
It not only provided a unified explanation for a lot of earlier constructions, but also motivated more new findings of PPs.
Inspired by the AGW Criterion, we obtain the following result for all $ n $-cycle permutations on a finite set, which is a generalization of \cite[Proposition 2.2]{niu2019new}.
\begin{Th}
	\label{criterion}
	Let $A$ and $S$ be finite sets, and let $f:A \to A$, $g: S \to S$, $\lambda:A \to S$ be maps such that $\lambda$ is surjective and $\lambda \circ f = g \circ \lambda$.
	Assume that $f$ is an $ n $-cycle permutation on $A$,  then $g$ is an $ n $-cycle permutation on $S$.
\end{Th}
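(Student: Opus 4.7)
The plan is to reduce everything to the semiconjugacy relation $\lambda \circ f = g \circ \lambda$ and iterate it. First I would prove by induction on $k \geq 1$ that $\lambda \circ f^{(k)} = g^{(k)} \circ \lambda$. The base case $k=1$ is the hypothesis; for the inductive step, composing on the left by $g$ gives
\[
g \circ (g^{(k)} \circ \lambda) = g \circ \lambda \circ f^{(k)} = \lambda \circ f \circ f^{(k)} = \lambda \circ f^{(k+1)},
\]
so the identity passes from $k$ to $k+1$. This routine step is the backbone of the argument and has no hidden difficulty.

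Next I would specialize to $k = n$. Since $f^{(n)} = I_A$, we obtain $g^{(n)} \circ \lambda = \lambda \circ I_A = \lambda$. Now I invoke surjectivity of $\lambda$: for any $s \in S$, pick $a \in A$ with $\lambda(a) = s$, and then $g^{(n)}(s) = g^{(n)}(\lambda(a)) = \lambda(a) = s$. Hence $g^{(n)} = I_S$ as a map on $S$.

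Finally, I need $g$ itself to be a permutation of $S$, since the definition of $n$-cycle permutation presupposes bijectivity. But $g^{(n)} = I_S$ already supplies a two-sided inverse for $g$, namely $g^{(n-1)}$ (using $g \circ g^{(n-1)} = g^{(n-1)} \circ g = g^{(n)} = I_S$), so $g$ is automatically bijective on the finite set $S$. Combined with $g^{(n)} = I_S$, this is exactly the statement that $g$ is an $n$-cycle permutation.

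The step I expect to be the most subtle is the appeal to surjectivity of $\lambda$: without it, $g^{(n)} \circ \lambda = \lambda$ only forces $g^{(n)}$ to agree with the identity on the image of $\lambda$, and there would be no way to conclude $g^{(n)} = I_S$ on all of $S$. Everything else is a direct manipulation of the commuting square from the AGW diagram, and no hypothesis on the cardinalities of $A$ and $S$ or on the injectivity of $\lambda$ on fibers is needed, in contrast to Lemma \ref{LGWlemma}.
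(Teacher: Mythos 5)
Your proof is correct and follows essentially the same route as the paper's: iterate the semiconjugacy relation $\lambda \circ f = g \circ \lambda$ to obtain $\lambda = \lambda \circ f^{(n)} = g^{(n)} \circ \lambda$, then invoke surjectivity of $\lambda$ to conclude $g^{(n)} = I_S$. Your write-up is slightly more careful in two places the paper leaves implicit---the explicit induction giving $\lambda \circ f^{(k)} = g^{(k)} \circ \lambda$, and the remark that $g^{(n)} = I_S$ forces $g$ to be a bijection (with inverse $g^{(n-1)}$)---but these are refinements of the same argument, not a different one.
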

\begin{proof}
	We have $\lambda = \lambda \circ f ^{(n)} = g \circ \lambda \circ f ^{(n-1)}=  g^{(n)} \circ \lambda$, where $ f^{(n)} $ denotes $ f $ composing itself for $ n-1 $ times.
	Since $\lambda$ is surjective, we obtain that $g$ is an $ n $-cycle permutation on $S$ from {$\lambda = g^{(n)} \circ \lambda$.	}
\end{proof}

Once we obtained an $ n $-cycle permutation, it is natural to obtain more $ n $-cycle permutations by composing itself by the following lemma.
\begin{Lem}
	\label{criterion2}
	Assume that $f$ is an $ n $-cycle permutation on $A$,  then $f^{(k)}$ is also an $ n $-cycle permutation, where $ n \ge 3,  1<k <n $.
\end{Lem}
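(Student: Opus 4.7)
The plan is to verify the defining condition $(f^{(k)})^{(n)} = I$ by direct manipulation of functional powers. First I note that since $f$ is an $n$-cycle permutation on the finite set $A$, $f$ is in particular a bijection, so each iterate $f^{(k)}$ is again a bijection on $A$. Next I would invoke the basic identity $(f^{(a)})^{(b)} = f^{(ab)}$, which follows from a routine induction using the inductive definition $f^{(m)} = f \circ f^{(m-1)} = f^{(m-1)} \circ f$ given in the introduction together with the associativity of composition.

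Applying this identity twice, once with $a=k,\, b=n$ and once with $a=n,\, b=k$, yields
\[
(f^{(k)})^{(n)} \;=\; f^{(kn)} \;=\; f^{(nk)} \;=\; (f^{(n)})^{(k)} \;=\; I^{(k)} \;=\; I,
\]
which is exactly the condition that $f^{(k)}$ is an $n$-cycle permutation. The hypothesis $1 < k < n$ is used only to rule out the trivial case $k=1$ (which merely restates the assumption) and to avoid the need to reduce $k$ modulo $n$; it plays no substantive role in the argument.

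As an alternative, one could argue structurally via Proposition~\ref{nandc}: if every cycle of $f$ has length $l$ dividing $n$, then on each such cycle the map $f^{(k)}$ decomposes into $\gcd(l,k)$ cycles, each of length $l/\gcd(l,k)$, and this length still divides $l$ and hence $n$. By Proposition~\ref{nandc} again, $f^{(k)}$ is therefore an $n$-cycle permutation. I do not anticipate any genuine obstacle in the proof; the content is essentially a one-line verification, and the only choice to make is whether to present it algebraically (via composition of functional powers) or combinatorially (via cycle lengths). The algebraic route is shorter and matches the style of the previous theorem, so I would adopt it as the main proof.
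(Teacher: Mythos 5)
Your proposal is correct. Note that the paper itself offers no proof of this lemma at all --- it is stated bare, evidently regarded as an immediate consequence of the definition --- so there is no written argument to compare against; your algebraic one-liner $(f^{(k)})^{(n)} = f^{(kn)} = (f^{(n)})^{(k)} = I^{(k)} = I$ is precisely the verification the authors leave implicit, and it is the natural one. Both of your routes are sound: the algebraic route needs only the identity $f^{(a+b)} = f^{(a)} \circ f^{(b)}$ (hence $(f^{(a)})^{(b)} = f^{(ab)}$), while the combinatorial route through Proposition~\ref{nandc} (cycles of $f$ of length $l \mid n$ split under $f^{(k)}$ into cycles of length $l/\gcd(l,k)$, which still divides $n$) gives the sharper structural picture, e.g.\ it shows exactly when $f^{(k)}$ is \emph{fundamentally} $n$-cycle. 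You are also right that the hypotheses $n \ge 3$ and $1 < k < n$ play no role in the argument and merely exclude trivial or redundant cases.
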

Furthermore, for different mappings with $ n $-cycle permutations, we have the following results.
\begin{Prop}
	\label{criterion3}
	Assume that $f,g$ are $ n $-cycle permutations on $A$,  then  $f\circ g$ is also an $ n $-cycle permutation if $ f\circ g= g \circ f $.
\end{Prop}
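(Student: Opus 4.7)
The plan is to exploit the commutativity hypothesis $f\circ g = g\circ f$ to show that $(f\circ g)^{(n)}$ collapses to $f^{(n)}\circ g^{(n)}$, after which the $n$-cycle property of $f$ and $g$ individually finishes the argument.

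First, I would establish the algebraic identity $(f\circ g)^{(n)} = f^{(n)} \circ g^{(n)}$ by induction on $n$. The base case $n=1$ is trivial. For the inductive step, I would write
\[
(f\circ g)^{(k+1)} = (f\circ g)\circ (f\circ g)^{(k)} = (f\circ g)\circ \bigl(f^{(k)}\circ g^{(k)}\bigr),
\]
and then repeatedly apply $f\circ g = g\circ f$ to migrate the single factor of $g$ past $f^{(k)}$. This last step itself requires a small sub-induction showing that commutativity of $f$ and $g$ implies commutativity of $g$ with every functional power $f^{(k)}$, i.e., $g\circ f^{(k)} = f^{(k)}\circ g$. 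Once that is in hand, the factor $g$ slides through, producing $f^{(k+1)}\circ g^{(k+1)}$.

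Second, since $f$ and $g$ are $n$-cycle permutations on $A$, we have $f^{(n)} = I$ and $g^{(n)} = I$. Combining with the identity above gives
\[
(f\circ g)^{(n)} = f^{(n)}\circ g^{(n)} = I \circ I = I,
\]
which is exactly the definition of $f\circ g$ being an $n$-cycle permutation. Note that $f\circ g$ is clearly a permutation of $A$ as a composition of permutations, so the bijectivity requirement is automatic.

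The only subtle point, and the place where the commutativity hypothesis is genuinely needed, is the rearrangement of $(f\circ g)^{(n)}$; without $f\circ g = g\circ f$ the expansion alternates $f$'s and $g$'s and cannot in general be reorganized. I do not anticipate any serious obstacle beyond carefully handling this bookkeeping via the two nested inductions described above.
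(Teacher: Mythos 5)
Your proof is correct and takes the same route the paper intends: the paper's own proof is just the one-line remark ``It is easy to verify by the definition of $n$-cycle permutation,'' and the argument it leaves implicit is exactly your commutativity-based expansion $(f\circ g)^{(n)} = f^{(n)}\circ g^{(n)} = I$. Your two nested inductions simply supply the bookkeeping the paper omits.
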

\begin{proof}
	It is easy to verify by the definition of $ n $-cycle permutation.
\end{proof}
\begin{Prop}
	\label{criterion4}
	Assume that $ f,g $ permute $ A $. Furthermore, $f$ is an $ n $-cycle permutation.
	Then $g \circ f \circ g^{-1}$ is also an $ n $-cycle permutation.
\end{Prop}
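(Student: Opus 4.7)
The plan is to verify directly from the definition that $(g \circ f \circ g^{-1})^{(n)}$ equals the identity map on $A$. The idea is a straightforward telescoping computation: when we compose $g \circ f \circ g^{-1}$ with itself $n$ times, each adjacent pair $g^{-1} \circ g$ in the middle collapses to the identity, leaving only the outermost $g$ and $g^{-1}$ with $n$ copies of $f$ sandwiched between them.

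More precisely, I would first observe that $g \circ f \circ g^{-1}$ is a permutation of $A$, since it is the composition of three permutations of $A$ (here $g^{-1}$ exists because $g$ permutes $A$). Then I would argue by a short induction on $k \ge 1$ that
\begin{equation*}
(g \circ f \circ g^{-1})^{(k)} = g \circ f^{(k)} \circ g^{-1}.
\end{equation*}
The base case $k=1$ is immediate. For the inductive step, one uses associativity of composition together with $g^{-1} \circ g = I$ to simplify $(g \circ f \circ g^{-1}) \circ (g \circ f^{(k)} \circ g^{-1})$ down to $g \circ f^{(k+1)} \circ g^{-1}$. Setting $k=n$ and using the hypothesis $f^{(n)} = I$ gives
\begin{equation*}
(g \circ f \circ g^{-1})^{(n)} = g \circ f^{(n)} \circ g^{-1} = g \circ I \circ g^{-1} = g \circ g^{-1} = I,
\end{equation*}
which shows that $g \circ f \circ g^{-1}$ is an $n$-cycle permutation.

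There is no real obstacle here: the entire argument is the standard conjugation identity from elementary group theory applied inside the symmetric group on $A$. The only subtle point worth being explicit about is that $g^{-1}$ exists and is a permutation, which is guaranteed by the hypothesis that $g$ permutes $A$. Alternatively, one could invoke Theorem \ref{criterion} with $\lambda = g$ (viewed as a bijection $A \to A$), since $g \circ f = (g \circ f \circ g^{-1}) \circ g$ shows $\lambda \circ f = \overline{g} \circ \lambda$ with $\overline{g} := g \circ f \circ g^{-1}$; then the fact that $f$ is $n$-cycle forces $\overline{g}$ to be $n$-cycle. I would present the direct telescoping proof for brevity.
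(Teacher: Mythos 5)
Your proof is correct and takes essentially the same approach as the paper, which also rests on the conjugation identity $(g \circ f \circ g^{-1})^{(n)} = g \circ f^{(n)} \circ g^{-1} = g \circ I \circ g^{-1} = I$; the paper simply asserts this identity in one line, whereas you justify it by a short induction on $k$. The alternative route you mention via Theorem \ref{criterion} with $\lambda = g$ is also valid but is not what the paper does.
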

\begin{proof}
   Since $ (g \circ f \circ g^{-1})^{(n)}= g \circ f^{(n)} \circ g^{-1}=g \circ I\circ g^{-1} =I $, $g \circ f \circ g^{-1}$ is also an $ n $-cycle permutation.
\end{proof}
The results above can be used to quickly generate large general $ n $-cycle permutations from known ones over the same finite sets.
However, in this paper we do not give examples for them, since they are simple to implement.


Now we introduce PPs of the form $ x^rh(x^s) $.
Throughout the rest of paper, let $q$ be a power of a prime, and $s, \ell$ be divisors of $q-1$ such that $s \ell=q-1 .$
All  nonzero elements of a finite set $ A $ is denoted by $ A^* $.
In this paper, we always assume that $h(x)\neq 0$ for any $x\in\mu_{\ell}$.
Otherwise, it is easy to obtain that $  f(x)=x^rh(x^s) $ can not permute $ \gf_{q} $.
It is well-known that every polynomial $f$ over $\mathbb{F}_{q}$ such that $f(0)=0$ has the form $x^rh(x^s) $ for some positive integers $r, s$ in the case $s | (q-1) $.

Based on \cite{niederreiter2005cyclotomic}, the concept of the index of a polynomial was first introduced in \cite{akbary2009permutation}.
Any nonconstant polynomial $f(x) \in \mathbb{F}_{q}[x]$ of degree $\leq q-1$ can be written uniquely as $f(x)=a\left(x^{r} h\left(x^{(q-1) / \ell}\right)\right)+b$ such that the degree of $h$ is less than the index $\ell$ which is defined in \cite{Wang2019index} in the following.
Namely, write
$$
f(x)=a\left(x^{d}+a_{d-i_{1}} x^{d-i_{1}}+\cdots+a_{d-i_{k}} x^{d-i_{k}}\right)+b,
$$
where $a, a_{d-i_{j}} \neq 0, i_{0}=0<i_{1}<\cdots<i_{k}<d, j=1, \ldots, k $.
The case $k=0$ is trivial and we have $\ell=1 $.
Thus we shall assume that $k \geq 1 $.
Write $d-i_{k}=r,$ the vanishing order of $x$ at 0 (i.e., the lowest degree of $x$ in $f(x)-b$ is $r$ ).
Then $f(x)=a\left(x^{r} h\left(x^{(q-1) / \ell}\right)\right)+b,$
where $h(x)=x^{e_{0}}+a_{d-i_{1}} x^{e_{1}}+\cdots+a_{d-i_{k-1}} x^{e_{k-1}}+a_{r}, s=\gcd\left(d-r, d-r-i_{1}, \ldots, d-r-i_{k-1}, q-1\right), d-r=e_{0} s, d-r-i_{j}=e_{j} s, 1 \leq j \leq k-1,$ and $\ell:=\frac{q-1}{s} $.
Hence in this case $\gcd\left(e_{0}, e_{1}, \ldots, e_{k-1}, \ell\right)=1 $.
The integer $\ell =\frac{q-1}{s}$ is called the \textit{index} of $ f(x)= x^r h(x^s) $.
One can see that the greatest common divisor condition in the definition of $s$ makes the index $\ell$ minimal among those possible choices\cite{Wang2019index}.
Note that the index of a polynomial is closely related to the concept of the least index of a cyclotomic mapping polynomial \cite{niederreiter2005cyclotomic}.
Let $\mu_{\ell}$ denote the set of $\ell$-th roots of unity in $\mathbb{F}_{q}^{*},$ i.e.
$$ \mu_{\ell}=\left\{x \in \mathbb{F}_{q}^{*} | x^{\ell}=1\right\}, $$
which is also the unique cyclic subgroup of $\mathbb{F}_{q}^{*}$ of order $\ell$.
This subgroup can also be represented as $\left\{x^{s} | x \in \mathbb{F}_{q}^{*}\right\}$.
It is easy to verify that
$$  \mathbb{F}_{q}^{*}=\bigcup_{i=1}^{\ell} S_{\alpha_{i}}, \quad S_{\alpha_{i}}=\left\{x \in \mathbb{F}_{q}^{*} | x^{s}=\alpha_{i}\right\}, $$
where $\alpha_{i} \in \mu_{\ell}$ for $1 \leq i \leq \ell $.
Recall that $f(x)=x^{r} h\left(x^{s}\right)$ is called an $r$-th order cyclotomic mapping \cite{wang2007cyclotomic,niederreiter2005cyclotomic}.
Specifically, it can be expressed in the form of piecewise map as follows:
\begin{equation}
	\label{fenyuan}
	f(x)=x^{r} h\left(x^{s}\right)=\left\{\begin{array}{lr}
		0, & x=0 \\
		h\left(\alpha_{i}\right) x^{r}, & x \in S_{\alpha_{i}}, 1 \leq i \leq \ell
	\end{array}\right.
\end{equation}
where $\alpha_{i} \in \mu_{\ell}, i=1,2, \ldots, \ell $.
For more information about the index approach and piecewise method, readers are refereed to \cite{Wang2019index,caoConstructingPermutationPolynomials2014} et al.
The following well-known lemma, the multiplicative form of the AGW Criterion, gives a necessary and sufficient condition for $x^{r} h\left(x^{s}\right)$ being a permutation over $\mathbb{F}_{q}$.
\begin{Lem}
	\label{mulagw}
	\cite[Theorem 2.3]{park2001permutation}
	\cite[Theorem 1]{wang2007cyclotomic}
	\cite[Lemma 2.1]{zieve2009some}
	\label{Cri}
	$f(x)=x^rh\left(x^s\right) \in\gf_q[x]$ permutes $\gf_q$ if and only if
	\begin{enumerate}[(1)]
		\item $\gcd\left(r,s\right)=1$ and
		\item $g(x)=x^rh(x)^s$ permutes $\mu_{\ell},$ where $\mu_{\ell}=\left\{x\in{\gf}_{q}^*  \ | \ x^{\ell}=1\right\}$.
	\end{enumerate}
\end{Lem}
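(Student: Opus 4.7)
The plan is to apply the AGW Criterion (Lemma \ref{LGWlemma}) to the square
\begin{equation*}
\xymatrix{
\gf_q^* \ar[rr]^{f}\ar[d]_{\lambda} & & \gf_q^* \ar[d]^{\lambda} \\
\mu_\ell \ar[rr]^{g} & & \mu_\ell,
}
\end{equation*}
with $\lambda: x \mapsto x^s$ and $g(x) = x^r h(x)^s$. Since $f(0)=0$, the bijectivity of $f$ on $\gf_q$ is equivalent to its bijectivity on $\gf_q^*$, so restricting to $\gf_q^*$ loses nothing. First I would check the three hypotheses needed to invoke the criterion: that $\lambda$ surjects onto $\mu_\ell$ (immediate because $\gf_q^*$ is cyclic of order $s\ell$, so the image of the $s$-th power map is exactly the subgroup of order $\ell$); that $g$ genuinely lands in $\mu_\ell$ (for $x \in \mu_\ell$ one has $x^{r\ell}=1$ and $h(x)^{s\ell}=h(x)^{q-1}=1$, using the standing assumption $h(x) \neq 0$ on $\mu_\ell$); and the commutativity $\lambda \circ f = g \circ \lambda$, which follows from
\[
\lambda(f(x)) = \bigl(x^r h(x^s)\bigr)^s = (x^s)^r h(x^s)^s = g(\lambda(x)).
\]

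With the AGW Criterion in force, $f$ permutes $\gf_q^*$ if and only if $g$ permutes $\mu_\ell$ \emph{and} $f$ is injective on every fiber $\lambda^{-1}(\alpha)$ for $\alpha \in \mu_\ell$. The first conjunct is exactly condition~(2). For the second, I would observe that on the fiber $\lambda^{-1}(\alpha)=\{x \in \gf_q^* : x^s=\alpha\}$ the polynomial collapses to $f(x)=h(\alpha)x^r$, so (since $h(\alpha) \neq 0$) its injectivity is equivalent to injectivity of $x \mapsto x^r$ on this fiber. Fixing any $x_0$ with $x_0^s=\alpha$, the fiber is the coset $x_0 \cdot \mu_s$, where $\mu_s$ is the group of $s$-th roots of unity in $\gf_q^*$; translating by $x_0$ reduces the question to whether the $r$-th power map is injective on the cyclic group $\mu_s$ of order $s$, which holds if and only if $\gcd(r,s)=1$. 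This gives condition~(1), completing the biconditional.

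The main obstacle is really only bookkeeping: confirming that $g$ is a well-defined self-map of $\mu_\ell$, and that the fiber analysis cleanly separates the $h$-factor (a nonzero scalar on each fiber) from the pure $x^r$ factor. Both rely essentially on the standing hypothesis $h(x) \neq 0$ for $x \in \mu_\ell$. Once these are in place, the proof is a direct application of the AGW Criterion combined with the elementary fact that $x \mapsto x^r$ permutes a cyclic group of order $s$ if and only if $\gcd(r,s)=1$.
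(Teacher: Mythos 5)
Your proof is correct and complete: specializing the AGW Criterion (Lemma \ref{LGWlemma}) to $A=\gf_q^*$, $S=\overline{S}=\mu_\ell$, $\lambda=\overline{\lambda}\colon x\mapsto x^s$, verifying the commuting square and that $g$ is a well-defined self-map of $\mu_\ell$ (both using the standing hypothesis $h(x)\neq 0$ on $\mu_\ell$), and then reducing fiber-injectivity on each coset $x_0\mu_s$ to the injectivity of $u\mapsto u^r$ on the cyclic group of order $s$, i.e.\ to $\gcd(r,s)=1$, is exactly the canonical derivation. The paper itself states this lemma without proof, citing it as a known result of Park--Lee, Wang, and Zieve, and explicitly calls it ``the multiplicative form of the AGW Criterion,'' so your argument is precisely the derivation that the paper's framing intends.
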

Lemma \ref{mulagw}, independently obtained by several authors earlier \cite{park2001permutation,wang2007cyclotomic,zieve2009some}, was used frequently  to study PPs of the form $x^rh(x^{(q-1)/\ell})$ over $\fq$, where $\ell \mid (q-1)$.

In the rest of the paper, we focus on the constructions of $ n $-cycle permutations of the form $ x^rh(x^s) $, 
which will be helpful for considering general $ n $-cycle permutations of other forms.

\section{General Results for $ n $-Cycle Permutations over Finite Fields}
\label{gen}

In this section, we give a general approach to $ n $-cycle permutations of the form $ x^rh(x^s) $.

First, we give a concrete criterion, which is a generalization of \cite[Theorem 3]{zheng2019constructions} and \cite[Proposition 2.6]{niu2019new}.

\begin{Th}
	\label{mulcoren}
	Let $q$ be a prime power and $f(x)= x^rh\left( x^s\right)  \in \gf_q[x]$, where $s \mid (q-1),\gcd(r,s)=1$.
	Assume that $g(x)=x^rh(x)^s$ is a polynomial on $\mu_\ell=\left\{    x\in{\gf}_{q}^*  \    |  \   x^\ell=1   \right\}$, where $  \ell = {(q-1)/s} $.
	Then, $f$ is an $n$-cycle permutation over $\mathbb{F}_{q}$ if and only if
	\begin{enumerate}[(1)]
		\item $r^{n} \equiv 1 \bmod s$ and
		\item $\varphi(y)=y^{(r^n-1)/s}  \prod\nolimits_{{i = 0}}^{n-1} {   h\left(g^{(i)}(y) \right)^{r^{n-i-1}}      }          =1 $ for all $y \in \mu_{\ell}$.
	\end{enumerate}
\end{Th}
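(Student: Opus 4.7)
The plan is to derive an explicit closed form for $f^{(n)}$ and then match it against the identity map. The decisive observation is that the map $\lambda: \gf_q^* \to \mu_\ell$ defined by $\lambda(x) = x^s$ intertwines $f$ with $g$; indeed
$$ \lambda(f(x)) = \bigl(x^r h(x^s)\bigr)^s = (x^s)^r h(x^s)^s = g(x^s) = g(\lambda(x)), $$
so iterating gives $\lambda(f^{(k)}(x)) = g^{(k)}(x^s)$ for every $k \ge 0$.

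Using this intertwining, I would prove by induction on $k$ that
$$ f^{(k)}(x) = x^{r^k} \prod_{i=0}^{k-1} h\bigl(g^{(i)}(x^s)\bigr)^{r^{k-i-1}}. $$
The inductive step is a direct computation: applying $f$ to $f^{(k)}(x)$ raises each existing $h$-factor's exponent from $r^{k-i-1}$ to $r^{k-i}$, while the outer $h$ contributes a new factor $h(g^{(k)}(x^s))$ with exponent $r^0$.

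For the forward direction, setting $k = n$ and demanding $f^{(n)}(x) = x$ on $\gf_q^*$ yields
$$ x^{r^n-1} \prod_{i=0}^{n-1} h\bigl(g^{(i)}(x^s)\bigr)^{r^{n-i-1}} = 1. $$
Since the product depends only on $y := x^s$, the factor $x^{r^n-1}$ must be constant on each fiber $\{x : x^s = y\}$, which is a coset of $\mu_s$; this forces $\zeta^{r^n-1} = 1$ for every $\zeta \in \mu_s$, equivalently $s \mid (r^n - 1)$, which is condition (1). Once condition (1) holds, $x^{r^n-1} = y^{(r^n-1)/s}$, and the displayed identity becomes $\varphi(y) = 1$ for all $y \in \mu_\ell$, i.e., condition (2). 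Conversely, conditions (1) and (2) substituted back into the closed form immediately give $f^{(n)}(x) = x$ on $\gf_q^*$; together with $f^{(n)}(0) = 0$ this shows $f^{(n)} = I$, so $f$ admits $f^{(n-1)}$ as a two-sided inverse and is therefore an $n$-cycle permutation.

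The main obstacle is the bookkeeping in the inductive formula for $f^{(k)}$, in particular keeping track of the exponents $r^{k-i-1}$ attached to the nested factors $h(g^{(i)}(x^s))$. Once that closed form is established, the rest of the argument is standard: a fiber analysis to isolate condition (1), and a direct substitution of $y=x^s$ to recover condition (2).
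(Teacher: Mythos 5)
Your proposal is correct and follows essentially the same route as the paper's own proof: the paper likewise derives the closed form $f^{(n)}(x)=x^{r^n}\prod_{i=0}^{n-1} h\left(g^{(i)}(x^s)\right)^{r^{n-i-1}}$ (by unrolling the composition directly, where your induction via the intertwining $\lambda\circ f=g\circ\lambda$ is just a cleaner formalization of the same expansion), and it proves necessity of condition (1) by exactly your fiber argument, comparing two preimages $x,x_0$ of the same $y\in\mu_{\ell}$ to force $(x/x_0)^{r^n-1}=1$ on $\mu_s$. No gaps; the only cosmetic difference is that you state the inductive formula for general $k$ and explicitly note $f^{(n)}=I$ implies bijectivity, both of which the paper leaves implicit.
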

\begin{proof}
	First of all, $ f(0)=0 $. Then we only need to consider the non-zero case below.
	
	Assume that $r^{n} \equiv 1 \bmod s$ and $\varphi(y)=  y^{(r^n-1)/s}  \prod\nolimits_{{i = 0}}^{n-1} {   h\left(g^{(i)}(y) \right)^{r^{n-i-1}}      }        =1  $ for all $y \in \mu_{\ell}$.
	For any $x \in \gf_q^*$, one can find a $ y=x^s \in \mu_{\ell} $ such that $ y^{(r^n-1)/s}=x^{r^n-1} $.
	Then we have
	\begin{equation*}
		\begin{aligned}
			f^{(n)}(x)=&   	(f^{(n-1)}(x) )^rh\left(  (	f^{(n-1)}(x) )^s\right)   \\
			=&   	\left((f^{(n-2)}(x))^rh((f^{(n-2)}(x) )^s) \right)^rh\left( g (	f^{(n-2)}(x) )^s\right)   \\
			=&  (f^{(n-2)}(x))^{r^2} \prod\nolimits_{{i = 0}}^{1} {   h\left(g^{(i)}((f^{(n-2)}(x))^s)\right)^{r^{1-i}}      }    \\
			=&  \  ... \\
			=&f(x)^{r^{n-1}}  \prod\nolimits_{{i = 0}}^{n-2} {   h\left(g^{(i)}(f(x)^s)\right)^{r^{n-i-2}}      }  \\
			=& (x^{r} h(x^s))^{r^{n-1}}     \prod\nolimits_{{i = 1}}^{n-1} {   h\left(g^{(i-1)}(f(x)^s)\right)^{r^{n-i-1}}      }    \\
			=& x^{r^n} h(x^s)^{r^{n-1}}      \prod\nolimits_{{i = 1}}^{n-1} {   h\left(g^{(i)}(x^s)\right)^{r^{n-i-1}}      }    \\
			=& x^{r^n}  \prod\nolimits_{{i = 0}}^{n-1} {   h\left(g^{(i)}(x^s)\right)^{r^{n-i-1}}      }    \\
			=& x , \\
		\end{aligned}
	\end{equation*}
	which indicates that $f(x)=x^rh(x^s)$ is an $ n $-cycle permutation over $\mathbb{F}_{q}$.
	
	Assume that $ f $ is an $ n $-cycle permutation over $\mathbb{F}_{q}$.
	For each $ y \in \mu_{\ell} $ and any $ x \in \gf_q^* $ such that $ x^s=y $,  we have
	$$x^{r^n-1}  \prod\nolimits_{{i = 0}}^{n-1} {   h\left(g^{(i)}(y)\right)^{r^{n-i-1}}      }  =1  . $$
	Suppose that there is an $ x_0 \in \gf_q^* $ such that $ x_0^s=y $, then we have
	$$x_0^{r^n-1}  \prod\nolimits_{{i = 0}}^{n-1} {   h\left(g^{(i)}(y)\right)^{r^{n-i-1}}      }  =1  . $$
	Hence, $ {\left(\frac{x}{x_0}\right)}^{r^n-1} =1 $ holds for any $ x \in \{ x \in \gf_q^*  \  | \ x^s=y  \} $, which implies that $r^{n} \equiv 1 \bmod s$.
	Therefore,
	\begin{equation*}
		\begin{aligned}
			\varphi(y) =&y^{(r^n-1)/s}  \prod\nolimits_{{i = 0}}^{n-1} {   h\left(g^{(i)}(y) \right)^{r^{n-i-1}}      }      \\
		    =& 	x_0^{r^n} / x_0 \prod\nolimits_{{i = 0}}^{n-1} {   h\left(g^{(i)}(x_0^s)\right)^{r^{n-i-1}}      }    \\
			=& f^{(n)}(x_0) / x_0 \\
			=& 1 . \\
		\end{aligned}
	\end{equation*}
	The last equation holds since $ f $ is an $ n $-cycle permutation over $\mathbb{F}_{q}$.
\end{proof}

The following result is a direct consequence of the above for $ n=3 $, which was given in \cite{wuCharacterizationsConstructionsTriplecycle2020a} very recently.
\begin{Cor}
	\label{mulcore}
	\cite[Theorem 1]{wuCharacterizationsConstructionsTriplecycle2020a}
	Let $q$ be a prime power and $f(x)= x^rh\left( x^s\right)  \in \gf_q[x]$, where $s \mid (q-1),\gcd(r,s)=1$.
	Assume that $g(x)=x^rh(x)^s$ is a polynomial on $\mu_\ell=\left\{    x\in{\gf}_{q}^*  \    |  \   x^\ell=1   \right\}$, where $  \ell = {(q-1)/s} $.
	Then, $f$ is a triple-cycle permutation over $\mathbb{F}_{q}$ if and only if
	\begin{enumerate}[(1)]
		\item $r^{3} \equiv 1 \bmod s$ and
		\item $\varphi(y)=y^{(r^3-1)/s}h(y)^{r^2}h\left( g\left( y\right) \right)  ^rh\left( g\left( g(y) \right) \right)  =1 $ for all $y \in \mu_{\ell}$.
	\end{enumerate}
\end{Cor}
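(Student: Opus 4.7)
The plan is to derive this corollary as an immediate specialization of Theorem~\ref{mulcoren} with $n=3$. Since the theorem already supplies necessary and sufficient conditions for $f(x)=x^r h(x^s)$ to be an $n$-cycle permutation over $\mathbb{F}_q$ under the standing hypotheses $s\mid (q-1)$, $\gcd(r,s)=1$, and $g(x)=x^r h(x)^s$ being a self-map of $\mu_\ell$, nothing genuinely new needs to be proved: I only need to check that the two conditions specialize correctly when $n$ is replaced by $3$.

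Condition (1) is immediate: setting $n=3$ in $r^n\equiv 1\pmod{s}$ gives $r^3\equiv 1\pmod{s}$, which is the first condition of the corollary. For condition (2), I would expand the general product
\[
\prod_{i=0}^{n-1} h\!\left(g^{(i)}(y)\right)^{r^{n-i-1}}
\]
at $n=3$. The values $i=0,1,2$ contribute the factors $h(y)^{r^{2}}$, $h(g(y))^{r}$, and $h(g(g(y)))^{r^{0}} = h(g(g(y)))$, respectively. Prefixing $y^{(r^{3}-1)/s}$ then yields exactly
\[
\varphi(y) = y^{(r^{3}-1)/s}\, h(y)^{r^{2}}\, h(g(y))^{r}\, h(g(g(y))),
\]
matching the expression in condition (2) of the corollary.

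There is no substantive obstacle here; the work consists purely in careful bookkeeping of the exponents $r^{n-i-1}$. The only point worth flagging is that the $i=n-1$ term carries the trivial exponent $r^{0}=1$, which is why the last factor $h(g(g(y)))$ appears unadorned in the corollary's statement. With these two verifications, the corollary follows directly from Theorem~\ref{mulcoren}.
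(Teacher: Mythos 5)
Your proposal is correct and takes exactly the same route as the paper: the paper presents Corollary~\ref{mulcore} as an immediate consequence of Theorem~\ref{mulcoren} at $n=3$, with no further argument beyond the specialization you carry out. Your expansion of the product $\prod_{i=0}^{2} h\bigl(g^{(i)}(y)\bigr)^{r^{2-i}}$ into $h(y)^{r^2}\,h(g(y))^{r}\,h(g(g(y)))$ is precisely the bookkeeping the paper leaves implicit.
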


Theorem \ref{mulcoren} provides a useful tool for both determining and constructing $ n $-cycle permutations of the form $ x^rh(x^s) $.
In order to construct $ n $-cycle permutations more efficiently, we further look for more tools in different forms for $ n $-cycle permutations.
A consequence of Theorem \ref{criterion} is the following.
It is a necessary condition for triple-cycle permutations.
\begin{Cor}
	\label{criterionmul}
		\cite[Corollary 1]{wuCharacterizationsConstructionsTriplecycle2020a}
	Assume that $f(x)=x^rh(x^s)$ is a triple-cycle permutation over $\mathbb{F}_{q}$.
Then $g(x)=x^{r} h(x)^{s}$ is a triple-cycle permutation on $\mu_{\ell}$.
\end{Cor}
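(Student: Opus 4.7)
The plan is to obtain this corollary as an immediate instance of Theorem \ref{criterion} (the $n$-cycle AGW criterion) specialized to $n=3$. I would take $A=\mathbb{F}_q^{*}$, $S=\mu_\ell$, the triple-cycle permutation $f(x)=x^r h(x^s)$ on $A$ (note $f(0)=0$, so $f$ restricts to a permutation of $\mathbb{F}_q^{*}$), and the projection $\lambda:\mathbb{F}_q^{*}\to\mu_\ell$ given by $\lambda(x)=x^s$, together with $g(x)=x^r h(x)^s$ acting on $\mu_\ell$. Once the hypotheses of Theorem \ref{criterion} are checked, the conclusion that $g$ is a $3$-cycle permutation on $\mu_\ell$ is immediate.

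The three verifications are all routine. First, $\lambda$ is surjective because raising to the $s$-th power is the standard surjection of $\mathbb{F}_q^{*}$ onto its unique subgroup of order $\ell=(q-1)/s$. Second, $g$ genuinely maps $\mu_\ell$ into itself: for $y\in\mu_\ell$, using $s\ell=q-1$ and $h(y)\neq 0$ (our standing assumption throughout the excerpt),
\begin{equation*}
g(y)^{\ell}=y^{r\ell}\,h(y)^{s\ell}=(y^{\ell})^{r}\,h(y)^{q-1}=1.
\end{equation*}
Third, the intertwining identity $\lambda\circ f=g\circ\lambda$ holds by a direct computation:
\begin{equation*}
\lambda(f(x))=\bigl(x^{r}h(x^{s})\bigr)^{s}=x^{rs}\,h(x^{s})^{s}=(x^{s})^{r}\,h(x^{s})^{s}=g(x^{s})=g(\lambda(x)).
\end{equation*}

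With these in hand, Theorem \ref{criterion} applied with $n=3$ yields exactly the statement that $g$ is a triple-cycle permutation on $\mu_\ell$. There is no real obstacle: the only subtlety worth flagging is that Theorem \ref{criterion} is stated for a self-map $g:S\to S$, which is why one must first verify $g(\mu_\ell)\subseteq\mu_\ell$ before quoting the theorem; the nonvanishing of $h$ on $\mu_\ell$ is precisely what makes this clean.
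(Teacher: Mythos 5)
Your proof is correct and follows essentially the same route as the paper: the paper obtains this corollary as a direct consequence of Theorem \ref{criterion}, using exactly the data you chose ($\lambda(x)=x^{s}$, $g(x)=x^{r}h(x)^{s}$, $S=\mu_{\ell}$) with $n=3$. The only difference is that you spell out the routine verifications (surjectivity of $\lambda$, the containment $g(\mu_{\ell})\subseteq\mu_{\ell}$ via the standing assumption $h(y)\neq 0$, and the identity $\lambda\circ f=g\circ\lambda$) that the paper leaves implicit.
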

Thus, to obtain $ n $-cycle permutations of the form $f(x)=x^{r} h\left(x^{s}\right)$ over $\mathbb{F}_{q}$, it is crucial to find a suitable $h(x)$.
According to Theorem \ref{criterion}, $ g(x)=x^{r} h(x)^{s}$ is necessary to be an $ n $-cycle permutation on the subgroup $\mu_{\ell}$ of $\mathbb{F}_{q}^{*}$ for $f(x)$ being an $ n $-cycle permutation on $\mathbb{F}_{q} .$

Below we represent approaches to obtain many new $ n $-cycle permutations reductively over finite fields from the known ones on their subfields.
They are obtained by assuming that $\mu_{\ell}$ is a subfield of $\mathbb{F}_{q}$.
\begin{Th}
	\label{v=1}
	Let $q$ be a prime power.
	Assume that $h(x)\in \gf_{q^n}[x]$ such that $h(y)^{q-1}=y^{1-q}$ holds for any $y \in \mu_\ell=\left\{    x\in{\gf}_{q^n}^*  \    |  \   x^\ell=1   \right\}$ , where $  \ell = {(q^n-1)/(q-1)} $.
	Then $f(x)= x^qh(x^{q-1}) $ is an $ n $-cycle permutation over $\mathbb{F}_{q^n}$ if and only if $h\left(\mu_{\ell}\right) \subset \mu_{\ell}$.
\end{Th}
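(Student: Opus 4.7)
My plan is to reduce the claim to a direct application of Theorem \ref{mulcoren} with $r = q$ and $s = q-1$, so that $\ell = (q^n-1)/(q-1)$. Before invoking the criterion I would check the background hypotheses: $\gcd(q, q-1) = 1$, and $s = q-1$ does divide $q^n - 1$, so $\mu_\ell$ is the cyclic subgroup of $\mathbb{F}_{q^n}^*$ of order $\ell$ and Theorem \ref{mulcoren} applies as soon as $g(\mu_\ell) \subset \mu_\ell$.

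Next I would verify the first criterion, $r^n \equiv 1 \pmod{s}$, which reduces to $q^n \equiv 1 \pmod{q-1}$; this is immediate from $q \equiv 1 \pmod{q-1}$. The main computation is to simplify the expression $\varphi(y)$ appearing in the second criterion. The key observation is that the hypothesis $h(y)^{q-1} = y^{1-q}$ forces
$$g(y) \;=\; y^q\, h(y)^{q-1} \;=\; y^q \cdot y^{1-q} \;=\; y$$
for every $y \in \mu_\ell$. Thus $g$ restricts to the identity on $\mu_\ell$ (which in particular handles the sanity check $g(\mu_\ell) \subset \mu_\ell$), and consequently $g^{(i)}(y) = y$ for all $i \geq 0$. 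This collapses every iterate appearing in the product in $\varphi$.

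After the collapse, both the leading exponent $(r^n-1)/s$ and the telescoping sum $\sum_{i=0}^{n-1} r^{n-i-1} = 1 + q + \cdots + q^{n-1}$ equal $\ell$, so I would conclude $\varphi(y) = y^\ell\, h(y)^\ell = h(y)^\ell$ using $y^\ell = 1$ on $\mu_\ell$. From here both directions of the equivalence are transparent: the condition $\varphi(y) = 1$ for all $y \in \mu_\ell$ is exactly $h(y)^\ell = 1$, i.e.\ $h(y) \in \mu_\ell$, which is $h(\mu_\ell) \subset \mu_\ell$. Theorem \ref{mulcoren} then delivers the desired characterization. There is no real obstacle in this proof; the subtle point worth highlighting is that the apparently technical hypothesis $h(y)^{q-1} = y^{1-q}$ is calibrated precisely to make $g$ the identity on $\mu_\ell$, turning an $n$-fold iteration into a single-step condition.
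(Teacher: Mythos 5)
Your proof is correct and follows essentially the same route as the paper's: both set $r=q$, $s=q-1$ in Theorem \ref{mulcoren}, observe that the hypothesis $h(y)^{q-1}=y^{1-q}$ forces $g(y)=y$ on $\mu_\ell$, and then collapse $\varphi(y)$ to $h(y)^{\ell}$ using $y^{\ell}=1$ and $\sum_{i=0}^{n-1}q^{i}=\ell$, so that the condition $\varphi\equiv 1$ is exactly $h(\mu_\ell)\subset\mu_\ell$. The only difference is that you spell out the routine checks ($\gcd(q,q-1)=1$ and $q^n\equiv 1\pmod{q-1}$) that the paper leaves implicit.
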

\begin{proof}
		Let $r=q,s=q-1$.
	Then we have	$g(y)=y$ by
    plugging all the conditions given into Theorem \ref{mulcoren}.
	Hence, $f(x)= x^qh(x^{q-1}) $ is an $ n $-cycle permutation over $\mathbb{F}_{q^n}$  if and only if for any $y \in \mu_\ell$,
	\begin{equation}
		\label{pd1}
		y^{(q^n-1)/{(q-1)}}  \prod\nolimits_{{i = 0}}^{n-1} {   h\left( y  \right)^{q^{n-i-1}}      }     =1 .
	\end{equation}
	 Eq. (\ref{pd1}) holds if and only if $h\left(\mu_{\ell}\right) \subset \mu_{\ell}$, due to the fact that $ \ell= \sum\nolimits_{{i = 0}}^{n-1} {  q^{i}      }  $ .
	Thus $f $ is an $ n $-cycle permutation over $\mathbb{F}_{q^n}$ if and only if $h\left(\mu_{\ell}\right) \subset \mu_{\ell}$.
\end{proof}

\begin{Th}
	\label{digui}
	Let $q$ be a prime power and $f(x)= x^rh\left( x^s\right)  \in \gf_q[x]$, where $s \mid (q-1),\gcd(r,s)=1$ , $r^{n} \equiv 1 \bmod s$, $ \gcd(s,\ell)=1 $ and $\ell=\frac{q-1}{s} $ .
	Let $h(x) \in \mathbb{F}_{q}[x]$ satisfy $h\left(\mu_{\ell}\right) \subset \mu_{\ell} $.
	Then $f(x)=x^{r} h\left(x^{s}\right)$ is an $ n $-cycle permutation on $\mathbb{F}_{q}$ if and only if $g(x)=x^{r} h(x)^{s}$ is an $ n $-cycle permutation on $\mu_{\ell}$.
\end{Th}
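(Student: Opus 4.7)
The plan is to prove the two implications separately, with both resting on Theorem~\ref{mulcoren}. For the forward direction I would invoke Theorem~\ref{criterion}: take $A = \mathbb{F}_q^*$, $S = \mu_\ell$, and $\lambda(x) = x^s$, which is surjective since $s\ell = q-1$. The hypothesis $h(\mu_\ell) \subset \mu_\ell$, combined with $\mu_\ell$ being a group, already ensures $g(\mu_\ell) \subset \mu_\ell$, so $g:\mu_\ell \to \mu_\ell$ is well defined, and a direct computation gives $\lambda(f(x)) = (x^s)^r h(x^s)^s = g(\lambda(x))$. Because $f(0) = 0$, $f$ is an $n$-cycle permutation of $\mathbb{F}_q$ iff it is one of $\mathbb{F}_q^*$, and Theorem~\ref{criterion} then immediately delivers that $g$ is an $n$-cycle permutation of $\mu_\ell$.

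For the backward direction I plan to verify the two clauses of Theorem~\ref{mulcoren}. Clause~(1), $r^n \equiv 1 \bmod s$, is among the hypotheses, so only clause~(2), namely
\[
\varphi(y) = y^{(r^n-1)/s} \prod_{i=0}^{n-1} h\bigl(g^{(i)}(y)\bigr)^{r^{n-i-1}} = 1 \quad \text{for all } y \in \mu_\ell,
\]
requires work. My first step is to derive by a short induction on $k$ the closed form
\[
g^{(k)}(y) = y^{r^k} \prod_{i=0}^{k-1} h\bigl(g^{(i)}(y)\bigr)^{s r^{k-1-i}},
\]
the inductive step being direct from $g(z) = z^r h(z)^s$. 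Taking $k = n$ and writing $P(y) := \prod_{i=0}^{n-1} h(g^{(i)}(y))^{r^{n-1-i}}$, the hypothesis $g^{(n)}(y) = y$ rewrites as $y^{r^n-1} P(y)^s = 1$.

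The decisive step is then arithmetic. Set $u := \varphi(y) = y^{(r^n-1)/s} P(y)$. Since $h(\mu_\ell) \subset \mu_\ell$ and $g$ stabilises $\mu_\ell$, every factor of $P(y)$ lies in $\mu_\ell$, hence $u \in \mu_\ell$. On the other hand $u^s = y^{r^n-1} P(y)^s = 1$. Therefore the order of $u$ divides $\gcd(s,\ell) = 1$, which forces $u = 1$, establishing clause~(2) and, via Theorem~\ref{mulcoren}, that $f$ is an $n$-cycle permutation on $\mathbb{F}_q$.

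I expect the main (if relatively minor) obstacle to be the bookkeeping in the inductive formula for $g^{(k)}(y)$, together with a careful accounting that every relevant factor genuinely lands in $\mu_\ell$; once that is in place the whole argument collapses onto the single arithmetic fact that $\gcd(s,\ell) = 1$ turns $u^s = 1$ into $u = 1$ inside $\mu_\ell$, which is precisely where this hypothesis earns its place in the statement.
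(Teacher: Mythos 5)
Your proposal is correct and follows essentially the same route as the paper: the forward direction is the paper's one-line appeal to Theorem~\ref{criterion} (with the surjection $\lambda(x)=x^s$), and your inductive closed form for $g^{(k)}(y)$ is just the paper's telescoping computation of $\varphi(y)^s = y^{-1}g^{(n)}(y)=1$ run in the opposite direction, followed by the identical $\gcd(s,\ell)=1$ argument. If anything, you are slightly more careful than the paper in noting explicitly that $\varphi(y)\in\mu_\ell$ (via $h(\mu_\ell)\subset\mu_\ell$) is what allows $\varphi(y)^s=1$ to force $\varphi(y)=1$.
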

\begin{proof}
	Assume that $g(x)$ is an $ n $-cycle permutation on $\mu_{\ell}$.
	Then for any $ y \in \mu_{\ell} $, we have
	\begin{equation*}
		\begin{aligned}
			\varphi(y)^s&=y^{r^n-1}  \prod\nolimits_{{i = 0}}^{n-1} {   h\left(g^{(i)}(y) \right)^{sr^{n-i-1}}      }   \\
			&=y^{-1}  (  g(y)   )^{r^{n-1}}  \prod\nolimits_{{i = 1}}^{n-1} {   h\left(g^{(i)}(y) \right)^{sr^{n-i-1}}      }   \\
			&=y^{-1}  (  g^{(2)}(y)   )^{r^{n-2}}  \prod\nolimits_{{i = 2}}^{n-1} {   h\left(g^{(i)}(y) \right)^{sr^{n-i-1}}      }   \\
			&= \  ... \\
			&=y^{-1}    (  g^{(n-2)}(y)   )^{r^2}  h\left( g^{(n-2)}(y)  \right)^{sr}    h\left( g^{(n-1)}(y)  \right)^s   \\
			&=y^{-1}    (  g^{(n-1)}(y)   )^r    h\left( g^{(n-1)}(y)  \right)^s   \\
			&=y^{-1}   g^{(n)}( y  )  \\
			&=1.  \\
		\end{aligned}
	\end{equation*}
	Since $ \gcd(s,\ell)=1 $, one can obtain $\varphi(y)=1$.
	Thus $f$ is an $ n $-cycle permutation on $\mathbb{F}_{q}$ according to Theorem \ref{mulcoren}.
	Conversely, $g(x)=x^{r} h(x)^{s}$ is an $ n $-cycle permutation on $\mu_{\ell}$ if $f$ is an $ n $-cycle permutation on $ \gf_{q} $, according to Theorem \ref{criterion}.
\end{proof}

If we choose a special $s$ in Theorem \ref{digui} such that $\mu_{\ell}$ is exactly a multiplicative group of a finite field, then the following corollary is obtained.
\begin{Cor}
	\label{diguiusen}
	Let $q$ be a prime power and $m, r$ be positive integers with $\gcd(q-1, m)=1$ and $r^{n} \equiv 1 \bmod \frac{q^m-1}{q-1} $.
	Let $h(x) \in \mathbb{F}_{q}[x] .$
	Then $f(x)=x^r h\left(x^{\frac{q^{m}-1}{q-1}}\right)$ is an $ n $-cycle permutation on
	$\mathbb{F}_{q^{m}}$ if and only if $g(x)=x^{r} h(x)^{m}$ is an $ n $-cycle permutation on $\mathbb{F}_{q}$.
\end{Cor}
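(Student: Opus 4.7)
The plan is to deduce this corollary by specializing Theorem~\ref{digui} over the field $\mathbb{F}_{q^m}$ with the choice $s=\frac{q^m-1}{q-1}$, and then rewriting the exponent $s$ as $m$ by Fermat's little theorem on $\mathbb{F}_q$.

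First, I would verify that the hypotheses of Theorem~\ref{digui} are met. With this $s$, the index is $\ell=\frac{q^m-1}{s}=q-1$, so $\mu_\ell=\mu_{q-1}$ coincides with $\mathbb{F}_q^*$ viewed as a subgroup of $\mathbb{F}_{q^m}^*$. The required condition $\gcd(s,\ell)=1$ becomes $\gcd\!\left(\frac{q^m-1}{q-1},\,q-1\right)=1$; using the identity $\frac{q^m-1}{q-1}=1+q+\cdots+q^{m-1}\equiv m\pmod{q-1}$, this reduces to $\gcd(m,q-1)=1$, which is assumed. The assumption $r^n\equiv 1\pmod s$ already forces $\gcd(r,s)=1$. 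Finally, since $h(x)\in\mathbb{F}_q[x]$ we have $h(\mathbb{F}_q^*)\subseteq\mathbb{F}_q$, and combined with the standing assumption that $h$ does not vanish on $\mu_\ell$, this gives $h(\mu_\ell)\subseteq\mu_\ell$.

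Next, I would reconcile the two versions of the auxiliary map. Theorem~\ref{digui} asserts that $f$ is an $n$-cycle permutation of $\mathbb{F}_{q^m}$ if and only if $\tilde g(x):=x^rh(x)^s$ is an $n$-cycle permutation of $\mu_\ell=\mathbb{F}_q^*$, whereas the corollary is phrased in terms of $g(x)=x^rh(x)^m$ on all of $\mathbb{F}_q$. For any $x\in\mathbb{F}_q^*$, Fermat's little theorem gives $h(x)^{q-1}=1$, so $h(x)^s=h(x)^{\,s\bmod(q-1)}=h(x)^m$; hence $\tilde g$ and $g$ agree on $\mathbb{F}_q^*$. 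Because both maps send $0$ to $0$, $\tilde g$ is an $n$-cycle permutation of $\mathbb{F}_q^*$ if and only if $g$ is an $n$-cycle permutation of $\mathbb{F}_q$. Chaining the equivalences closes the argument.

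I do not anticipate any real obstacle. The only bookkeeping point worth highlighting is the reduction $s\equiv m\pmod{q-1}$, which is simultaneously what collapses the exponent from $s$ to $m$ and what turns the coprimality hypothesis $\gcd(s,\ell)=1$ of the parent theorem into the clean hypothesis $\gcd(m,q-1)=1$ of the corollary.
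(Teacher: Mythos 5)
Your proposal is correct and follows essentially the same route as the paper: both specialize Theorem~\ref{digui} with $s=\frac{q^m-1}{q-1}$ over $\mathbb{F}_{q^m}$ (so that $\mu_\ell=\mathbb{F}_q^*$), and both use the reduction $s\equiv m\pmod{q-1}$ to obtain $\gcd(s,\ell)=\gcd(m,q-1)=1$ and to collapse $h(x)^s$ to $h(x)^m$ on $\mathbb{F}_q^*$. Your write-up is in fact slightly more careful than the paper's, since you explicitly handle the fixed point $0$ and note that $\gcd(r,s)=1$ follows from $r^n\equiv 1\pmod{s}$.
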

\begin{proof}
	Let $ s= \frac{q^{m}-1}{q-1}$ in Theorem \ref{digui}.
	Since $s=m+\sum_{i=1}^{m-1}\left(q^{i}-1\right) \equiv	m \bmod (q-1)$ and $h(x) \in \mathbb{F}_{q}[x]$,  one can obtain that $ \gcd(q-1, s)=\gcd(q-1, m)=1  $  and  $ g(x)=x^{r} h(x)^{\frac{q^{m}-1}{q-1}}=x^{r} h(x)^{m} $   for any $ x \in \gf_q $.
	According to Theorem \ref{digui}, $f$ is an $ n $-cycle permutation on
	$\mathbb{F}_{q^{m}}$ if and only if $x^{r} h(x)^{m}$ is an $ n $-cycle permutation on $\mathbb{F}_{q}$.
\end{proof}
A consequence of Corollary \ref{diguiusen} for $ n=3 $ is the following.
\begin{Cor}
	\label{diguiuse}
		\cite[Corollary 4]{wuCharacterizationsConstructionsTriplecycle2020a}
	Let $q$ be a prime power and $m, r$ be positive integers with $\gcd(q-1, m)=1$ and $r^{3} \equiv 1 \bmod \frac{q^m-1}{q-1} $.
	Let $h(x) \in \mathbb{F}_{q}[x] .$
	Then $f(x)=x^r h\left(x^{\frac{q^{m}-1}{q-1}}\right)$ is a triple-cycle permutation on
	$\mathbb{F}_{q^{m}}$ if and only if $g(x)=x^{r} h(x)^{m}$ is a triple-cycle permutation on $\mathbb{F}_{q}$.
\end{Cor}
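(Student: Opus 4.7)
The plan is direct: the statement is exactly the $n=3$ instance of Corollary \ref{diguiusen}, so the proof reduces to a hypothesis check and a one-line appeal. First I would verify that under the substitution $n=3$ the hypotheses of Corollary \ref{diguiusen} match the ones in the present corollary verbatim: the condition $r^n \equiv 1 \pmod{(q^m-1)/(q-1)}$ becomes $r^3 \equiv 1 \pmod{(q^m-1)/(q-1)}$, while $\gcd(q-1,m)=1$ and $h(x) \in \mathbb{F}_q[x]$ are unchanged. Since triple-cycle permutations are by definition $3$-cycle permutations, the equivalence in Corollary \ref{diguiusen} at $n=3$ gives exactly the asserted equivalence between $f$ being a triple-cycle permutation on $\mathbb{F}_{q^m}$ and $g(x)=x^r h(x)^m$ being a triple-cycle permutation on $\mathbb{F}_q$.

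If I wanted to bypass Corollary \ref{diguiusen} and argue from scratch, I would set $s=(q^m-1)/(q-1)$ and $\ell = q-1$, note that $\mu_\ell = \mathbb{F}_q^*$, and use $s \equiv m \pmod{q-1}$ together with $\gcd(m,q-1)=1$ to get both $\gcd(s,\ell)=1$ and $h(x)^s = h(x)^m$ for $x \in \mathbb{F}_q$; Theorem \ref{digui} specialized to $n=3$ then closes the argument. In either route there is no real obstacle, and the only thing worth being careful about is ensuring that the map $g$ used in Theorem \ref{digui}, namely $x \mapsto x^r h(x)^s$, really coincides on $\mathbb{F}_q$ with the map $g(x)=x^r h(x)^m$ appearing in the statement — which is precisely the point where the congruence $s \equiv m \pmod{q-1}$ is used.
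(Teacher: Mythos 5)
Your proposal is correct and takes essentially the same route as the paper: the paper states Corollary \ref{diguiuse} precisely as the $n=3$ specialization of Corollary \ref{diguiusen}, with no separate proof. Your backup ``from scratch'' argument is also just the paper's proof of Corollary \ref{diguiusen} (via Theorem \ref{digui}, using $s=\frac{q^m-1}{q-1}\equiv m \pmod{q-1}$ and $\gcd(m,q-1)=1$ to get $\gcd(s,\ell)=1$ and $h(x)^s=h(x)^m$ on $\mathbb{F}_q$) restricted to $n=3$, so both routes coincide with the paper's.
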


Corollary \ref{diguiusen} allows us to construct new explicit $ n $-cycle permutations over finite fields from the known ones on their subfields.
Examples will be given in the next section (see Corollarys \ref{diguiuse1} and \ref{diguiuse2} ).

In the perspective of cyclotomic, we propose a general method to determine the coefficients of $h(x)$ from a given $ n $-cycle permutation $g(x)$ over $\mu_{\ell}$, which can be seen as a generalization of \cite[Theorem 6]{zheng2019constructions} and \cite[Theorem 3]{wuCharacterizationsConstructionsTriplecycle2020a}.
\begin{Th}
	\label{piecewisegenerel}
Let $\beta$ be a primitive element of $\mathbb{F}_{q}$ and $\omega=\beta^{s}$ be a generator of the subgroup $\mu_{\ell}$.
For any $ 0\le i \le \ell-1 $, $ 0 \le j \le n-1 $, and $k  \equiv  j \bmod n  $, let $a_{(0,j)}, a_{(1,j)}, \ldots, a_{(i,j)}, \ldots, a_{(\ell-1,j)}$ be rearrangements of $0,1, \ldots, \ell-1$, satisfying $ a_{(i,0)} =i, \ell_{a_{(i,j)}} =a_{(i,j+1)}$ and $ a_{(i,k)} =a_{(i,j)} $.
Assume that  $0 \leq$ $m_{i}, m_{a_{(i,j)}} \leq s-1$ are integers for $ 0\le i \le \ell-1 $, and $ h(x)=\sum_{i=0}^{\ell-1} h_i x^i \in \gf_{q}[x] $ is a reduced polynomial modulo $x^{\ell}-1$, such that
\begin{equation}
	\label{matrix}
	H=A^{-1}B,
\end{equation}
where
\begin{equation*}
	\label{matrixA}
	A=\left(
	\begin{array}{cccc}
		1 & 1 & \cdots & 1 \\
		1 & \omega & \cdots & \omega^{\ell-1} \\
		1 & \omega^{2} & \cdots & \omega^{2(\ell-1)} \\
		& \cdots & \cdots & \\
		1 & \omega^{\ell-1} & \cdots & \omega^{(\ell-1)(\ell-1)}
	\end{array}
	\right)
	\text{is a Vandermonde matrix,}
\end{equation*}
\begin{equation*}
	H=\left(\begin{array}{c}
		h_{0} \\
		h_{1} \\
		h_{2} \\
		\vdots \\
		h_{\ell-1}
	\end{array}\right)
	\text { and }
	B=\left(\begin{array}{c}
		\beta^{\ell m_{0}+a_{(0,1)}} \\
		\vdots \\
		\beta^{\ell m_{i}+a_{(i,1)}-i r} \\
		\vdots \\
		\beta^{\ell m_{\ell-1}+a_{(\ell-1,1)}-(\ell-1) r}
	\end{array}\right) .
\end{equation*}
	Then $  f (x) = x^r h(x^s) $ is an $n$-cycle permutation over $ \gf_q $ if and only if
\begin{equation}
	\label{solution}
	\left\{\begin{aligned}
		r^{n} \equiv & 1 \bmod s   \\
		 \sum\nolimits_{k = 0}^{n-1} {r^{n-k-1}  m_{a_{(i,k)}}  }   \equiv & 0 \bmod s  . \\
	\end{aligned}\right.
\end{equation}
\end{Th}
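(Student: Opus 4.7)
The plan is to reduce this theorem to the criterion in Theorem \ref{mulcoren} by evaluating $\varphi$ explicitly at every $\omega^i \in \mu_\ell$. The matrix identity $H = A^{-1}B$ is equivalent to $AH = B$, and since $A$ is the Vandermonde matrix in $1, \omega, \ldots, \omega^{\ell-1}$, the $i$-th row of $AH$ is simply $h(\omega^i)$. Hence the hypothesis prescribes
\[
h(\omega^i) \;=\; \beta^{\ell m_i + a_{(i,1)} - ir} \qquad (0 \le i \le \ell-1),
\]
so raising to the $s$-th power and using $\beta^{\ell s} = 1$, $\beta^s = \omega$, one obtains $g(\omega^i) = \omega^{ir} h(\omega^i)^s = \omega^{a_{(i,1)}}$. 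Combined with $a_{(i,0)} = i$ and the iteration rule forcing $a_{(a_{(i,j)},1)} = a_{(i,j+1)}$, an easy induction on $k$ yields $g^{(k)}(\omega^i) = \omega^{a_{(i,k)}}$, and the modular condition $a_{(i,n)} = a_{(i,0)} = i$ already confirms that $g$ has order dividing $n$ on $\mu_\ell$.

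By Theorem \ref{mulcoren}, $f$ is an $n$-cycle permutation iff $r^n \equiv 1 \bmod s$ (so that $(r^n-1)/s$ is an integer) and $\varphi(\omega^i) = 1$ for every $i$. Plugging in the identifications above,
\[
\varphi(\omega^i) \;=\; \omega^{i(r^n-1)/s} \prod_{k=0}^{n-1} h(\omega^{a_{(i,k)}})^{r^{n-1-k}} \;=\; \beta^{\,i(r^n-1) + E_i},
\]
where
\[
E_i \;=\; \sum_{k=0}^{n-1} r^{n-1-k}\bigl[\ell\, m_{a_{(i,k)}} + a_{(i,k+1)} - a_{(i,k)}\, r\bigr].
\]
The central computation splits $E_i$ into an $m$-part and a combinatorial part; the latter rearranges as $\sum_{k=0}^{n-1} r^{n-1-k} a_{(i,k+1)} - \sum_{k=0}^{n-1} r^{n-k} a_{(i,k)}$, which telescopes (all interior indices cancel) down to the boundary term $a_{(i,n)} - r^n a_{(i,0)} = i - r^n i = i(1 - r^n)$.

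Collecting the three exponents then gives $\varphi(\omega^i) = \beta^{\ell S_i}$ with $S_i = \sum_{k=0}^{n-1} r^{n-1-k} m_{a_{(i,k)}}$, because the $i(r^n-1)$ prefactor is exactly annihilated by the $i(1-r^n)$ produced by the telescope. Since $\beta$ has order $q-1 = \ell s$, the equality $\varphi(\omega^i) = 1$ is equivalent to $\ell S_i \equiv 0 \bmod \ell s$, that is, $S_i \equiv 0 \bmod s$, which is precisely the second line of (\ref{solution}). Together with $r^n \equiv 1 \bmod s$, this establishes both directions simultaneously. The principal obstacle is the bookkeeping inside the telescoping step: one must keep the $m$-index $a_{(i,k)}$ strictly separate from the successor index $a_{(i,k+1)}$ that appears linearly in the exponent, and verify that re-indexing collapses the two $a$-sums into the single boundary term $i(1-r^n)$ that cancels the prefactor. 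Once that identity is isolated, the remainder of the argument is routine manipulation of exponents of $\beta$.
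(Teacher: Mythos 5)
Your proof is correct and follows essentially the same route as the paper's: read off $h(\omega^i)=\beta^{\ell m_i+a_{(i,1)}-ir}$ from $AH=B$, deduce $g^{(k)}(\omega^i)=\omega^{a_{(i,k)}}$, apply Theorem \ref{mulcoren}, and collapse the exponent of $\beta$ by the telescoping identity so that only $\ell\sum_{k}r^{n-1-k}m_{a_{(i,k)}}$ survives. The only (harmless) difference is bookkeeping: you evaluate $\varphi$ just at the points $\omega^i$, which already exhaust $\mu_\ell$, whereas the paper evaluates at all $\omega^{a_{(i,j)}}$ and then needs an extra step, using $r^n\equiv 1 \bmod s$, to reduce the $j$-indexed family of congruences to the single $j=0$ condition in Eq. (\ref{solution}).
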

\begin{proof}
According to $ AH=B $ and  $ \ell_{a_{(i,j)}} =\ell_{a_{(i,j+1)}},$ for $ 0\le i \le \ell-1 $, $0 \le j \le n-1 $. One can obtain that
\begin{equation*}
\left\{\begin{aligned}
	h\left(\omega^{i}\right) &=\beta^{\ell m_{i}+a_{(i,1)}-i r} \\
	h\left(\omega^{a_{(i,j)}}\right) &=\beta^{\ell m_{a_{(i,j)}}+a_{(i,j+1)}-a_{(i,j)} r}  \\
	h\left(\omega^{a_{(i,n-1)}}\right) &=\beta^{\ell m_{a_{(i,n-1)}}+i-a_{(i,n-1)} r}  .\\
\end{aligned}\right.
\end{equation*}
Thus $g(x)=x^{r} h(x)^{s}$ is an $n$-cycle permutation on $\mu_{\ell}$, i.e., for $0 \leq i \leq \ell-1, 0 \le j \le n-1$, we have
\begin{equation*}
	\label{geq}
	\left\{
	\begin{aligned}
		g\left(\omega^{i}\right) &=\omega^{i r} h\left(\omega^{i}\right)^{s}=\omega^{a_{(i,1)}} \\
		g\left(\omega^{a_{(i,j)}}\right) &=\omega^{a_{(i,j)} r}h\left(\omega^{a_{(i,j)}}\right)^{s}=\omega^{a_{(i,j+1)}} \\
		g\left(\omega^{a_{(i,n-1)}}\right) &=\omega^{a_{(i,n-1)} r}h\left(\omega^{a_{(i,n-1)}}\right)^{s}=\omega^{i} . \\
	\end{aligned}\right.
\end{equation*}

Together with Theorem \ref{mulcoren}, $ f(x)=x^rh(x^s) $ is an $ n $-cycle permutation over $ \gf_q $ if and only if
\begin{equation*}
	\label{dairu}
	\left\{\begin{aligned}
		r^{n} \equiv & 1 \bmod s   \\
		\varphi(\omega^{a_{(i,j)}})=&\omega^{a_{(i,j)}(r^n-1)/s}   \prod\nolimits_{{k = 0}}^{n-1} {   h\left(     \omega^{a_{(i,j+k)}}    \right)^{r^{n-k-1}}      }     =1 ,\\
	\end{aligned}\right.
\end{equation*}
which is equivalent to
\begin{equation}
	\label{dairu2}
	\left\{\begin{aligned}
		r^{n} \equiv & 1 \bmod s   \\
		\varphi(\omega^{a_{(i,j)}})=&\beta^{a_{(i,j)}(r^n-1)}   \prod\nolimits_{{k= 0}}^{n-1} {       \beta^{r^{n-k-1}( \ell m_{a_{(i,j+k)}}+a_{(i,j+k+1)}-a_{(i,j+k)} r   ) }    }    =1 \\
	\end{aligned}\right.
\end{equation}
After simplifying Eq. (\ref{dairu2}), one can obtain that
\begin{equation}
	\label{dairu3}
	\left\{\begin{aligned}
		r^{n} \equiv & 1 \bmod s   \\
		\varphi(\omega^{a_{(i,j)}})=&\beta^{ \ell  \sum\nolimits_{k = 0}^{n-1} {r^{n-k-1}  m_{a_{(i,j+k)}}  }  } =1 \\
	\end{aligned}\right.
\end{equation}
It is easy to derive that Eq. (\ref{dairu3}) is equivalent to Eq. (\ref{solution}),
since $ \sum\nolimits_{k = 0}^{n-1} {r^{n-k-1}  m_{a_{(i,j+k)}}  }   \equiv  0 \bmod s  $ for $ 0 \le j \le n-1 $  is equivalent to
$$ \sum\nolimits_{k = 0}^{n-1} {r^{n-k-1}  m_{a_{(i,k)}}  }   \equiv  0 \bmod s  ,$$
 by applying $ r^{n} \equiv 1 \bmod s $.
Thus, $  f $ is an $ n $-cycle permutation if and only if Eq. (\ref{solution}) holds.
\end{proof}
In particular, we have the following consequence if $ g $ is the identity map.
\begin{Cor}
	\label{fenyuanc}
	Assume that $ a_{(i,j)} =a_{(i,j+1)}=i $ in Theorem \ref{piecewisegenerel}.
	Let $0 \leq m_{i} \leq s-1$ be an integer, and $ h(x)=\sum_{i=0}^{\ell-1} h_i x^i \in \gf_{q}[x] $ be defined as in Theorem \ref{piecewisegenerel}, where
	\begin{equation*}
		B=\left(\begin{array}{c}
			\beta^{\ell m_{0}} \\
			\vdots \\
			\beta^{\ell m_{i}+ i (1-r)} \\
			\vdots \\
			\beta^{\ell m_{\ell-1}+ (\ell-1)(1-r)}
		\end{array}\right) .
	\end{equation*}
	Then $  f (x) = x^r h(x^s) $ is an $ n $-cycle permutation over $ \gf_q $ if and only if
	\begin{equation}
		\label{solutioni}
		\left\{\begin{aligned}
			r^{n} \equiv & 1 \bmod s   \\
			 \sum\nolimits_{k = 0}^{n-k-1} {r^{n-k-1}  m_{i}  }   \equiv & 0 \bmod s . \\
		\end{aligned}\right.
	\end{equation}
\end{Cor}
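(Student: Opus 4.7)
The plan is to apply Theorem \ref{piecewisegenerel} directly in the special case where $g$ is the identity on $\mu_\ell$. Setting $a_{(i,j)} = a_{(i,j+1)} = i$ forces the auxiliary map $g$ of the theorem to satisfy $g(\omega^i) = \omega^{a_{(i,1)}} = \omega^i$ for every $i \in \{0, 1, \ldots, \ell-1\}$, so every element of $\mu_\ell$ is a fixed point and $g = \mathrm{id}_{\mu_\ell}$. Substituting $a_{(i,1)} = i$ into the $i$-th entry $\beta^{\ell m_i + a_{(i,1)} - ir}$ of the column vector $B$ of Theorem \ref{piecewisegenerel} collapses the exponent to $\ell m_i + i(1 - r)$, which is exactly the vector $B$ displayed in the corollary. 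Hence the Vandermonde relation $H = A^{-1}B$ defining the coefficients of $h(x)$ carries over verbatim.

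Next I would examine the sum congruence in condition (\ref{solution}). Because $a_{(i,k)} = i$ for every $k \in \{0, 1, \ldots, n-1\}$, the subscript $a_{(i,k)}$ of $m$ is independent of $k$, so $m_{a_{(i,k)}} = m_i$ is constant in $k$. Pulling this constant out of the sum rewrites the congruence $\sum_{k=0}^{n-1} r^{n-k-1} m_{a_{(i,k)}} \equiv 0 \pmod{s}$ as $m_i \sum_{k=0}^{n-1} r^{n-k-1} \equiv 0 \pmod{s}$, which is exactly the second line of (\ref{solutioni}). Combined with the unchanged condition $r^n \equiv 1 \pmod{s}$, Theorem \ref{piecewisegenerel} immediately yields the equivalence claimed in the corollary in both directions simultaneously.

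There is no substantive obstacle in this argument: the corollary is a pure instantiation of Theorem \ref{piecewisegenerel} obtained by declaring every cycle of $g$ on $\mu_\ell$ to be a singleton. The only bookkeeping that deserves a careful check is to confirm that the simplified $B$ in the corollary is genuinely obtained from the general $B$ by substituting $a_{(i,1)} = i$ (which is straightforward from $0 \cdot (1-r) = 0$ in the top entry), and that the collapse $m_{a_{(i,k)}} = m_i$ really holds for every $k$ under the hypothesis $a_{(i,j)} = a_{(i,j+1)} = i$. Both observations are immediate, so no additional machinery beyond Theorem \ref{piecewisegenerel} and elementary manipulation of exponents is needed.
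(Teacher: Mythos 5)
Your proposal is correct and matches the paper's own proof: both treat the corollary as a pure instantiation of Theorem \ref{piecewisegenerel} in which $a_{(i,j)}=i$ makes $g$ the identity on $\mu_{\ell}$, the vector $B$ collapses to $\beta^{\ell m_i + i(1-r)}$, and the congruence $\sum_{k=0}^{n-1} r^{n-k-1} m_{a_{(i,k)}} \equiv 0 \pmod{s}$ reduces to $m_i \sum_{k=0}^{n-1} r^{n-k-1} \equiv 0 \pmod{s}$. The only cosmetic difference is that the paper re-derives $\varphi(\omega^i)$ explicitly while you specialize condition (\ref{solution}) directly; the logic is the same.
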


\begin{proof}
		Since  $ a_{(i,j)} =a_{(i,j+1)}=i $ for $0 \leq i \leq \ell-1$, we have
	\begin{equation*}
	h\left(\omega^{i}\right) =\beta^{\ell m_{i}+i-i r} .
	\end{equation*}
	Hence,
\begin{equation*}
	g\left(\omega^{i}\right) =\omega^{i r} h\left(\omega^{i}\right)^{s}=\omega^{i} .
\end{equation*}
Plugging them into Theorem \ref{piecewisegenerel}, one can get $ f(x)=x^rh(x^s) $ is an $ n $-cycle permutation over $ \gf_q $ if and only if
\begin{equation*}
	\left\{\begin{aligned}
		r^{n} \equiv & 1 \bmod s   \\
		\varphi(\omega^{i})=&    \beta^{i(r^n-1)}   \prod\nolimits_{{k= 0}}^{n-1} {       \beta^{r^{n-k-1}( \ell m_{i}+i-ir   ) }    }       =   \beta^{ \ell  \sum\nolimits_{k = 0}^{n-1} {r^{n-k-1}  m_{i}  }  }  =1 ,\\
	\end{aligned}\right.
\end{equation*}
which is equivalent to Eq. (\ref{solutioni}). Therefore, we get the results.
\end{proof}

Although Theorems \ref{mulcoren} and \ref{piecewisegenerel} are from different perspectives, they are consistent in essence.
They will be both useful for judging and constructing $ n $-cycle permutations.

\section{$ n $-Cycle Permutations with High Index}
\label{high}

In this section, we will give explicit constructions of PPs of the form $ x^rh(x^s)  \in \gf_q[x]$ with high index $ \ell=\frac{q-1}{s} $.
It follows from  Theorem \ref{criterion}  and  (1) in Theorem \ref{mulcoren} that if $ r,s $ do not satisfy $r^n \equiv 1 \pmod{s}$ or $g(x)=x^rh(x)^s$ is not an $ n $-cycle permutation on $\mu_{\ell}$, clearly we have that $f(x) =x^rh(x^s)$ can not be an $ n $-cycle permutation on $ \gf_{q} $.
Therefore, when we let $r^n \equiv 1 \pmod{s}$ and control $ h(x) $ to construct an $ n $-cycle permutation $ f$, in addition to satisfy $h(x) \neq 0$ for any $x\in\mu_{\ell}$, we also must render $ g $ as an $ n $-cycle permutation.
We first consider constructions when $ g $ is a monomial, i.e., $g(x)=ax^v$, where $ a^{v^2+v+1}=1 $ and $ v^3-1 \equiv 0 \bmod s $.
Then, with $ h(x) $ keeping these above conditions, we adjust it to satisfy (2) in Theorem \ref{mulcoren}.

We mainly construct triple-cycle permutations in this section to demonstrate our approaches for constructing $ n $-cycle permutations.

\begin{Th}
	\label{single}
	Let $q$ be a prime power, $s \mid (q-1)$, $\gcd(r,s)=1$ and $r^{3} \equiv 1 \bmod s$.
	Assume that $h(x)\in \gf_q[x]$ such that $h(y)^s=ay^{v-r}$ holds for any $y \in \mu_\ell=\left\{    x\in{\gf}_{q}^*  \    |  \   x^\ell=1   \right\}$ , where $v^{3} \equiv 1 \bmod \ell$,  $a^{v^2+v+1}=1$ and $  \ell = {(q-1)/s} $.
	Then $f(x)= x^rh(x^s) $ is a triple-cycle permutation over $\mathbb{F}_{q}$  if and only if for any $y \in \mu_\ell$,
	$$ y^{(r^3-1)/s}h(y)^{r^2}h( ay^v )  ^rh\left(   a^{v+1}  y^{v^2}  \right)  =1 .$$
\end{Th}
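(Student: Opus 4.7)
The plan is to apply Corollary \ref{mulcore} (the $n=3$ instance of Theorem \ref{mulcoren}) directly. Condition (1) of that corollary, namely $r^3\equiv 1 \bmod s$ together with $\gcd(r,s)=1$, is already part of the hypothesis, so all the work goes into rewriting condition (2) using the information about $h$. The core computation is to determine the iterates of $g(x)=x^r h(x)^s$ on $\mu_\ell$.

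First, I would substitute the assumed identity $h(y)^s = a y^{v-r}$ (valid on $\mu_\ell$) into $g$ to obtain
\begin{equation*}
g(y) = y^r h(y)^s = y^r \cdot a y^{v-r} = a y^{v}.
\end{equation*}
Before iterating, I would briefly verify that $g$ is genuinely a self-map of $\mu_\ell$: since $h(y)\in \gf_q^*$, we have $h(y)^s \in \mu_\ell$, so $a y^{v-r}\in \mu_\ell$, forcing $a \in \mu_\ell$; thus $a y^v\in \mu_\ell$. Then iterating gives
\begin{equation*}
g^{(2)}(y) = g(a y^v) = a (a y^v)^v = a^{v+1} y^{v^{2}}, \qquad g^{(3)}(y) = a (a^{v+1} y^{v^{2}})^v = a^{v^2+v+1} y^{v^{3}},
\end{equation*}
which by the hypotheses $a^{v^2+v+1}=1$ and $v^3\equiv 1\bmod\ell$ (together with $y^\ell=1$) collapses to $g^{(3)}(y)=y$. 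In particular $g$ is automatically a triple-cycle permutation on $\mu_\ell$, so we never need to impose this separately.

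Finally, I would plug the explicit formulas $g(y)=ay^v$ and $g^{(2)}(y)=a^{v+1}y^{v^2}$ into the defining expression
\begin{equation*}
\varphi(y) = y^{(r^3-1)/s} h(y)^{r^2} h\bigl(g(y)\bigr)^r h\bigl(g^{(2)}(y)\bigr)
\end{equation*}
appearing in part (2) of Corollary \ref{mulcore}, obtaining
\begin{equation*}
\varphi(y) = y^{(r^3-1)/s} h(y)^{r^2} h(a y^v)^r h(a^{v+1} y^{v^2}).
\end{equation*}
The biconditional $\varphi(y)=1$ on $\mu_\ell$ is then precisely condition (2) of Corollary \ref{mulcore} specialized to our $g$, which completes the equivalence.

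There is no real obstacle here; the argument is essentially a bookkeeping exercise in substitution. The only mildly subtle point is noticing that the condition $h(y)^s = a y^{v-r}$ forces $a \in \mu_\ell$, which is what legitimizes the iteration of $g$ inside $\mu_\ell$ and makes the simplification $a^{v^2+v+1} y^{v^3}=y$ rigorous.
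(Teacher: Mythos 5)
Your proposal is correct and follows essentially the same route as the paper: both apply Corollary \ref{mulcore} after observing that the hypothesis $h(y)^s = ay^{v-r}$ gives $g(y) = ay^{v}$ and hence $g^{(2)}(y) = a^{v+1}y^{v^{2}}$, so that the displayed product is exactly $\varphi(y)$. The paper runs the substitution in the reverse direction (rewriting $a^{v+1}y^{v^{2}}$ as $g(g(y))$ via $h(y)^{sv} = a^{v}y^{v^{2}-rv}$), while your forward iteration, together with the checks that $a \in \mu_{\ell}$ and $g^{(3)} = I$, is a slightly more careful but equivalent bookkeeping of the same computation.
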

\begin{proof}
	First, we have $g(x)=x^rh(x)^s=ax^v$ and  $h(y)^{sv}=a^vy^{v^2-rv}$.
Then, one can obtain
	\begin{equation*}
		\begin{aligned}
			& y^{(r^3-1)/s}h(y)^{r^2}h( ay^v )  ^rh\left(   a^{v+1}  y^{v^2}  \right)   \\
			= & y^{(r^3-1)/s}h(y)^{r^2}h( ay^v )  ^rh\left(   a  y^{rv}  h(y)^{sv}   \right)   \\
			= & y^{(r^3-1)/s}h(y)^{r^2}h\left( g\left( y\right) \right)  ^rh\left( g\left( y^{r}h(y)^s\right) \right)  \\
			= & \varphi(y).
		\end{aligned}
	\end{equation*}
	Thus, $f$ is a triple-cycle permutation if and only if $ y^{(r^3-1)/s}h(y)^{r^2}h( ay^v )  ^rh\left(   a^{v+1}  y^{v^2}  \right)   =1 $, according to Corollary \ref{mulcore}.
\end{proof}
A direct consequence of Theorem \ref{single} is the following.
\begin{Cor}
	\label{v=q}
	Let $q$ be a prime power, $s=q-1$,$\ell=(q^3-1)/(q-1)=q^2+q+1$.
	Assume that $h(x)\in \gf_{q^3}[x]$ such that $h(y)^{q-1}=1$ holds for any $y \in \mu_{q^2+q+1}=\left\{    x\in{\gf}_{q}^*  \    |  \   x^{q^2+q+1}=1   \right\}$.
	Then $f(x)= x^qh(x^{q-1}) $ is a triple-cycle permutation over $\mathbb{F}_{q^3}$  if and only if for any $y \in \mu_{q^2+q+1}$,
	$$  h(y)h( y^q )h\left(   y^{q^2}  \right)  =1 .$$
\end{Cor}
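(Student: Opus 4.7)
The plan is to specialize Theorem \ref{single} to the parameters $r=q$, $s=q-1$, $a=1$, and $v=q$ (with the role of $\gf_q$ in that theorem played by $\gf_{q^3}$), and then simplify the resulting identity using the standing hypothesis $h(y)^{q-1}=1$. First I would verify the hypotheses of Theorem \ref{single}. One has $\gcd(q,q-1)=1$ and $q^3\equiv 1\pmod{q-1}$, so the conditions on $r$ and $s$ are immediate. With $\ell=(q^3-1)/(q-1)=q^2+q+1$, the factorization $q^3-1=(q-1)\ell$ gives $v^3=q^3\equiv 1\pmod{\ell}$. The relation $h(y)^s=ay^{v-r}$ required by Theorem \ref{single} reduces to $h(y)^{q-1}=y^{q-q}=1$, which is exactly the hypothesis of the corollary, and $a^{v^2+v+1}=1^{q^2+q+1}=1$ is automatic.

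Next, I would substitute these parameters into the characterizing identity of Theorem \ref{single},
\[
y^{(r^3-1)/s}\,h(y)^{r^2}\,h(ay^v)^r\,h(a^{v+1}y^{v^2})=1\qquad(y\in\mu_\ell).
\]
The exponent $(r^3-1)/s=(q^3-1)/(q-1)=\ell$, so $y^{(r^3-1)/s}=1$ on $\mu_\ell$; the nested arguments $ay^v$ and $a^{v+1}y^{v^2}$ collapse to $y^q$ and $y^{q^2}$; and the three $h$-factors are raised to $q^2$, $q$, and $1$ respectively. Hence the displayed identity reduces to
\[
h(y)^{q^2}\,h(y^q)^q\,h(y^{q^2})=1\qquad(y\in\mu_\ell).
\]

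To finish, I would observe that $h(y)^{q-1}=1$ forces $h(y)\in\mu_{q-1}\subset\gf_q^*$, so $h(y)$ is fixed by the Frobenius $x\mapsto x^q$; in particular $h(y)^{q^2}=h(y)$. Because $\mu_\ell$ is a subgroup and therefore closed under $y\mapsto y^q$, the same observation applies to $h(y^q)$, giving $h(y^q)^q=h(y^q)$. Substituting these identifications collapses the preceding identity to $h(y)\,h(y^q)\,h(y^{q^2})=1$, which is the assertion of the corollary. The argument is essentially a parameter chase; the only subtle point is invoking the Frobenius invariance of $h(y)\in\gf_q^*$ (and the closure of $\mu_\ell$ under raising to the $q$-th power) to eliminate the exponents $q^2$ and $q$, so I do not expect any serious obstacle here.
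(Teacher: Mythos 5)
Your proof is correct and follows exactly the paper's intended route: the paper states Corollary \ref{v=q} as a direct consequence of Theorem \ref{single} (with no written proof), and your specialization $r=v=q$, $s=q-1$, $a=1$ over $\gf_{q^3}$ is precisely the omitted verification. The Frobenius step --- noting that $h(y)^{q-1}=1$ forces $h(y)\in\gf_q^*$ (and likewise $h(y^q)\in\gf_q^*$ since $y^q\in\mu_\ell$), so the exponents $q^2$ and $q$ in $h(y)^{q^2}h(y^q)^{q}h(y^{q^2})=1$ can be dropped --- is the right way to reconcile the specialized identity with the stated condition.
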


Due to the different structures of $ h(x) $, we will give three classes of triple-cycle permutations in the following.
For each class of them, we will also give explicit triple-cycle constructions.
\begin{Th}
	\label{jieguo4xiang3-3qp2}
	Let $ q $ be a prime power, $\phi(x) \in \mathbb{F}_{q^{2}}[x]$ and
	$$
	h(x)= \phi(x)+\phi(x)^{q} x^{1-v} ,
	$$
	where $v^{3} \equiv 1 \bmod (q+1)$.
	Then $f(x)=x h\left(x^{q-1}\right)$ is a triple-cycle permutation on $\mathbb{F}_{q^{2}}$ if and only if $ h(x)   h( x^v )   h\left(    x^{v^2}  \right)  =1 $ holds for any $x \in \mu_{q+1}$.
\end{Th}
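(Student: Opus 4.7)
The plan is to reduce this statement to Theorem \ref{single} by choosing the parameters $r=1$, $s=q-1$, $a=1$, so that $\ell=(q^2-1)/(q-1)=q+1$ and $\mu_\ell=\mu_{q+1}$. With these choices the conditions $\gcd(r,s)=1$, $r^3\equiv 1 \bmod s$, and $a^{v^2+v+1}=1$ are trivially satisfied, while $v^3\equiv 1\bmod \ell$ is exactly the hypothesis of the theorem. What remains from the hypotheses of Theorem \ref{single} is to verify the structural condition $h(y)^s=a\,y^{v-r}$, which here becomes
\[
h(y)^{q-1}=y^{v-1}\quad\text{for every } y\in\mu_{q+1}.
\]

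First I would establish this identity. The key facts to invoke are that $y\in\mu_{q+1}$ satisfies $y^q=y^{-1}$, and that the coefficients of $\phi$ lie in $\mathbb{F}_{q^2}$, so $\phi(y)\in\mathbb{F}_{q^2}$ and thus $\phi(y)^{q^2}=\phi(y)$. Then a direct computation gives
\begin{align*}
h(y)^q &= \phi(y)^q+\phi(y)^{q^2}\,y^{q(1-v)}\\
       &= \phi(y)^q+\phi(y)\,y^{v-1}\\
       &= y^{v-1}\bigl(\phi(y)\;+\;\phi(y)^q\,y^{1-v}\bigr)\\
       &= y^{v-1}h(y).
\end{align*}
Since by the standing assumption $h(y)\neq 0$ for $y\in\mu_\ell$, dividing yields $h(y)^{q-1}=y^{v-1}$ as required.

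With the structural hypothesis verified, I would invoke Theorem \ref{single}: $f(x)=x\,h(x^{q-1})$ is a triple-cycle permutation on $\mathbb{F}_{q^2}$ if and only if
\[
y^{(r^3-1)/s}\,h(y)^{r^2}\,h(ay^v)^{r}\,h\!\left(a^{v+1}y^{v^2}\right)=1\quad\text{for all } y\in\mu_{q+1}.
\]
Substituting $r=1$, $a=1$, and $s=q-1$ makes the prefactor $y^{(1-1)/(q-1)}=1$ and the inner arguments $ay^v=y^v$, $a^{v+1}y^{v^2}=y^{v^2}$, so the condition collapses precisely to
\[
h(y)\,h(y^v)\,h\!\left(y^{v^2}\right)=1\quad\text{for all } y\in\mu_{q+1},
\]
which is the stated criterion.

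The main obstacle is the verification of $h(y)^{q-1}=y^{v-1}$; once that is in hand, everything else is bookkeeping by specializing Theorem \ref{single}. The only subtle points in that verification are (i) noting that $\phi(y)\in\mathbb{F}_{q^2}$ so that $\phi(y)^{q^2}=\phi(y)$ applies even though $\phi$ itself is a polynomial over $\mathbb{F}_{q^2}$, and (ii) using the nonvanishing of $h$ on $\mu_{q+1}$ to invert $h(y)$ when passing from $h(y)^q=y^{v-1}h(y)$ to $h(y)^{q-1}=y^{v-1}$.
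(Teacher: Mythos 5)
Your proposal is correct and takes essentially the same route as the paper: both verify $h(y)^{q-1}=y^{v-1}$ for $y\in\mu_{q+1}$ (so that $g(y)=y\,h(y)^{q-1}=y^{v}$) and then specialize Theorem \ref{single} with $a=1$, $r=1$, $s=q-1$. The only cosmetic difference is that the paper explicitly disposes of the case where $h$ vanishes at some point of $\mu_{q+1}$ (both sides of the equivalence then fail), whereas you appeal to the paper's standing nonvanishing assumption; either is acceptable.
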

\begin{proof}
	If there is an $ x_0 \in \mu_{q+1} $ such that $ h(x_0) = 0 $.
	Then $ h(x_0)^3  =0 \ne 1 $.
	Furthermore, $f(x_0)=x_0 h\left(x_0^{q-1}\right) =0$, thus $ f $ is not a triple-cycle permutation.
	
	If for any $ x\in \mu_{q+1} $, $ h(x) \ne 0 $.
	Then we have
	$$	h(x)^{q-1}=\frac{\phi(x)^q+\phi(x) x^{q-qv} }{\phi(x)+\phi(x)^{q} x^{1-v} } =x^{v-1}.	$$
	Thus, we obtain $g(x)=x h(x)^{q-1}=x^{v},$ which is a triple-cycle permutation on $\mu_{q+1}$.
	Then by plugging $a=1, r=1, s=q-1$ and $g(x)= x^{v}$ in the condition in Theorem \ref{single}, we obtain
	that $f(x)$ is a triple-cycle permutation on $\mathbb{F}_{q^{2}}$ if and only if 	$ h(x)   h( x^v )   h\left(    x^{v^2}  \right)  =1 $.
\end{proof}

\begin{Cor}
	\label{jieguo4xiang3-3qp2c}
	Let $ q $ be a power of $ 3 $ such that $ 1 + 3 q + 2q^2  \equiv    0  \pmod{q^3+1} $.
	Assume that
	$$ 	h(x)=  1 + {x^{1 + q}} + {x^{1 - {q^2}}} + {x^{ - {q^2} - q}}    . 	$$
	Then $f(x)=x h\left(x^{q^3-1}\right)$ is a triple-cycle permutation on $\mathbb{F}_{q^6}$.
\end{Cor}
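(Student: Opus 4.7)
The plan is to derive this from Theorem~\ref{jieguo4xiang3-3qp2}, applied with the base field $\mathbb{F}_{q^2}$ of that theorem replaced by $\mathbb{F}_{(q^3)^2} = \mathbb{F}_{q^6}$ (equivalently, the parameter ``$q$'' of that theorem becomes $q^3$). Concretely, I must exhibit $\phi \in \mathbb{F}_{q^6}[x]$ and an integer $v$ with $v^3 \equiv 1 \pmod{q^3+1}$ such that $h(x) = \phi(x) + \phi(x)^{q^3} x^{1-v}$ on $\mu_{q^3+1}$, and then verify the cubic-product identity $h(x)\,h(x^v)\,h(x^{v^2}) = 1$ for every $x \in \mu_{q^3+1}$. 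I would take $v = q^2$ (so $v^3 = q^6 \equiv 1$ automatically) and $\phi(x) = 1 + x^{1+q}$. Since $x^{q^3} = x^{-1}$ on $\mu_{q^3+1}$, one has $\phi(x)^{q^3} = 1 + x^{-(1+q)}$, hence $\phi(x)^{q^3} x^{1-q^2} = x^{1-q^2} + x^{-q-q^2}$, and $\phi(x) + \phi(x)^{q^3} x^{1-v}$ reproduces $h(x)$ as required.

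Since $v^2 = q^4 \equiv -q \pmod{q^3+1}$, the cubic identity becomes $h(x)\,h(x^{q^2})\,h(x^{-q}) = 1$ on $\mu_{q^3+1}$. The key move is the factorisation
$$h(x) = (1 + x^{1+q})(1 + x^{-q^2-q}).$$
Reducing exponents modulo $q^3+1$ one computes $h(x^{q^2}) = (1+x^{q^2-1})(1+x^{q+1})$ and $h(x^{-q}) = (1+x^{-q-q^2})(1+x^{q^2-1})$. Each of the three distinct factors $1+x^{1+q}$, $1+x^{-q^2-q}$, $1+x^{q^2-1}$ then appears exactly twice across the triple product, so
$$h(x)\,h(x^{q^2})\,h(x^{-q}) = \bigl[h(x)\,(1+x^{q^2-1})\bigr]^2,$$
and it suffices to prove $h(x)(1+x^{q^2-1}) = \pm 1$ on $\mu_{q^3+1}$.

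For the last step I substitute $u = x^{1+q}$. Since $(1+q)(q^2-q+1) = q^3+1$, $u$ ranges over $\mu_{q^2-q+1}$, and the expression becomes $(1+u)(1+u^{-q})(1+u^{q^2})$. Using $u^{q^2} = u^{q-1}$ (from $u^{q^2-q+1} = 1$) and expanding yields $2 + \tau$, where $\tau = (u+u^{-1}) + (u^q+u^{-q}) + (u^{q-1}+u^{1-q})$ is a Frobenius-orbit sum, hence lies in $\mathbb{F}_q$. Here the hypothesis enters: factoring $1+3q+2q^2 = (q+1)(2q+1)$ and $q^3+1 = (q+1)(q^2-q+1)$ shows the hypothesis is equivalent to $q^2-q+1 \mid 2q+1$, which for a power of $3$ forces $q = 3$ and so $\mu_{q^2-q+1} = \mu_7$. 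Then $\tau = 0$ when $u = 1$ (six ones summing to $6 \equiv 0 \pmod 3$), and $\tau = u + u^2 + \cdots + u^6 = -1 \equiv 2$ when $u \ne 1$ (since the seven seventh roots of unity sum to $0$); in either case $2 + \tau \in \{1, -1\}$, whose square is $1$.

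The main obstacle is the final two manoeuvres: spotting the factorisation of $h$ that makes the triple product a perfect square, and collapsing the remaining identity to a Frobenius-orbit sum whose value is pinned down by the arithmetic hypothesis on $q$. Once those are in place, everything else reduces to routine exponent bookkeeping modulo $q^3+1$.
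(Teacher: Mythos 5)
Your proposal is correct, and its reduction step is the same as the paper's: both invoke Theorem \ref{jieguo4xiang3-3qp2} (with the base field taken as $\mathbb{F}_{q^6}$, i.e.\ the theorem's ``$q$'' equal to $q^3$), with the same choices $\phi(x)=1+x^{1+q}$ and $v=q^2$. Where you genuinely diverge is in verifying the identity $h(x)\,h(x^{q^2})\,h(x^{q^4})=1$ on $\mu_{q^3+1}$. The paper expands the product of the three quadrinomials into all of its monomials and cancels them in characteristic $3$ using the congruences $1+3q+2q^2 \equiv -2q-3+q^2 \equiv q+3q^2-2 \equiv 0 \pmod{q^3+1}$ derived from the hypothesis. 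You instead exploit the factorization $h(x)=(1+x^{1+q})(1+x^{-q^2-q})$, observe that the three binomials $1+x^{1+q}$, $1+x^{-q^2-q}$, $1+x^{q^2-1}$ each occur exactly twice across the triple product (I checked the exponent reductions; they are right), so the product equals $\bigl[h(x)(1+x^{q^2-1})\bigr]^2$, and then evaluate the cube $(1+u)(1+u^{-q})(1+u^{q-1})=2+\tau$ via $u=x^{1+q}\in\mu_{q^2-q+1}$. Your arithmetic observation is also correct and is a point the paper never makes: since $1+3q+2q^2=(q+1)(2q+1)$ and $q^3+1=(q+1)(q^2-q+1)$, the hypothesis is equivalent to $q^2-q+1\mid 2q+1$, which forces $q\le 3$ and hence $q=3$; so the corollary has exactly one instance, matching the paper's single Magma-verified example over $\mathbb{F}_{3^6}$. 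What your route buys is a short structural proof --- a perfect square times a seventh-root-of-unity sum, with $2+\tau\in\{1,-1\}$ in both cases $u=1$ and $u\ne 1$ --- in place of the paper's mechanical exponent bookkeeping, and in addition it exposes that the stated generality in $q$ is illusory; the paper's brute-force route keeps the hypothesis in the exact form in which it is stated but hides this fact.
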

\begin{proof}
	Let $\phi(x) = 1+x^{1+q}$ in Theorem \ref{jieguo4xiang3-3qp2}.
	Then, it suffices to prove $ h(x)   h( x^v )   h\left(    x^{v^2}  \right)  =1 $ holds for any $x \in \mu_{q+1}$, according to Theorem \ref{jieguo4xiang3-3qp2}.

	Since $ 1 + 3 q + 2q^2  \equiv    0  \pmod{q^3+1} $, one can obtain that $ 1 + 3 q + 2q^2  \equiv (1 + 3 q + 2q^2)q  \equiv (1 + 3 q + 2q^2)q^2  \equiv  0  \pmod{q^3+1} , $
	i.e., 	
	$$ 1 + 3 q + 2q^2  \equiv -2 q - 3 +q^2  \equiv q + 3 q^2 - 2  \equiv  0  \pmod{q^3+1}. $$
	Thus, for any $x \in \mu_{q+1}$,  we have
	\begin{equation}
		\label{huajie1}
		\left\{
		\begin{array}{ll}
			x^{-q - q^2}=x^{1 + 2 q + q^2} =x^{-2 + 2 q^2 }    \\
			x^{1 + q}= x^{-2 q - 2 q^2}=x^{-2 - q + q^2 }     \\
			x^{q + q^2}=x^{-1 - 2 q - q^2 }=x^{2 - 2 q^2 }  \\
			x^{2 + q - q^2}=  x^{-1 - q} =x^{2 q + 2 q^2}  .
		\end{array}
		\right.
	\end{equation}

	Therefore,  for any $x \in \mu_{q+1}$,	we have
	\begin{equation*}
		\begin{aligned}
			h(x)   h( x^{q^2} )   h\left(    x^{q^4}  \right) = &\left( x^{(q+1)q^3+1-{q^2}}+  x^{1-{q^2}}    +x^{q+1}  +1 \right)   \left(  x^{   (q+1)q^5+q^2-q^4}+  x^{{q^2}-q^4}    +x^{(q+1){q^2}}  +1 \right)  \\
			&  \cdot  \left( x^{   (q+1){q^7}+q^4-q^6}+  x^{q^4-q^6}    +x^{(q+1)q^4}  +1 \right)       \\
			=&1 + x^{1 + q} + x^{q^2  -1} + x^{ -q - q^2 } + x^{1 - q^2} + x^{1 +q} + x^{q^2 +q} + x^{1 + 2q + q^2 } + 1  + x^{	q^2 - 1}  \\
			&+ x^{ q + q^2 } + x^{-q - 1} + 1 + x^{ - q^2 -q} + x^{q + q^2}  + 1 + x^{	-q + q^2  - 2} + x^{ q^2-1} + x^{-q  -1} \\
			&+ x^{ - q^2-q} + 1 + x^{-1-q} + x^{- q^2 -2q -1} + x^{	-1-q} + x^{	-q -2-q} + x^{1 -q^2} + x^{1  -q-2q^2} \\
			&+ x^{1 -q^2}  + x^{q^2 -q^2} + x^{	1 + q } + x^{-1 -q} + x^{ q^2 -q^2} + x^{ -q-q^2}+ x^{	 -2q-2q^2}+ x^{ -q-q^2} \\
			&+ x^{-1 -2q-q^2} + x^{q -q^2 + 2} + x^{q + 1} + x^{2q  + 2} + x^{ -q^2 + 1} + 1+ x^{1 + q  } + x^{	 -q^2 + 1} \\
			&+ x^{-q -q^2 } + x^{ -2q^2 + 2} + x^{ -q^2 + 1} + x^{2 + q  -q^2 } + x^{ -q  -2q^2 + 1} + x^{ q+q^2} \\
			&+ x^{- 1 +q+2q^2} + x^{2q+2q^2} + x^{ - 1 +q^2} + x^{q+q^2} + 1 + x^{ 2q^2-2 } + x^{  2q^2-1 +q} + x^{ -1 +q^2} \\
			&+ x^{	-1+q^2} + x^{ -q -1 }  + x^{	-q -2+q^2} + x^{1 +q}  + x^{q+q^2} + x^{	1 +2q+q^2} + 1 \\
			=& 1, \\
		\end{aligned}
	\end{equation*}
	where the last step is simplified by Eq. (\ref{huajie1}).

	Thus, $f$ is a triple-cycle permutation on $\mathbb{F}_{q^6}$.

\end{proof}
\begin{example}
	Let $ q=3$, which satisfies $ 1 + 3\times 3 + 2\times 3^2 =28  \equiv    0  \pmod{3^3+1} $.
	Then $ h(x)=x^{20} + x^{16} + x^4 + 1 $.
	Thus $ f(x)=x^{521} + x^{313} + x^{105} + x $ is a triple-cycle permutation on $\gf_{3^{6}}$.
	These are verified by Magma.
\end{example}
The corollary below is obtained from corollary 	\ref{jieguo4xiang3-3qp2c} by applying the reductive way in Corollary \ref{diguiuse} .
\begin{Cor}
	\label{diguiuse1}
	Let $ n=3 $, $ q $ be a power of $ 3 $ such that $ 1 + 3 q + 2q^2  \equiv    0  \pmod{q^3+1} $, and $$ h(x)= 1 + x^{\frac{(1 + q)(q^3-1)+2q^6-2}{3}} + x^{\frac{(1 - {q^2})(q^3-1)+2q^6-2}{3}} + x^{\frac{( - {q^2} - q)(q^3-1)}{3}} .$$
	Then, $f(x)=x h\left(x^{\frac{q^{18}-1}{q^6-1}}\right)$ is a triple-cycle permutation on	$\mathbb{F}_{q^{18}}$.
\end{Cor}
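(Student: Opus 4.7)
The plan is to deduce this from Corollary~\ref{diguiuse}, applied with the role of the base field played by $\mathbb{F}_{q^6}$, extension degree $m=3$, and exponent $r=1$. First I would verify the hypotheses of that corollary in this setting: the coprimality $\gcd(q^6-1,3)=1$ holds because $q$ is a power of $3$, hence $q^6-1\equiv -1\pmod 3$, and $r^3\equiv 1\pmod{(q^{18}-1)/(q^6-1)}$ is automatic from $r=1$. The polynomial $h$ displayed in the statement has coefficients in $\{0,1\}\subset\mathbb{F}_{q^6}$, so it lies in $\mathbb{F}_{q^6}[x]$. Corollary~\ref{diguiuse} then reduces the triple-cycle property of $f$ on $\mathbb{F}_{q^{18}}$ to the triple-cycle property of $g(x)=x\,h(x)^3$ on $\mathbb{F}_{q^6}$.

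To produce a triple-cycle permutation on $\mathbb{F}_{q^6}$, I would invoke Corollary~\ref{jieguo4xiang3-3qp2c}: under the same hypothesis $1+3q+2q^2\equiv 0\pmod{q^3+1}$, the map $F(x)=x\,h_0(x^{q^3-1})$ is triple-cycle on $\mathbb{F}_{q^6}$, where $h_0(x)=1+x^{1+q}+x^{1-q^2}+x^{-q-q^2}$. The natural plan is then to exhibit the stated $h(x)$ as a ``cube root'' of $h_0(x^{q^3-1})$ as a function on $\mathbb{F}_{q^6}$. If $h(x)^3=h_0(x^{q^3-1})$ holds there, then $g(x)=x\,h(x)^3=F(x)$ is triple-cycle on $\mathbb{F}_{q^6}$, closing the reduction.

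The core calculation uses the characteristic-$3$ Frobenius identity $(a+b+c+d)^3=a^3+b^3+c^3+d^3$. Writing $h(x)=1+x^{e_1}+x^{e_2}+x^{e_3}$ with the three exponents as in the statement, this gives $h(x)^3=1+x^{3e_1}+x^{3e_2}+x^{3e_3}$, so as functions on $\mathbb{F}_{q^6}$ it suffices to verify $3e_i\equiv E_i\pmod{q^6-1}$, where $E_1=(1+q)(q^3-1)$, $E_2=(1-q^2)(q^3-1)$, and $E_3=(-q^2-q)(q^3-1)$ are the nonzero exponents in $h_0(x^{q^3-1})$. By direct inspection of the formulas defining the $e_i$, $3e_1-E_1=2(q^6-1)$, $3e_2-E_2=2(q^6-1)$, and $3e_3=E_3$, so each congruence holds on the nose; at $x=0$ both sides equal $1$.

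The only step that really needs care, and which I view as the main obstacle, is confirming that the three displayed fractions are integers so that $h$ is an honest polynomial. For $e_3=-q(q+1)(q^3-1)/3$ this is immediate because $q$ is a power of $3$ forces $3\mid q$. For $e_1$ and $e_2$ I would reduce the numerators modulo $3$ using $q\equiv 0\pmod 3$: each factor of the form $(1\pm q^a)(q^3-1)$ contributes $-1\pmod 3$ while $2(q^6-1)$ contributes $1\pmod 3$, so the sum vanishes modulo $3$. Once this integrality check is in place, the reduction chain $f\leftrightarrow g=F$ closes and Corollary~\ref{diguiuse} yields the triple-cycle permutation on $\mathbb{F}_{q^{18}}$.
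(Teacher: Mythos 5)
Your proposal is correct and follows exactly the paper's route: invoke Corollary~\ref{jieguo4xiang3-3qp2c} to get the triple-cycle permutation $x\,h_0(x^{q^3-1})$ on $\mathbb{F}_{q^6}$, then lift to $\mathbb{F}_{q^{18}}$ via Corollary~\ref{diguiuse} with $m=3$, $r=1$. The paper's two-line proof leaves implicit precisely what you verify explicitly --- that $h(x)^3=h_0(x^{q^3-1})$ as functions on $\mathbb{F}_{q^6}$ (via the characteristic-$3$ Frobenius identity and reduction of exponents modulo $q^6-1$), together with the integrality of the displayed exponents --- so your write-up is a faithful, more complete version of the same argument.
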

\begin{proof}
	According to Corollary 	\ref{jieguo4xiang3-3qp2c},  		$g(x)=x  \left(   1 + {x^{(1 + q)(q^3-1)}} + {x^{(1 - {q^2})(q^3-1)}} + {x^{( - {q^2} - q)(q^3-1)}}  \right)  $ is a triple-cycle permutation on $\mathbb{F}_{q^6}$.
	From Corollary \ref{diguiuse}, $f$ is a triple-cycle permutation on	$\mathbb{F}_{q^{18}}$
\end{proof}

\begin{Th}
	\label{jieguo3xiang2-6qp2}
	Let $ q $ be a prime power, $\phi(x) \in \mathbb{F}_{q^{2}}[x]$ and $ h(x)= \phi(x)+\phi(x)^{q} +1. $
	Then $f(x)=x h\left(x^{q-1}\right)$ is a triple-cycle permutation on $\gf_{q^{2}}$ if and only if $ h(x)^3  =1 $ holds for any $x \in \mu_{q+1}$.
\end{Th}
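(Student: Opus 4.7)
The plan is to reduce the claim to Theorem \ref{single} in the special case where the map $g$ on the unit subgroup is the identity, exploiting the Frobenius-symmetric shape of $h$. The underlying parameters are: base field $\mathbb{F}_{q^2}$, $r = 1$, $s = q-1$, $\ell = (q^2-1)/(q-1) = q+1$, so that $\mu_{\ell} = \mu_{q+1}$.

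First I would handle the degenerate case. If $h(x_{0}) = 0$ for some $x_{0} \in \mu_{q+1}$, then any $x \in \mathbb{F}_{q^2}^{*}$ with $x^{q-1} = x_{0}$ satisfies $f(x) = 0 = f(0)$, so $f$ is not a permutation and in particular not a triple-cycle permutation; simultaneously the stated identity fails because $h(x_{0})^{3} = 0 \neq 1$. Both sides of the biconditional are therefore false and there is nothing to check in this case. Hence I may assume $h(x) \neq 0$ for every $x \in \mu_{q+1}$.

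The core structural step is to show $h(\mu_{q+1}) \subseteq \mathbb{F}_{q}$. Since $\phi \in \mathbb{F}_{q^{2}}[x]$ and $y \in \mathbb{F}_{q^{2}}$ for any $y \in \mu_{q+1}$, we have $\phi(y) \in \mathbb{F}_{q^{2}}$, hence $\phi(y)^{q^{2}} = \phi(y)$. Applying the $q$-th power Frobenius to $h(y)$ then yields
\[
h(y)^{q} = \phi(y)^{q} + \phi(y)^{q^{2}} + 1 = \phi(y)^{q} + \phi(y) + 1 = h(y),
\]
so $h(y) \in \mathbb{F}_{q}^{*}$ and consequently $h(y)^{q-1} = 1$. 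Thus the associated map $g(x) = x\, h(x)^{q-1}$ is the identity on $\mu_{q+1}$, which is exactly the parameter specialization $a = 1$, $v = 1$ (with $r = 1$, $s = q-1$) of Theorem \ref{single}; one checks immediately that the hypotheses $r^{3} \equiv 1 \pmod{s}$, $v^{3} \equiv 1 \pmod{\ell}$, $a^{v^{2}+v+1} = 1$, and $h(y)^{s} = a y^{v-r}$ are all satisfied.

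Finally, I would substitute these parameters into the triple-cycle criterion of Theorem \ref{single}. The factor $y^{(r^{3}-1)/s}$ becomes $y^{0} = 1$, while each of $h(y)^{r^{2}}$, $h(a y^{v})^{r}$, and $h(a^{v+1} y^{v^{2}})$ reduces to $h(y)$, so the criterion collapses to $h(y)^{3} = 1$ for all $y \in \mu_{q+1}$, which is precisely the stated equivalence. I do not anticipate any real obstacle: the argument is a template match with Theorem \ref{jieguo4xiang3-3qp2}, with the only specific ingredient being the short Frobenius computation that places $h(\mu_{q+1})$ inside $\mathbb{F}_{q}$.
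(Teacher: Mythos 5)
Your proposal is correct and follows essentially the same route as the paper: handle the degenerate case $h(x_0)=0$, use the Frobenius computation to show $h(x)^{q-1}=1$ on $\mu_{q+1}$ (equivalently $h(\mu_{q+1})\subset\mathbb{F}_q^*$), so that $g(x)=x$, and then specialize Theorem \ref{single} with $a=v=r=1$, $s=q-1$. If anything, your treatment of the degenerate case is slightly more careful than the paper's, since you correctly pick a preimage $x$ with $x^{q-1}=x_0$ to exhibit the failure of injectivity, whereas the paper evaluates $f$ at $x_0\in\mu_{q+1}$ itself.
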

\begin{proof}
	If there exists an $ x_0 \in \mu_{q+1} $ such that $ h(x_0) = 0 $.
	Then $ h(x_0)^3  =0 \ne 1 $.
	Furthermore, $f(x_0)=x_0 h\left(x_0^{q-1}\right) =0$, thus $ f $ is not a triple-cycle permutation.
	
	If for any $ x\in \mu_{q+1} $, $ h(x) \ne 0 $.
	Then we have
	$$
	h(x)^{q-1}=\frac{ \phi(x)+\phi(x)^{q} +1  }{ \phi(x)^{q}+\phi(x) +1 } =1
	$$
	Thus, we obtain $g(x)=x h(x)^{q-1}=x$, which is a triple-cycle permutation on $\mu_{q+1}$.
	Then by plugging $a=1, r=1,v=1, s=q-1$ and $g(x)= x$ in the condition in Theorem \ref{single}, we have  $f(x)$ is a triple-cycle permutation on $\mathbb{F}_{q^{2}}$ if and only if
	$ h(x)^3  =1 $ for any $ x \in \mu_{q+1} $.
\end{proof}

\begin{Cor}
	\label{jieguo3xiang2-6qp2c}
	Let $ q $ be an even prime power, and $ a $ be an integer such that $ 5a  \equiv 0 \pmod{q+1}  $.
	Then $f(x)=x h\left(x^{q-1}\right)$ is a triple-cycle permutation on $\gf_{q^2}$, where $ 	h(x)=  x^a+x^{aq}  +1    $.
\end{Cor}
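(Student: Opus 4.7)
The plan is to derive this as a direct application of Theorem \ref{jieguo3xiang2-6qp2} by specializing $\phi(x) = x^a$. With this choice one immediately checks $\phi(x) + \phi(x)^q + 1 = x^a + x^{aq} + 1$, which recovers the stated $h(x)$. Hence by Theorem \ref{jieguo3xiang2-6qp2} the claim reduces to verifying a single identity: $h(x)^3 = 1$ for every $x \in \mu_{q+1}$.

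Next, since $x \in \mu_{q+1}$ means $x^{q+1} = 1$, i.e.\ $x^q = x^{-1}$, the polynomial simplifies to $h(x) = x^a + x^{-a} + 1$. Writing $y = x^a$, the goal becomes $(y + y^{-1} + 1)^3 = 1$. Here the hypothesis that $q$ is even is essential: $\gf_{q^2}$ has characteristic $2$, so $(y + y^{-1} + 1)^2 = y^2 + y^{-2} + 1$. I would then expand
\[
(y + y^{-1} + 1)^3 = (y + y^{-1} + 1)(y^2 + y^{-2} + 1)
\]
in characteristic $2$; the cross terms $y$ and $y^{-1}$ each appear twice and vanish, leaving $y^3 + y^{-3} + y^2 + y^{-2} + 1$. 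So the task is reduced to showing $y^3 + y^{-3} + y^2 + y^{-2} = 0$, or equivalently, after multiplication by $y^3$, that $y^6 + y^5 + y + 1 = 0$.

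The key algebraic observation is the factorization $y^6 + y^5 + y + 1 = (y^5 + 1)(y + 1)$ (which is valid in every characteristic and straightforward to verify). Thus it suffices to show $y^5 = 1$, which is exactly where the numerical hypothesis enters: since $x^{q+1} = 1$ and $5a \equiv 0 \pmod{q+1}$, one has $y^5 = x^{5a} = 1$. Combining everything, $h(x)^3 = 1$ for every $x \in \mu_{q+1}$, and Theorem \ref{jieguo3xiang2-6qp2} then yields that $f(x) = x h(x^{q-1})$ is a triple-cycle permutation on $\gf_{q^2}$.

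The only place where real care is needed is the expansion of $(y + y^{-1} + 1)^3$; the rest is bookkeeping. I expect no genuine obstacle, because once one recognizes that the identity $(A+B+C)^2 = A^2+B^2+C^2$ in characteristic $2$ converts the problem into a short Laurent polynomial identity factoring as $(y^5+1)(y+1)$, the hypothesis $5a \equiv 0 \pmod{q+1}$ slots in exactly to force $y^5 = 1$.
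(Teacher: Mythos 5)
Your proof is correct and takes essentially the same route as the paper: both specialize Theorem \ref{jieguo3xiang2-6qp2} to $\phi(x)=x^a$ and then verify $h(x)^3=1$ on $\mu_{q+1}$ by expanding the cube in characteristic $2$ using $x^{q}=x^{-1}$ and $x^{5a}=1$. The only difference is cosmetic: where the paper cancels the nine monomials of $h(x)^3$ pairwise via congruences on the exponents, you substitute $y=x^a$ and use the factorization $y^6+y^5+y+1=(y^5+1)(y+1)$, which is a tidier way to organize the same cancellation.
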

\begin{proof}
	For any $ x \in \mu_{q+1} $, we have $x^{aq}=x^{a + 2aq}, x^{a}=x^{2a + aq}, x^{2a}=x^{3aq}$ , and $x^{3a}=x^{ 2aq}$. Since $ 5a  \equiv 0 \pmod{q+1}  $,
then one can obtain that
	\begin{equation*}
		\begin{aligned}
			h(x)^3 &=    1 + {x^a} + {x^{2a}} + {x^{3a}} + {x^{aq}} + {x^{2aq}} + {x^{3aq}} + {x^{2a + aq}} + {x^{a + 2aq}} \\
			&=    1 + {x^a} + {x^{2a}} + {x^{3a}} + {x^{aq}} + {x^{3a}} + {x^{2a}} + {x^{a}} + {x^{aq}} \\
			&= 1. \\
		\end{aligned}
	\end{equation*}
	Thus $ f $ is a triple-cycle permutation.
\end{proof}
\begin{example}
	Let $ q=2^6, a_1=26, a_2=13$, $h_1(x)=x^{39} + x^{26 }+ 1,$ and $h_2(x)=x^{52} + x^{13} + 1 $.
	Then we have $ 5\times 26  \equiv 0 \pmod{2^6+1} $, $ 26 \times 2^6  \equiv 39 \pmod{2^6+1} $, $ 5\times 13  \equiv 0 \pmod{2^6+1} $ and $ 13 \times 2^6  \equiv 52 \pmod{2^6+1} $.
	Thus $ f_1(x)=x^{2458} + x^{1639} + x $ and $ f_2(x)=x^{3277} + x^{820} + x  $ are triple-cycle permutations on $\gf_{q^{12}}$.
	These are verified by Magma.
\end{example}

The corollary below is  obtained from corollary 	\ref{jieguo3xiang2-6qp2c} by applying the reductive way in Corollary \ref{diguiuse}.
\begin{Cor}
	\label{diguiuse2}
	Let $ m=2 $ in Corollary \ref{diguiuse}, $ q $ be a power of $ 2 $, and $ a $ be an integer such that $ 5a  \equiv 0 \pmod{q+1}  $.
	Assume that $$ h(x)= x^{a(q-1)}+x^{aq(q-1)}  +1  . $$
	Then $f(x)=x h\left(x^{q^2+1}\right)$ is a triple-cycle permutation on	$\mathbb{F}_{q^{4}}$.
\end{Cor}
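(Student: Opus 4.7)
The plan is to obtain this result by a direct composition of the two prior tools: use Corollary \ref{jieguo3xiang2-6qp2c} to produce a triple-cycle permutation on $\mathbb{F}_{q^2}$, and then lift it to $\mathbb{F}_{q^4}$ via Corollary \ref{diguiuse}, applied with the base field taken to be $\mathbb{F}_{q^2}$ and the extension degree $m = 2$ (that is, the $q$ in the statement of Corollary \ref{diguiuse} is replaced by $q^2$). Under this substitution, $\frac{(q^2)^2 - 1}{q^2 - 1} = q^2 + 1$, matching the exponent appearing in $f(x) = x h(x^{q^2+1})$, and the resulting permutation lives on $\mathbb{F}_{(q^2)^2} = \mathbb{F}_{q^4}$, as required.

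First I would verify the hypotheses of Corollary \ref{diguiuse} in the substituted form, with $r = 1$: one needs $\gcd(q^2 - 1, 2) = 1$, which holds since $q$ is a power of $2$ and hence $q^2 - 1$ is odd; the congruence $r^3 \equiv 1 \pmod{q^2 + 1}$ is trivial; and $h(x) \in \mathbb{F}_{q^2}[x]$ holds because the coefficients of $h$ lie in $\mathbb{F}_2$. Corollary \ref{diguiuse} then reduces the claim to showing that
\begin{equation*}
g(x) = x \cdot h(x)^2 = x\bigl(x^{a(q-1)} + x^{aq(q-1)} + 1\bigr)^2
\end{equation*}
is a triple-cycle permutation on $\mathbb{F}_{q^2}$.

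Next, exploiting characteristic $2$, I would expand $(x^{a(q-1)} + x^{aq(q-1)} + 1)^2 = x^{2a(q-1)} + x^{2aq(q-1)} + 1$, so that
\begin{equation*}
g(x) = x\bigl(x^{2a(q-1)} + x^{2aq(q-1)} + 1\bigr).
\end{equation*}
This is precisely the shape of the polynomial supplied by Corollary \ref{jieguo3xiang2-6qp2c} with the integer $a$ there replaced by $2a$. The condition required there is $5(2a) \equiv 0 \pmod{q+1}$, and this is immediate from the hypothesis $5a \equiv 0 \pmod{q+1}$. Thus Corollary \ref{jieguo3xiang2-6qp2c} yields that $g$ is a triple-cycle permutation on $\mathbb{F}_{q^2}$, which by the reduction above completes the proof.

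The argument is essentially bookkeeping; there is no substantive combinatorial obstacle, only the need to track the parameter substitution $q \mapsto q^2$ carefully between the two invoked corollaries, and to use the Frobenius identity for squaring in characteristic $2$ to recognize $h(x)^2$ as a polynomial of the exact form handled by Corollary \ref{jieguo3xiang2-6qp2c}.
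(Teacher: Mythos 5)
Your proof is correct and follows the same overall route as the paper: obtain a triple-cycle permutation on $\mathbb{F}_{q^2}$ from Corollary \ref{jieguo3xiang2-6qp2c}, then lift it to $\mathbb{F}_{q^4}$ via Corollary \ref{diguiuse} applied with base field $\mathbb{F}_{q^2}$ and $m=2$. In fact, you are more careful than the paper at the one point where care is genuinely needed: Corollary \ref{diguiuse} with $m=2$ requires that $g(x)=x\,h(x)^{2}$, not $x\,h(x)$, be a triple-cycle permutation on $\mathbb{F}_{q^2}$, whereas the paper's two-line proof quotes Corollary \ref{jieguo3xiang2-6qp2c} only for $x\,h(x)$ and silently ignores the exponent $m$ on $h$. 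Your observation that in characteristic $2$ one has $h(x)^{2}=x^{2a(q-1)}+x^{2aq(q-1)}+1$, which is again of the exact shape covered by Corollary \ref{jieguo3xiang2-6qp2c} with $a$ replaced by $2a$, together with the remark that $5\cdot 2a\equiv 0 \pmod{q+1}$ follows from $5a\equiv 0\pmod{q+1}$, is precisely the step needed to close this gap. So your write-up is, if anything, a repaired and fully rigorous version of the paper's own argument, reaching the same conclusion by the same decomposition.
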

\begin{proof}
		According to Corollary 	\ref{jieguo3xiang2-6qp2c}, 	$g(x)=x  \left(    x^{a(q-1)}+x^{aq(q-1)}  +1   \right)  $ is a triple-cycle permutation on $\mathbb{F}_{q^2}$.
			According to Corollary \ref{diguiuse}, $f$ is a triple-cycle permutation on	$\mathbb{F}_{q^{4}}$.
\end{proof}

\begin{Th}
	\label{jieguo3xiang2-6qp2-2}
	Let $ q $ be a prime power, $\phi(x) \in \mathbb{F}_{q^{2}}[x]$ such that $  \phi(x)^{q-1}=x^{v-1} $ for any $x \in \mu_{q+1}$.
	Let	$$  h(x)=\phi(x)+1 + x^{1-v} ,	$$
	where $v^{3} \equiv 1 \bmod (q+1)$.
	Then $f(x)=x h\left(x^{q-1}\right)$ is a triple-cycle permutation on $\mathbb{F}_{q^{2}}$ if and only if $ h(x)   h( x^v )   h\left(    x^{v^2}  \right)  =1 $ holds for any $x \in \mu_{q+1}$.
\end{Th}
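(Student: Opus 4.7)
The plan is to reduce this to Theorem \ref{single} via the exact template used in Theorems \ref{jieguo4xiang3-3qp2} and \ref{jieguo3xiang2-6qp2}. The core claim I will establish is that on $\mu_{q+1}$ the function $h$ satisfies $h(x)^{q-1} = x^{v-1}$, so that $g(x) = x h(x)^{q-1} = x^v$, which is a triple-cycle permutation of $\mu_{q+1}$ because $v^3 \equiv 1 \pmod{q+1}$. Once this is in hand, specializing $a=1$, $r=1$, $s=q-1$ in Theorem \ref{single} collapses its triple-cycle condition to exactly $h(y) h(y^v) h(y^{v^2}) = 1$; the side conditions $\gcd(r,s)=1$ and $r^3 \equiv 1 \pmod{s}$ of Theorem \ref{single} are trivially satisfied for $r=1$, and $a^{v^2+v+1}=1$ holds automatically for $a=1$.

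First I would dispose of the degenerate case just as in the earlier two theorems: if $h(x_0) = 0$ for some $x_0 \in \mu_{q+1}$, then any $u \in \mathbb{F}_{q^2}^{*}$ with $u^{q-1} = x_0$ gives $f(u) = u h(x_0) = 0 = f(0)$, so $f$ is not a permutation and a fortiori not a triple-cycle permutation, while simultaneously $h(x_0) h(x_0^v) h(x_0^{v^2}) = 0 \neq 1$; both sides of the equivalence are false and the statement is vacuous in this case. This lets me restrict to the situation where $h$ never vanishes on $\mu_{q+1}$, in which $h(x)^{q-1}$ makes sense to compute.

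The main obstacle, and essentially the only real computation, is the identity $h(x)^{q-1} = x^{v-1}$ on $\mu_{q+1}$, and this is where the hypothesis $\phi(x)^{q-1} = x^{v-1}$ is used. My plan is to raise $h(x) = \phi(x) + 1 + x^{1-v}$ to the $q$-th power: since $x^{q+1} = 1$ on $\mu_{q+1}$ one has $x^{q(1-v)} = x^{v-1}$, and since $\phi(x)^q = x^{v-1}\phi(x)$ by hypothesis, I expect $h(x)^q$ to factor cleanly as $x^{v-1}\bigl(\phi(x) + 1 + x^{1-v}\bigr) = x^{v-1} h(x)$, which yields the desired $h(x)^{q-1} = x^{v-1}$. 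With this multiplicative identity verified, $g(x) = x^v$ drops out, and applying Theorem \ref{single} with the parameters listed above reduces the triple-cycle property of $f$ to precisely the stated condition $h(x) h(x^v) h(x^{v^2}) = 1$ on $\mu_{q+1}$, completing the equivalence in both directions.
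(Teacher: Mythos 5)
Your proposal is correct and follows essentially the same route as the paper: both establish $h(x)^{q-1}=x^{v-1}$ on $\mu_{q+1}$ (the paper via the quotient $h(x)^q/h(x)$, you by factoring $h(x)^q = x^{v-1}h(x)$, which is the same computation), conclude $g(x)=xh(x)^{q-1}=x^v$ is a triple-cycle permutation of $\mu_{q+1}$, and then specialize Theorem \ref{single} with $a=1$, $r=1$, $s=q-1$. Your explicit treatment of the degenerate case where $h$ vanishes on $\mu_{q+1}$ is slightly more detailed than the paper's one-line remark, but it is the same idea used in the paper's proofs of Theorems \ref{jieguo4xiang3-3qp2} and \ref{jieguo3xiang2-6qp2}.
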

\begin{proof}
	Assume that for any $ x\in \mu_{q+1} $, $ h(x) \ne 0 $, since it is a necessary condition for both $f(x)=x h\left(x^{q-1}\right)$ being a triple-cycle permutation and $ h(x)   h( x^v )   h\left(    x^{v^2}  \right)  =1 $.
	Clearly $  q-qv \equiv   v-1   \pmod{q+1}$.
	Then we have
	$$	h(x)^{q-1}=\frac{      \phi(x)^q+       x^{q-qv}  +1    }{   \phi(x)  + 1 + x^{1-v}    } =x^{v-1}.	$$
	Thus, one can obtain that $g(x)=x h(x)^{q-1}=x^{v},$  which is a triple-cycle permutation on $\mu_{q+1}$.
	Then by plugging $a=1, r=1, s=q-1$ and $g(x)= x^{v}$ in the condition in Theorem \ref{single}, we obtain
	that $f(x)$ is a triple-cycle permutation on $\mathbb{F}_{q^{2}}$ if and only if 	$ h(x)   h( x^v )   h\left(    x^{v^2}  \right)  =1 $.
\end{proof}
\begin{Cor}
	Let $ q $ be an even prime power, and $ a $ be an integer such that
\begin{subequations}
\begin{align}
			 a (1+ v + v^2) \equiv&  0   \pmod{q+1}\label{pinhen}, \\
		a + v - v^2 + av^2-av  \equiv & 0  \pmod{q+1}\label{yong1}  \text{ and}    \\
		 av + v^2 +v -2 \equiv  & 0  \pmod{q+1}.\label{yong2}
	\end{align}
are all established.
\end{subequations}
	Then $f(x)=x h\left(x^{q-1}\right)$ is a triple-cycle permutation on $\gf_{q^2}$, where $ h(x)=x^a +1 + x^{1-v}    $.
\end{Cor}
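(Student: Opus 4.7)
The plan is to invoke Theorem \ref{jieguo3xiang2-6qp2-2} with the choice $\phi(x)=x^{a}$, so that $h(x)=x^{a}+1+x^{1-v}$ fits that framework. The corollary then reduces to two verifications: (i) that $\phi(x)^{q-1}=x^{v-1}$ on $\mu_{q+1}$, and (ii) that $h(x)\,h(x^{v})\,h(x^{v^{2}})=1$ for every $x\in\mu_{q+1}$.

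For (i), on $\mu_{q+1}$ we have $x^{q-1}=x^{-2}$, so $\phi(x)^{q-1}=x^{-2a}$ and (i) becomes the single congruence $2a+v-1\equiv 0\pmod{q+1}$. I plan to derive this from the three given conditions as follows. Since $v^{3}\equiv 1\pmod{q+1}$, $v$ is invertible modulo $q+1$ with $v^{-1}\equiv v^{2}$; multiplying (\ref{yong2}) by $v^{2}$ yields $a\equiv 2v^{2}-v-1\pmod{q+1}$. Substituting this into (\ref{yong1}) collapses that condition to $(v+4)(v-1)\equiv 0\pmod{q+1}$. Combined with $(v-1)(v^{2}+v+1)\equiv 0\pmod{q+1}$ and the polynomial identity
\[
13(v-1)=(v^{2}+v+1)(v-1)-(v+4)(v-3)(v-1),
\]
this forces $13(v-1)\equiv 0\pmod{q+1}$. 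Then $(4v+3)(v-1)=4(v+4)(v-1)-13(v-1)\equiv 0\pmod{q+1}$, which is precisely $2a+v-1\equiv 0\pmod{q+1}$.

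For (ii), I will expand the triple product in characteristic two into its twenty-seven monomials and organize them by the $v$-shift action $x\mapsto x^{v}$ on the exponents. Three monomials are $v$-fixed: one has exponent $a(1+v+v^{2})$, which vanishes by (\ref{pinhen}), and the other two have exponent $0$; these three contribute $1+1+1=1$. The remaining twenty-four monomials fall into eight orbits of size three, which I claim coincide as unordered sets modulo $q+1$ in the four pairings: orbit of $a+av+v^{2}-1$ with orbit of $a$; orbit of $a+v-1$ with orbit of $a+av$; orbit of $a+v^{2}-1$ with orbit of $1-v^{2}$; orbit of $a+v-v^{2}$ with orbit of $v^{2}-1$. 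Each of the four identifications reduces, after at most one cyclic shift by $v$, to one of the congruences $a\equiv 2v^{2}-v-1$ or $2a+v-1\equiv 0$ modulo $q+1$ already in hand. Hence the four paired orbits cancel in characteristic two and the total product equals $1$.

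The main obstacle will be the bookkeeping in (ii): listing the twenty-seven exponents correctly, sorting them into the eight $v$-orbits, and confirming the four claimed orbit coincidences. Since each orbit is invariant under the $v$-shift, it suffices to verify a single representative congruence per pairing; each such congruence then reduces to one of the two already-established ones, making the calculation routine but lengthy.
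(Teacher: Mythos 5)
You are correct, and your route is essentially the paper's: specialize Theorem \ref{jieguo3xiang2-6qp2-2} to $\phi(x)=x^{a}$ and verify $h(x)h(x^{v})h(x^{v^{2}})=1$ on $\mu_{q+1}$ by expanding the product and cancelling in characteristic two, using congruences derived from (\ref{pinhen})--(\ref{yong2}) together with $v^{3}\equiv 1\pmod{q+1}$, which both you and the paper use silently (it is inherited from the theorem but not restated in the corollary). Two differences are worth recording, both in your favor. First, you explicitly verify the hypothesis $\phi(x)^{q-1}=x^{v-1}$ on $\mu_{q+1}$, i.e.\ $2a+v-1\equiv 0\pmod{q+1}$, which is exactly what forces $g(x)=xh(x)^{q-1}=x^{v}$; your derivation (reduce (\ref{yong2}) by $v^{2}$ to $a\equiv 2v^{2}-v-1$, collapse (\ref{yong1}) to $(v+4)(v-1)\equiv 0$, then use $13(v-1)=(v^{2}+v+1)(v-1)-(v+4)(v-3)(v-1)$ and $(4v+3)(v-1)=4(v+4)(v-1)-13(v-1)$) checks out. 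The paper's proof never performs this verification at all --- it only manufactures six auxiliary congruences and confirms the product condition --- so strictly speaking your step closes a gap in the published argument, since Theorem \ref{jieguo3xiang2-6qp2-2} cannot be invoked before its hypothesis on $\phi$ is established. Second, where the paper lists all monomials of $\left(x^{a}+1+x^{1-v}\right)\left(x^{av}+1+x^{v-v^{2}}\right)\left(x^{av^{2}}+1+x^{v^{2}-1}\right)$ and cancels them term by term, you organize the $27$ terms by the $v$-shift: three fixed terms summing to $1$, and eight exponent-orbits of size three pairing into four coinciding pairs, each checkable on one representative because every orbit is closed under multiplication by $v$. This is the same cancellation, but with only four congruences left to verify, and your four pairings do hold; the only minor imprecision is that the pairing of the orbit of $a+v^{2}-1$ with that of $1-v^{2}$ needs $a\equiv 2-2v^{2}$, which is not literally one of your two named congruences but follows from them, since $a\equiv 2v^{2}-v-1$ together with $2a+v-1\equiv 0$ gives $4v^{2}-v-3\equiv 0$. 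So the proposal is sound; just state the standing assumption $v^{3}\equiv 1\pmod{q+1}$ explicitly when you write it up.
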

\begin{proof}
	By simplifying Eq. (\ref{yong1}) $ + $ Eq. (\ref{yong2}) $ - $ Eq. (\ref{pinhen}), we have
	\begin{equation}
		\label{yong3}
		-2 -av +2v  \equiv  0  \pmod{q+1}.
	\end{equation}
  Respectively multiplying Eq. (\ref{yong1}), Eq. (\ref{yong2}) and Eq. (\ref{yong3}) with $ v $ and $ v^2$, and then simplifying them by (\ref{pinhen}), we have
	$$  av + v^2 - 1 + a  -av^2 \equiv  av^2 + 1 - v + av  -a \equiv   0  \pmod{q+1}, $$
	$$ av^2 + 1 +v^2 -2v \equiv  a + v +1  - 2v^2 \equiv   0  \pmod{q+1}, $$
	$$   -2v  - av^2+ 2v^2 \equiv  -2v^2-a +2 \equiv   0  \pmod{q+1}. $$
	Therefore, one can obtain that
	\begin{equation*}
		\begin{aligned}
			h(x)   h(x^v)   h(x^{v^2}) = &\left( x^a + 1 + x^{1 - v} \right)   \left(  x^{av} + 1 + x^{v - v^2} \right)  \left( x^{av^2} + 1 + x^{v^2 - 1} \right)       \\
			=&x^a + x^{1 - v} + x^{-1 + v} + x^{a v} + x^{a v^2} + x^{-1 + a + v} + x^{a + a v} + x^{1 - v + a v} + x^{1 - v^2} + x^{v - v^2} \\
			&+ x^{a + v - v^2} + x^{-1 + v^2} + x^{-1 + a + v^2} + x^{-v + v^2}  + x^{-1 + a v + v^2} + x^{-1 + a + a v + v^2} + x^{-v + a v + v^2} \\
                      			&+ x^{a + a v^2} + x^{1 - v + a v^2} + x^{a v + a v^2} + x^{a + a v + a v^2} + x^{1 - v + a v + a v^2} + x^{1 - v^2 + a v^2} + x^{v - v^2 + a v^2} \\
			&+ x^{a + v - v^2 + a v^2}  \\
			=& 1. \\
		\end{aligned}
	\end{equation*}
	Thus $ f $ is a triple-cycle permutation.
\end{proof}
\begin{example}
	Let $ q=2^6, a_1=35,v_1=61, a_2=25, v_2=16$, $h_1(x)=x^{35} + x^5 + 1,$ and $ h_2(x)= x^{25} + x^{50}  + 1$.
	It is easy to derive that $ 35 (1+ 61 +61^2) \equiv 35 + 61 - 61^2 + 35 \times61^2-35\times 61 \equiv  35\times61 + 61^2 +61 -2 \equiv 2 +35\times61 -2\times 61 \equiv  0  \pmod{q+1} $   and
	$ 25 (1+ 16 + 16^2) \equiv 25 + 16 - 16^2 + 25\times 16^2-25\times 16 \equiv  25\times 16 + 16^2 +16 -2 \equiv 2 +25\times 16 -2\times 16 \equiv  0  \pmod{q+1} $.
	Thus $ f_1(x)=x^{2206} + x^{316} + x $ and $ f_2(x)=x^{1576} + x^{3151}  + x $ are triple-cycle permutations on $\gf_{q^{12}}$.
	These are verified by Magma.
\end{example}

\section{$ n $-Cycle Permutations with Low Index }
\label{low}

In this section, we use piecewise method to construct explicit $ n $-cycle permutations for low index $ \ell $.
All the constructions in this section are consistent with Theorem \ref{piecewisegenerel}.

For explicit constructions with low index, we first discuss a special case that $\mu_{\ell}$ having at most two elements, where $s=\frac{q-1}{2}$ and $ q $ is an odd prime power.
\begin{Th}
	\label{miu2}
	Let $q$ be an odd prime power, and
	 $$ f(x)=\frac{a-b}{2} x^{\frac{q-1}{2}+r}+\frac{a+b}{2} x^{r} \in \gf_q[x],$$
	where $ a,b \in \gf_q  $.
	Assume that $ s=\frac{q-1}{2}  $.
	Then $f$ is an $ n $-cycle permutation over $\mathbb{F}_{q}$ if and only if
	\begin{enumerate}[(1)]
		\item $r^{n} \equiv 1 \bmod s$,
		\item  $ a^{\sum\nolimits_{{i = 0}}^{n-1} {   r^i      }}={-1}^{(r^n-1)/s}   b^{\sum\nolimits_{{i = 0}}^{n-1} {   r^i      }}=1 $ or \\
	 $ n $ is even and $ a^{\sum\nolimits_{{k=1}}^{n/2} {   r^{2k-1}      } }   b^{\sum\nolimits_{{k=1}}^{n/2} {   r^{ 2k-2 }      }}   ={(-1)}^{(r^n-1)/s}    a^{\sum\nolimits_{{k=1}}^{n/2} {   r^{ 2k-2 }      }}  b^{\sum\nolimits_{{k=1}}^{n/2} {   r^{2k-1}      } }    =1   .$
	\end{enumerate}
\end{Th}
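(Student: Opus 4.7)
The plan is to apply Theorem \ref{mulcoren} in the very special case $\ell = 2$. First, I would rewrite $f(x) = x^r h(x^s)$ with
\[
h(y) = \frac{a-b}{2}\,y + \frac{a+b}{2},
\]
so that $h(1) = a$ and $h(-1) = b$. Since $s = (q-1)/2$, the associated subgroup is $\mu_\ell = \mu_2 = \{1,-1\}$, and the auxiliary map $g(x) = x^r h(x)^s$ sends $\mu_2$ into itself (using $a,b \neq 0$, which is necessary anyway). Because $\mu_2$ has only two elements, any permutation $g$ of $\mu_2$ is either the identity $y \mapsto y$ or the transposition $y \mapsto -y$, and these two possibilities correspond to the two alternatives in condition~(2) of the statement.

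For each case I would invoke Theorem \ref{mulcoren}. Condition~(1), $r^n \equiv 1 \pmod s$, is common to both cases. In the identity case the orbit $\{g^{(i)}(y)\}_{i \ge 0}$ is constant, so $h\bigl(g^{(i)}(y)\bigr) = h(y)$ for every $i$, and the criterion $\varphi(y) = 1$ collapses to
\[
y^{(r^n-1)/s}\, h(y)^{\sum_{i=0}^{n-1} r^i} = 1.
\]
Evaluating at $y = 1$ and $y = -1$ yields exactly the first subcondition of~(2), namely $a^{\sum r^i} = (-1)^{(r^n-1)/s}\, b^{\sum r^i} = 1$.

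In the transposition case the orbit alternates: $g^{(i)}(1) = (-1)^i$ and $g^{(i)}(-1) = (-1)^{i+1}$. Since the $2$-cycle of $g$ on $\mu_2$ must divide the overall cycle length $n$, this forces $n$ to be even. Splitting the index range $0 \le i \le n-1$ by parity, the values $h\bigl(g^{(i)}(1)\bigr)$ become $a$ at the even positions (carrying exponents $r^{n-1}, r^{n-3}, \ldots, r^{1}$) and $b$ at the odd positions (carrying exponents $r^{n-2}, r^{n-4}, \ldots, r^{0}$); reindexing by $k$ running from $1$ to $n/2$ produces precisely the partial sums $\sum_{k=1}^{n/2} r^{2k-1}$ and $\sum_{k=1}^{n/2} r^{2k-2}$ written in the statement. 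Evaluating $\varphi(-1) = 1$ is symmetric and merely interchanges the roles of $a$ and $b$, supplying the extra factor $(-1)^{(r^n-1)/s}$.

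The main obstacle is purely bookkeeping: verifying that the partial sums of odd- and even-indexed powers of $r$ match the exponents written in condition~(2), and checking that the case analysis "$g$ permutes $\mu_2$" splits cleanly into the identity and transposition subcases (in particular that $n$ must be even in the transposition subcase so $g^{(n)}$ is the identity on $\mu_2$). Once these two points are handled, the result is a direct specialization of Theorem~\ref{mulcoren} to $\ell = 2$.
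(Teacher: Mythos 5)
Your proposal reproduces the paper's own argument in all essentials: the same specialization of Theorem \ref{mulcoren} to $\ell=2$, the same identification $h(1)=a$, $h(-1)=b$, and the same case split according to whether $g$ is the identity or the transposition of $\mu_2$ (the paper organizes the split by the parity of $n$, you organize it by $g$, but the bookkeeping with the sums $\sum_{k} r^{2k-1}$ and $\sum_k r^{2k-2}$ comes out identically, and your parity reindexing is correct). The necessity direction is sound in both write-ups: if $f$ is an $n$-cycle permutation, then $g$ is an $n$-cycle permutation of $\mu_2$ by Theorem \ref{criterion}, hence the identity or (for $n$ even) the transposition, and Theorem \ref{mulcoren} then yields the corresponding alternative of condition (2).

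However, there is a genuine gap in the sufficiency direction, and your write-up shares it with the paper. The two alternatives of condition (2) are statements about $a$ and $b$ alone, while the quantity $\varphi$ in Theorem \ref{mulcoren} depends on what $g$ actually is; your sentence ``these two possibilities correspond to the two alternatives in condition (2)'' is precisely the unproved step. Starting from (1) and, say, the first alternative, nothing forces $g$ to be the identity; indeed $g(1)=a^{s}$ and $g(-1)=(-1)^{r}b^{s}$ need not even be distinct, so $g$ need not be a permutation at all. This is not a presentational issue: the stated equivalence actually fails. Take $q=17$, $s=8$, $n=2$, $r=15$, $a=3$, $b=5$. Then $r^{2}=225\equiv 1 \pmod 8$, $(r^{2}-1)/s=28$ is even, and $a^{16}=b^{16}=1$, so condition (1) and the first alternative hold; but $a^{8}=b^{8}=-1$, so $g$ is the transposition, and a direct computation gives $f(1)=3$, $f(3)=13$, hence $f^{(2)}(1)=13\neq 1$: here $f$ is a permutation polynomial but not an involution. (Similarly, $q=13$, $r=7$, $n=2$, $a=1$, $b=5$ satisfies (1) and the first alternative while $g$ is the constant map $1$, and $f$ is not even a permutation.) To close the gap one must tie each alternative in (2) to the matching behaviour of $g$, e.g.\ require additionally $a^{s}=1$ and $(-1)^{r}b^{s}=-1$ in the first alternative and $a^{s}=-1$, $(-1)^{r}b^{s}=1$ in the second; as written, both your proof and the paper's silently assume this matching.
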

\begin{proof}
As we know,	$h(x)=\frac{a-b}{2} x^{\frac{q-1}{2}}+\frac{a+b}{2}$,  and $ a=h(1), b=h(-1)  $.
	Then, $f(x)$ can be rewritten as
	\begin{equation}
		\label{twoelements}
		f(x)=x^{r} h\left(x^{s}\right)=\left\{
		\begin{array}{ll}
			0, & x=0 \\
			a x^{r}, & x \in S_{1} \\
			b x^{r}, & x \in S_{-1} \\
		\end{array}\right.
	\end{equation}
	where $ S_{\pm 1}=\left\{y \in \mathbb{F}_{q}^* | y^{\frac{q-1}{3}}= \pm 1 \right\}$.
	
	If $ n  $ is odd.
	When $ f $ is an $ n $-cycle permutation on $ \gf_q $, one can obtain that $ g(x)=x $, since $ g $ must be an $ n $-cycle permutation on $ \mu_2 $.
	By plugging them into Theorem \ref{mulcoren}, we have
	$$\varphi(1) = \prod\nolimits_{{i = 0}}^{n-1} {   h\left(1 \right)^{r^{n-i-1}}      }   = a^{\sum\nolimits_{{i = 0}}^{n-1} {   r^i      }}, $$
	$$	\varphi(-1)={(-1)}^{(r^n-1)/s}       \prod\nolimits_{{i = 0}}^{n-1} {   h\left(   -1 \right)^{r^{n-i-1}}      }    	= {(-1)}^{(r^n-1)/s}   b^{\sum\nolimits_{{i = 0}}^{n-1} {   r^i      }}. $$
	
	If $ n $ is even.
	It suffices to consider that $ g $ is an involution on $ \mu_2 $, i.e., $ g(1)=-1, g(-1)=1 $.
	By plugging them into Theorem \ref{mulcoren}, one can obtain that
	$$\varphi(1) =   \prod\nolimits_{{k=1}}^{n/2} {   h\left(1 \right)^{r^{n-2k+1}}      }     h\left(  -1 \right)^{r^{n-2k}}        = a^{\sum\nolimits_{{k=1}}^{n/2} {   r^{2k-1}      } }   b^{\sum\nolimits_{{k=1}}^{n/2} {   r^{ 2k-2 }      }}  , $$   
	$$	\varphi(-1)={(-1)}^{(r^n-1)/s}   \prod\nolimits_{{k=1}}^{n/2} {   h\left(-1 \right)^{r^{n-2k+1}}      }    h\left(  1 \right)^{r^{n-2k}}     	= {(-1)}^{(r^n-1)/s}    a^{\sum\nolimits_{{k=1}}^{n/2} {   r^{ 2k-2 }      }}  b^{\sum\nolimits_{{k=1}}^{n/2} {   r^{2k-1}      } }     . $$

	According to Theorem \ref{mulcoren},  the result is established.
\end{proof}

A consequence of Theorem \ref{miu2} is the following.
\begin{Cor}
	\label{3miu2}
	Let $q$ be an odd prime power,
	and $$ f(x)=\frac{a-b}{2} x^{\frac{q-1}{2}+r}+\frac{a+b}{2} x^{r} \in \gf_q[x],$$
	where $ a,b \in \gf_q  $.
	Assume that $  s=\frac{q-1}{2}  $.
	Then $f$ is a triple-cycle permutation over $\mathbb{F}_{q}$ if and only if
	\begin{enumerate}[(1)]
		\item $r^{3} \equiv 1 \bmod s$,
		\item $ a^{r^2+r+1}  = {(-1)}^{(r^3-1)/s}   b^{r^2+r+1}    =1 $.
	\end{enumerate}
\end{Cor}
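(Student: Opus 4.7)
The plan is to derive this corollary directly by specializing Theorem~\ref{miu2} to the case $n = 3$. First I would verify that condition (1) transfers verbatim: the requirement $r^n \equiv 1 \bmod s$ becomes $r^3 \equiv 1 \bmod s$ with no modification needed.

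Next I would handle condition (2). Since $n = 3$ is odd, only the first branch of the disjunction in Theorem~\ref{miu2}(2) is applicable; the second branch is explicitly gated on $n$ being even and can therefore be discarded. The remaining branch states
\[
a^{\sum_{i=0}^{n-1} r^i} = (-1)^{(r^n-1)/s} \, b^{\sum_{i=0}^{n-1} r^i} = 1.
\]
With $n = 3$ the exponent sum collapses to $\sum_{i=0}^{2} r^i = r^2 + r + 1$, and the sign exponent becomes $(r^3-1)/s$, yielding exactly
\[
a^{r^2+r+1} = (-1)^{(r^3-1)/s} \, b^{r^2+r+1} = 1,
\]
which is condition (2) of the corollary.

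There is no real obstacle here, since the corollary is literally the $n=3$ instance of Theorem~\ref{miu2}; the only conceptual point worth flagging is that the even-$n$ alternative in Theorem~\ref{miu2} arises from the possibility that $g$ acts as an involution on $\mu_2$ (swapping $1$ and $-1$), which is incompatible with $g^{(3)} = I$ when $n = 3$ unless $g$ is already the identity. Thus for odd $n$ only the identity case on $\mu_2$ survives, which is exactly why only the first branch contributes.
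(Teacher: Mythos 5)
Your proposal is correct and matches the paper's approach exactly: the paper presents Corollary~\ref{3miu2} as an immediate specialization of Theorem~\ref{miu2} to $n=3$ with no separate proof, the even-$n$ branch being vacuous for odd $n$. Your closing observation---that an $n$-cycle permutation on $\mu_2$ with $n$ odd must be the identity, so the involution branch cannot contribute---is precisely the remark the paper itself makes when explaining how this corollary condenses the three-case analysis of Wu et al.\ into a single condition.
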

%
Very recently, Wu et al. \cite[Theorem 2]{wuCharacterizationsConstructionsTriplecycle2020a} also considered such binomial triple-cycle permutations.
By applying Corollary \ref{mulcore}, they discussed the results in three cases.
However, we noticed that an $ n $-cycle permutation on $ \mu_2 $ with odd $ n $ can only be the identity map.
Hence, we simplify all cases in \cite[Theorem 2]{wuCharacterizationsConstructionsTriplecycle2020a} into only one situation and obtain a concise condition.
We also have an explicit construction for Corollary \ref{3miu2}, as an example.
\begin{example}
	Let $r=1$, $ a \in \gf_q \setminus \{1\}$ and $ b= a^2$ such that $ a^{3}  =1 $.
	Then all the conditions in Corollary \ref{3miu2} are satisfied.
	Thus $ f(x)=\frac{a-a^2}{2} x^{\frac{q-1}{2}+1}+\frac{a+a^2}{2} x $ is a triple-cycle permutation over $ \gf_q $.
\end{example}
Another consequence of Theorem \ref{miu2} is the following.
\begin{Cor}
	\label{4miu2}
	Let $q$ be an odd prime power, and
	 $$ f(x)=\frac{a-b}{2} x^{\frac{q-1}{2}+r}+\frac{a+b}{2} x^{r} \in \gf_q[x],$$
	where $ a,b \in \gf_q  $.
	Assume that $ s=\frac{q-1}{2}  $.
	Then $f$ is a quadruple-cycle permutation over $\mathbb{F}_{q}$ if and only if
	\begin{enumerate}[(1)]
		\item $r^{4} \equiv 1 \bmod s$,
		\item  $ a^{   r^3+r^2+r+1   }={(-1)}^{(r^4-1)/s}   b^{   r^3+r^2+r+1   }=1 $ or \\
		$ a^{r^3+r}    b^{   r^2+1   }   ={(-1)}^{(r^4-1)/s}       a^{   r^2+1   } b^{r^3+r}  =1   .$
	\end{enumerate}
\end{Cor}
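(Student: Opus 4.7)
The plan is to obtain the result by specializing Theorem \ref{miu2} to $n=4$. The ambient setup---the binomial shape of $f$, the choice $s=(q-1)/2$, and the identifications $a=h(1)$ and $b=h(-1)$---is inherited verbatim, so the only work is to unfold conditions (1) and (2) of that theorem. Condition (1) specializes immediately: setting $n=4$ gives $r^4 \equiv 1 \bmod s$, which is condition (1) of the corollary.

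For condition (2), one notes that $n=4$ is even, so both branches of the disjunction in Theorem \ref{miu2}(2) are active, corresponding to the two possible choices for the induced cyclotomic map $g$ on $\mu_\ell=\{1,-1\}$. In the identity branch ($g(x)=x$), the exponent $\sum_{i=0}^{n-1} r^i$ becomes $r^3+r^2+r+1$, so condition (2) reads
\[
a^{r^3+r^2+r+1} = (-1)^{(r^4-1)/s}\, b^{r^3+r^2+r+1} = 1,
\]
matching the first alternative of the corollary. In the involution branch ($g$ swaps $1\leftrightarrow -1$), one has $n/2=2$, so $\sum_{k=1}^{2} r^{2k-1} = r+r^3$ and $\sum_{k=1}^{2} r^{2k-2} = 1+r^2$; substituting into Theorem \ref{miu2}(2) yields
\[
a^{r^3+r}\, b^{r^2+1} = (-1)^{(r^4-1)/s}\, a^{r^2+1}\, b^{r^3+r} = 1,
\]
which is the second alternative.

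Finally, one observes that these two branches exhaust all relevant possibilities, since any permutation of the two-element set $\{1,-1\}$ is either the identity or the swap, and both have order dividing $4$. Hence condition (2) of the corollary matches the disjunction produced by Theorem \ref{miu2}, completing the argument. No substantial obstacle is expected: the proof is a direct specialization plus a simplification of the two geometric sums, and the only point worth checking carefully is that the even-$n$ formulas in Theorem \ref{miu2} are indexed so that the roles of $a$ and $b$ alternate correctly in the two exponent sums, which is exactly what the swap pattern of $g$ on $\{1,-1\}$ dictates.
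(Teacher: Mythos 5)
Your proposal is correct and follows exactly the route the paper intends: the corollary is stated there as an immediate consequence of Theorem \ref{miu2}, and your specialization to $n=4$ (unfolding the geometric sums $\sum_{i=0}^{3}r^i = r^3+r^2+r+1$ in the identity branch and $\sum_{k=1}^{2}r^{2k-1}=r^3+r$, $\sum_{k=1}^{2}r^{2k-2}=r^2+1$ in the swap branch) is precisely that argument. Your added remark that the identity and the swap exhaust the permutations of $\mu_2$ and both have order dividing $4$ is a correct and worthwhile clarification of why the disjunction in Theorem \ref{miu2}(2) covers all cases when $n$ is even.
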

\begin{example}
	Let $r=1$, $ ab \in \gf_q \setminus \{1\}$ such that $ a^2b^2  =1 $.
	Then all the conditions in Corollary \ref{4miu2} are satisfied.
	Thus $ f(x)=\frac{a-a^2}{2} x^{\frac{q-1}{2}+1}+\frac{a+a^2}{2} x $ is a quadruple-cycle permutation over $ \gf_q $.
\end{example}

Now we discuss the case that the amount of $\mu_{\ell}$ is exactly  three.

\begin{Th}
	\label{nmiu3}
	Let $q$ be a prime power such that $s=\frac{q-1}{3}$ is an integer, and $ \alpha $ be a primitive element of $ \gf_q $.
	Let $$ h(x)=\frac{b-c-a \omega^{2}+b \omega^{2}}{(-1+\omega)^{2} \omega} x^{2}+     \frac{c+a \omega-b(1+\omega)}{(-1+\omega)^{2} \omega(1+\omega)}x+     \frac{c+a \omega^{3}-b \omega(1+\omega)}{(-1+\omega)^{2}(1+\omega)}    ,$$
	where $ a,b,c \in \gf_q  $, $\omega=\alpha^{\frac{q-1}{3}}$.
	Assume that $ g(x)=x^rh(x)^s=x $ is on $ \mu_{3} $. If
	\begin{enumerate}[(1)]
		\item $r^{n} \equiv 1 \bmod s$ and
		\item $a^{\sum\nolimits_{{i = 0}}^{n-1} {   r^i      }}     =   \omega^{(r^n-1)/s} b^{\sum\nolimits_{{i = 0}}^{n-1} {   r^i      }}  = \omega^{2(r^n-1)/s}  c^{\sum\nolimits_{{i = 0}}^{n-1} {   r^i      }}   $,
	\end{enumerate}
	then $f(x)=x^rh(x^s)$ is an $ n $-cycle permutation over $\mathbb{F}_{q}$.
\end{Th}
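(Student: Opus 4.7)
The plan is to derive the conclusion as a direct consequence of Theorem \ref{mulcoren}. Here $\ell = 3$ and $\omega = \alpha^{(q-1)/3}$ is a primitive cube root of unity in $\gf_q$, so $\mu_3 = \{1, \omega, \omega^2\}$ consists of exactly three elements. Checking that $f(x) = x^r h(x^s)$ is an $n$-cycle permutation therefore reduces to verifying (i) $r^n \equiv 1 \pmod s$, which is hypothesis (1), and (ii) $\varphi(y) = 1$ for each $y \in \{1, \omega, \omega^2\}$.

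The key preliminary is to establish the three Lagrange interpolation identities
\[
h(1) = a, \qquad h(\omega) = b, \qquad h(\omega^2) = c.
\]
Since the displayed $h(x)$ has degree at most $2$ and the three points $1, \omega, \omega^2$ are distinct, the polynomial interpolating $(1,a), (\omega,b), (\omega^2,c)$ is uniquely determined. I would expand the standard Lagrange interpolant $aL_0(x) + bL_1(x) + cL_2(x)$, where $L_i(x) = \prod_{j \ne i}(x - \omega^j)/(\omega^i - \omega^j)$, and match the coefficients of $x^2$, $x$, and the constant term against the closed form in the statement, using $1 + \omega + \omega^2 = 0$ and $\omega^3 = 1$ throughout. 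This bookkeeping is the main technical obstacle, but it is routine; an equally valid alternative is to substitute $x = 1, \omega, \omega^2$ directly into the given $h(x)$ and simplify.

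With the three interpolation identities in hand and the hypothesis that $g$ restricts to the identity on $\mu_3$, we have $g^{(i)}(y) = y$ for every $i \ge 0$ and every $y \in \mu_3$. Specializing the definition of $\varphi$ from Theorem \ref{mulcoren} therefore yields
\[
\varphi(y) \;=\; y^{(r^n-1)/s} \prod_{i=0}^{n-1} h(y)^{r^{n-i-1}} \;=\; y^{(r^n-1)/s}\, h(y)^{\sum_{i=0}^{n-1} r^i}.
\]
Plugging in $y = 1, \omega, \omega^2$ and applying $h(1) = a$, $h(\omega) = b$, $h(\omega^2) = c$ produces exactly the three quantities in the equality chain of hypothesis (2), whose common value is $1$; thus $\varphi \equiv 1$ on $\mu_3$. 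Combined with hypothesis (1) (which automatically forces $\gcd(r, s) = 1$), Theorem \ref{mulcoren} delivers the conclusion that $f$ is an $n$-cycle permutation of $\gf_q$, completing the argument.
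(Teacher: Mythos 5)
Your proposal is correct and is essentially the paper's own proof: establish $h(1)=a$, $h(\omega)=b$, $h(\omega^2)=c$ by interpolation, then specialize Theorem \ref{mulcoren} with $g$ the identity, so that $\varphi(y)=y^{(r^n-1)/s}\,h(y)^{\sum_{i=0}^{n-1}r^i}$ and hypothesis (2) forces $\varphi\equiv 1$ on $\mu_3$. One caveat that applies equally to the paper: condition (2) as printed lacks a final ``$=1$'', and your phrase ``whose common value is $1$'' silently supplies it (exactly as the paper's proof does when it inserts ``$=1$'' before invoking Theorem \ref{mulcoren}); without that clause the statement is actually false --- e.g.\ $q=7$, $r=1$, $n=3$, $a=b=c=-1$ satisfies (1), (2) and $g(x)=x$, yet gives $f(x)=-x$, an involution rather than a triple-cycle permutation.
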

\begin{proof}
It is easy to derive that	$ a=h(1), b=h(\omega), c=h(\omega^2)  $.
	Then, $f(x)$ can be rewritten as
	\begin{equation}
		\label{}
		f(x)=x^{r} h\left(x^{s}\right)=\left\{
		\begin{array}{ll}
			0, & x=0 \\
			h(1) x^{r}, & x \in S_{1} \\
			h(\omega) x^{r}, & x \in S_{\omega} \\
			h(\omega^2) x^{r}, & x \in S_{\omega^2},
		\end{array}\right.
	\end{equation}
	where $ S_{i}=\left\{y \in \mathbb{F}_{q}^* | y^{\frac{q-1}{3}}=i \right\} $ for $ i=1,\omega,\omega^2 $.
	
	Plugging $ g(x)=x $ into Theorem \ref{mulcoren}, one can obtain that
	$$a^{\sum\nolimits_{{i = 0}}^{n-1} {   r^i      }}     =   \omega^{(r^n-1)/s} b^{\sum\nolimits_{{i = 0}}^{n-1} {   r^i      }}  = \omega^{2(r^n-1)/s}  c^{\sum\nolimits_{{i = 0}}^{n-1} {   r^i      }} =1,  $$
	which is equivalent to 
	$$\varphi(1) = \prod\nolimits_{{i = 0}}^{n-1} {   h\left(1 \right)^{r^{n-i-1}}      }  =1 , $$
	$$\varphi(\omega)=\omega^{(r^n-1)/s}         \prod\nolimits_{{i = 0}}^{n-1} {   h\left(\omega \right)^{r^{n-i-1}}      }   =1 ,$$
	$$\varphi(\omega^2)=\omega^{2(r^n-1)/s} \prod\nolimits_{{i = 0}}^{n-1} {   h\left(\omega^2 \right)^{r^{n-i-1}}      }   =1 .$$
	Thus the result is obtained.
\end{proof}
We also have an explicit construction of quadruple-cycle permutation as a consequence of Theorem \ref{nmiu3}.
\begin{example}
	Let $ n=4 $ and $ r=1 $ in Theorem \ref{nmiu3}.
	Let $q$ be a prime power such that $s=\frac{q-1}{3}$ is an integer, and $ \beta $ be a primitive root of $ 1 $.
	Assume that $ a=\beta,b=\beta^2,c=\beta^3 $.
	Then, all the conditions in Theorem \ref{nmiu3} are satisfied, and
	$$ f(x)=    \frac{b-c-a \omega^{2}+b \omega^{2}}{(-1+\omega)^{2} \omega} x^{\frac{2q-2}{3}+1}+     \frac{c+a \omega-b(1+\omega)}{(-1+\omega)^{2} \omega(1+\omega)}x^{\frac{q-1}{3}+1}+     \frac{c+a \omega^{3}-b \omega(1+\omega)}{(-1+\omega)^{2}(1+\omega)} x       $$
	is a quadruple-cycle permutation over $ \gf_q $.
\end{example}

For such trinomial in Theorem \ref{nmiu3}, we also give an explicit necessary and sufficient condition for it to be a triple-cycle permutation.
\begin{Th}
	\label{miu3}
	Let $q$ be a prime power such that $s=\frac{q-1}{3}$ is an integer, and $ \alpha $ be a primitive element of $ \gf_q $.
	Let $$ h(x)=\frac{b-c-a \omega^{2}+b \omega^{2}}{(-1+\omega)^{2} \omega} x^{2}+     \frac{c+a \omega-b(1+\omega)}{(-1+\omega)^{2} \omega(1+\omega)}x+     \frac{c+a \omega^{3}-b \omega(1+\omega)}{(-1+\omega)^{2}(1+\omega)}    ,$$
	where $ a,b,c \in \gf_q  $, $\omega=\alpha^{\frac{q-1}{3}}$.
	Assume that $ g(x)=x^rh(x)^s $ is on $ \mu_{3} $.
	Then $f(x)=x^rh(x^s)$ is a triple-cycle permutation over $\mathbb{F}_{q}$ if and only if
	\begin{enumerate}[(1)]
		\item $r^{3} \equiv 1 \bmod s$ and
		\item $ a^{r^2}b^r   c  =\omega^{(r^3-1)/s}b^{r^2}c^r  a=\omega^{2(r^3-1)/s}c^{r^2} a^r b =1$
		or \\ $a^{r^2+r+1}=\omega^{(r^3-1)/s}b^{r^2+r+1}=\omega^{2(r^3-1)/s}c^{r^2+r+1}=1$.
	\end{enumerate}
\end{Th}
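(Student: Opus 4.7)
The plan is to apply Corollary~\ref{mulcore} directly. First I would verify by Lagrange interpolation on the three points $1,\omega,\omega^{2}\in\mu_{3}$ that the specific $h(x)$ defined in the statement satisfies $h(1)=a$, $h(\omega)=b$, $h(\omega^{2})=c$; this is a routine check once one writes out the standard Lagrange basis for the node set $\{1,\omega,\omega^{2}\}$ and collects terms. This reduces the problem to analyzing $f$ through its values $a,b,c$ at the three cosets $S_{1},S_{\omega},S_{\omega^{2}}$, exactly as in the piecewise form used in Theorem~\ref{nmiu3}.

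Next, by Corollary~\ref{mulcore}, $f$ is a triple-cycle permutation if and only if $r^{3}\equiv 1\pmod{s}$ and $\varphi(y)=y^{(r^{3}-1)/s}h(y)^{r^{2}}h(g(y))^{r}h(g^{(2)}(y))=1$ for every $y\in\mu_{3}$, where $g(x)=x^{r}h(x)^{s}$. Corollary~\ref{criterionmul} forces $g$ to be a triple-cycle permutation of the three-element set $\mu_{3}$, and by Proposition~\ref{nandc} the only such permutations have cycle lengths dividing $3$. Hence $g$ is either the identity on $\mu_{3}$ or a full $3$-cycle. This dichotomy is what produces the two alternative sets of conditions in the statement.

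I would then carry out the two cases in parallel. In the identity case, $g^{(i)}(y)=y$ for all $i$, so $\varphi(1)=a^{r^{2}+r+1}$, $\varphi(\omega)=\omega^{(r^{3}-1)/s}b^{r^{2}+r+1}$, $\varphi(\omega^{2})=\omega^{2(r^{3}-1)/s}c^{r^{2}+r+1}$, and setting these to $1$ yields the second alternative. In the $3$-cycle case, taking (without loss of generality, up to swapping the roles of $b$ and $c$) $g(1)=\omega$, $g(\omega)=\omega^{2}$, $g(\omega^{2})=1$, the iterates $y,g(y),g^{(2)}(y)$ cyclically permute $\{1,\omega,\omega^{2}\}$, so substituting $h(1)=a$, $h(\omega)=b$, $h(\omega^{2})=c$ gives
\begin{equation*}
\varphi(1)=a^{r^{2}}b^{r}c,\qquad \varphi(\omega)=\omega^{(r^{3}-1)/s}b^{r^{2}}c^{r}a,\qquad \varphi(\omega^{2})=\omega^{2(r^{3}-1)/s}c^{r^{2}}a^{r}b,
\end{equation*}
and these being equal to $1$ yields the first alternative.

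The step I expect to be the main obstacle is not a computation but a bookkeeping point: one must explain why the single cycle $(1\,\omega\,\omega^{2})$ suffices and the reverse cycle $(1\,\omega^{2}\,\omega)$ need not be listed separately. The cleanest way is to observe that the conditions in the first alternative are invariant under the simultaneous relabeling $(a,b,c)\mapsto(a,c,b)$ together with the inverse choice of generator of $\mu_{3}$, so the reverse cycle is already covered by the same system once one uses both $\omega$ and $\omega^{2}$ as valid primitive cube roots of unity in the setup. Once this symmetry is noted, the direct implication follows from Corollary~\ref{mulcore}, and the converse follows by plugging the given equalities back into $\varphi$ and recovering $\varphi(y)=1$ on $\mu_{3}$, completing both directions.
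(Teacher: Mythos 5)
Your proposal follows the paper's own proof essentially step for step: read off the values $a=h(1)$, $b=h(\omega)$, $c=h(\omega^{2})$, pass to the piecewise cyclotomic form, apply Corollary \ref{mulcore}, and split into the cases where $g$ is the identity or a $3$-cycle on $\mu_{3}$. The one point where you go beyond the paper is that you try to justify the step the paper dismisses with ``it suffices to assume $g(1)=\omega$, $g(\omega)=\omega^{2}$, $g(\omega^{2})=1$'', and your justification does not work. The relabelling $(a,b,c,\omega)\mapsto(a,c,b,\omega^{2})$ does \emph{not} leave the first alternative invariant; it carries it to the different system
\begin{equation*}
a^{r^{2}}c^{r}b=\omega^{(r^{3}-1)/s}\,b^{r^{2}}a^{r}c=\omega^{2(r^{3}-1)/s}\,c^{r^{2}}b^{r}a=1,
\end{equation*}
which is what $\varphi\equiv1$ actually says when $g$ is the reverse cycle $1\mapsto\omega^{2}\mapsto\omega\mapsto1$. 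Since the theorem fixes $\omega=\alpha^{\frac{q-1}{3}}$ once and for all, both inside the formula defining $h$ and inside conditions (2), you are not free to switch to the other generator in mid-proof: that changes the statement being proved, not just the bookkeeping.

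The gap is fatal, because the reverse-cycle case really does occur and produces conditions not covered by (2). Take $q=7$, $\alpha=3$ (so $\omega=2$, $s=2$), $r=3$, and let $h$ be the quadratic with $h(1)=2$, $h(\omega)=6$, $h(\omega^{2})=3$, namely $h(x)=4x^{2}+6x+6$, so that $(a,b,c)=(2,6,3)$ under the reading used in both your proof and the paper's. Then $g(1)=a^{s}=4=\omega^{2}$, $g(\omega)=\omega^{3}b^{s}=1$, $g(\omega^{2})=\omega^{6}c^{s}=2=\omega$, so $g$ is the reverse cycle, and $f(x)=x^{3}h(x^{2})$ acts on $\gf_{7}$ as the permutation $(1\,2\,3)(4\,6\,5)$, hence is a triple-cycle permutation; yet $a^{r^{2}}b^{r}c=2^{9}\cdot6^{3}\cdot3\equiv4\pmod 7$ and $a^{r^{2}+r+1}=2^{13}\equiv2\pmod 7$, so both alternatives in (2) fail, while the reverse-cycle system above is exactly what holds. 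So neither your argument nor the paper's establishes the stated equivalence; a correct statement must include the reverse-cycle system as a third alternative (or be reformulated symmetrically in $\omega\leftrightarrow\omega^{2}$), and a correct proof must treat that case. Two smaller remarks: the same case-bookkeeping problem also threatens sufficiency, since conditions (2) alone do not determine which case $g$ falls into; and the ``routine check'' you defer would in fact fail, because the displayed formula for $h$ satisfies $h(\omega)=c$ and $h(\omega^{2})=b$ (verify using $\omega^{2}+\omega+1=0$), evidently a typo in the paper that its own proof also overlooks.
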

\begin{proof}
	It is easy to get that	$ a=h(1), b=h(\omega), c=h(\omega^2)  $.
	Then, $f(x)$ can be rewritten as
\begin{equation}
	\label{threeelements}
	f(x)=x^{r} h\left(x^{s}\right)=\left\{
	\begin{array}{ll}
		0, & x=0 \\
		h(1) x^{r}, & x \in S_{1} \\
		h(\omega) x^{r}, & x \in S_{\omega} \\
		h(\omega^2) x^{r}, & x \in S_{\omega^2},
	\end{array}\right.
\end{equation}
where $ S_{i}=\left\{y \in \mathbb{F}_{q}^* | y^{\frac{q-1}{3}}=i \right\} $ for $ i=1,\omega,\omega^2 $.

	If $ g(x)=x $, plugging it into Corollary \ref{mulcore}, one can obtain that
$$\varphi(1)=h(1)^{r^2}h\left(  1 \right) ^rh\left( 1\right)  =1 $$
$$\varphi(\omega)=\omega^{(r^3-1)/s}h(\omega)^{r^2}h\left( \omega  \right)  ^rh\left( \omega  \right)  =1 $$
$$\varphi(\omega^2)=\omega^{2(r^3-1)/s}h(\omega^2)^{r^2}h\left( \omega^2 \right)  ^rh\left(  \omega^2 \right)  =1 ,$$
which is equivalent to  $a^{r^2+r+1}=\omega^{(r^3-1)/s}b^{r^2+r+1}=\omega^{2(r^3-1)/s}c^{r^2+r+1}=1$.

	If $ g $ is not the identity map, as it must be a triple-cycle permutation on $ \mu_{3} $,  it suffices to assume that $ g(1)=\omega, g(\omega)=\omega^2, g(\omega^2)=1 $.
	By plugging them into Corollary \ref{mulcore}, we have
	$$\varphi(1)=h(1)^{r^2}h\left( \omega \right)  ^r   h\left( \omega^2   \right) =a^{r^2}b^r   c  $$
	$$\varphi(\omega)=\omega^{(r^3-1)/s}h(\omega)^{r^2}h\left( \omega^2 \right) ^rh\left( 1 \right)  = \omega^{(r^3-1)/s}b^{r^2}c^r  a$$
	$$\varphi(\omega^2)=\omega^{2(r^3-1)/s}h(\omega^2)^{r^2}h\left( 1\right)  ^rh\left(    \omega \right)  =\omega^{2(r^3-1)/s}c^{r^2} a^r b  $$
	According to Corollary \ref{mulcore},  $f$ is a triple-cycle permutation over $\mathbb{F}_{q}$ if and only if
		\begin{enumerate}[(1)]
		\item $r^{3} \equiv 1 \bmod s$ and
		\item  $ a^{r^2}b^r   c  =\omega^{(r^3-1)/s}b^{r^2}c^r  a=\omega^{2(r^3-1)/s}c^{r^2} a^r b =1$
		or \\ $a^{r^2+r+1}=\omega^{(r^3-1)/s}b^{r^2+r+1}=\omega^{2(r^3-1)/s}c^{r^2+r+1}=1$.
	\end{enumerate}
\end{proof}

We also have an explicit construction in this case.
\begin{example}
	Let $r=1$, $ a ,b,c $ be any elements of $\gf_q^* $ such that $ abc =1 $.
	Then, all the conditions in Theorem \ref{miu3} are satisfied.
	Thus $$ f(x)=    \frac{b-c-a \omega^{2}+b \omega^{2}}{(-1+\omega)^{2} \omega} x^{\frac{2q-2}{3}+1}+     \frac{c+a \omega-b(1+\omega)}{(-1+\omega)^{2} \omega(1+\omega)}x^{\frac{q-1}{3}+1}+     \frac{c+a \omega^{3}-b \omega(1+\omega)}{(-1+\omega)^{2}(1+\omega)} x       $$ is a triple-cycle permutation over $ \gf_q $.
\end{example}

\section{CONCLUSIONS}
\label{cons}

In this paper, we studied the properties and constructions of $n$-cycle permutations.
We proposed unified methods for determining and constructing $ n $-cycle permutations of the form $x^rh(x^s)$, including a criterion, a reductive constructing approach and the cyclotomic approach.
Five classes of $ n $-cycle permutations are constructed explicitly, and some of them are triple-cycle permutations.
It is interesting to use the methods in this paper to obtain more  $ n $-cycle permutations of other forms in the future.

\bibliographystyle{plainnat}  

\begin{thebibliography}{42}
\providecommand{\natexlab}[1]{#1}
\providecommand{\url}[1]{\texttt{#1}}
\expandafter\ifx\csname urlstyle\endcsname\relax
  \providecommand{\doi}[1]{doi: #1}\else
  \providecommand{\doi}{doi: \begingroup \urlstyle{rm}\Url}\fi

\bibitem[Akbary et~al.(2009)Akbary, Ghioca, and Wang]{akbary2009permutation}
Amir Akbary, Dragos Ghioca, and Qiang Wang.
\newblock On permutation polynomials of prescribed shape.
\newblock \emph{Finite Fields and Their Applications}, 15\penalty0
  (2):\penalty0 195--206, 2009.

\bibitem[Akbary et~al.(2011)Akbary, Ghioca, and Wang]{akbary2011constructing}
Amir Akbary, Dragos Ghioca, and Qiang Wang.
\newblock On constructing permutations of finite fields.
\newblock \emph{Finite Fields and Their Applications}, 17\penalty0
  (1):\penalty0 51--67, 2011.

\bibitem[Barreto and Rijmen(2000{\natexlab{a}})]{barreto2000anubis}
P~Barreto and V~Rijmen.
\newblock The anubis block cipher. submission to the nessie project,
  2000{\natexlab{a}}.

\bibitem[Barreto and Rijmen(2000{\natexlab{b}})]{barreto2000khazad}
PSLM Barreto and Vincent Rijmen.
\newblock The khazad legacy-level block cipher.
\newblock \emph{Primitive submitted to NESSIE}, 97, 2000{\natexlab{b}}.

\bibitem[Borghoff et~al.(2012)Borghoff, Canteaut, G{\"u}neysu, Kavun, Knezevic,
  Knudsen, Leander, Nikov, Paar, Rechberger, et~al.]{borghoff2012prince}
Julia Borghoff, Anne Canteaut, Tim G{\"u}neysu, Elif~Bilge Kavun, Miroslav
  Knezevic, Lars~R Knudsen, Gregor Leander, Ventzislav Nikov, Christof Paar,
  Christian Rechberger, et~al.
\newblock Prince--a low-latency block cipher for pervasive computing
  applications.
\newblock In \emph{International Conference on the Theory and Application of
  Cryptology and Information Security}, pages 208--225. Springer, 2012.

\bibitem[Canteaut and Rou{\'e}(2015)]{canteaut2015behaviors}
Anne Canteaut and Jo{\"e}lle Rou{\'e}.
\newblock On the behaviors of affine equivalent {S}-boxes regarding
  differential and linear attacks.
\newblock In \emph{Annual International Conference on the Theory and
  Applications of Cryptographic Techniques}, pages 45--74. Springer, 2015.

\bibitem[Cao et~al.(2014)Cao, Hu, and
  Zha]{caoConstructingPermutationPolynomials2014}
Xiwang Cao, Lei Hu, and Zhengbang Zha.
\newblock Constructing permutation polynomials from piecewise permutations.
\newblock \emph{Finite Fields and Their Applications}, 26:\penalty0 162--174,
  2014.

\bibitem[Cepak et~al.(2017)Cepak, Charpin, and Pasalic]{cepak2017permutations}
Nastja Cepak, Pascale Charpin, and Enes Pasalic.
\newblock Permutations via linear translators.
\newblock \emph{Finite Fields and Their Applications}, 45:\penalty0 19--42,
  2017.

\bibitem[Charpin et~al.(2016)Charpin, Mesnager, and
  Sarkar]{charpin2016involutions}
Pascale Charpin, Sihem Mesnager, and Sumanta Sarkar.
\newblock Involutions over the galois field $ \gf_{2^n} $.
\newblock \emph{IEEE Transactions on Information Theory}, 62\penalty0
  (4):\penalty0 2266--2276, 2016.

\bibitem[Coulter and Mesnager(2018)]{coulter2018bent}
Robert~S. Coulter and Sihem Mesnager.
\newblock Bent functions from involutions over $\gf_{2^{n}}$.
\newblock \emph{IEEE Transactions on Information Theory}, 64\penalty0
  (4):\penalty0 2979--2986, 2018.

\bibitem[Daemen and Rijmen(2013)]{daemen2013design}
Joan Daemen and Vincent Rijmen.
\newblock \emph{The design of Rijndael: AES-the advanced encryption standard}.
\newblock Springer Science \& Business Media, 2013.

\bibitem[Ding et~al.(2015)Ding, Qu, Wang, Yuan, and Yuan]{ding2015permutation}
Cunsheng Ding, Longjiang Qu, Qiang Wang, Jin Yuan, and Pingzhi Yuan.
\newblock Permutation trinomials over finite fields with even characteristic.
\newblock \emph{SIAM Journal on Discrete Mathematics}, 29\penalty0
  (1):\penalty0 79--92, 2015.

\bibitem[Gallager(1962)]{gallager1962low}
Robert Gallager.
\newblock Low-density parity-check codes.
\newblock \emph{IRE Transactions on information theory}, 8\penalty0
  (1):\penalty0 21--28, 1962.

\bibitem[Gupta and Sharma(2016)]{gupta2016some}
Rohit Gupta and RK~Sharma.
\newblock Some new classes of permutation trinomials over finite fields with
  even characteristic.
\newblock \emph{Finite Fields and Their Applications}, 41:\penalty0 89--96,
  2016.

\bibitem[Hou(2015)]{hou2015permutation}
Xiang-dong Hou.
\newblock Permutation polynomials over finite fields—a survey of recent
  advances.
\newblock \emph{Finite Fields and Their Applications}, 32:\penalty0 82--119,
  2015.

\bibitem[Li et~al.(2017)Li, Qu, and Chen]{li2017new}
Kangquan Li, Longjiang Qu, and Xi~Chen.
\newblock New classes of permutation binomials and permutation trinomials over
  finite fields.
\newblock \emph{Finite Fields and Their Applications}, 43:\penalty0 69--85,
  2017.

\bibitem[Li et~al.(2018)Li, Qu, Li, and Fu]{li2018newp}
Kangquan Li, Longjiang Qu, Chao Li, and Shaojing Fu.
\newblock New permutation trinomials constructed from fractional polynomials.
\newblock \emph{Acta Arithmetica}, 183:\penalty0 101--116, 2018.

\bibitem[Li et~al.(2019{\natexlab{a}})Li, Qu, Sun, and Li]{li2019newBCT}
Kangquan Li, Longjiang Qu, Bing Sun, and Chao Li.
\newblock New results about the boomerang uniformity of permutation
  polynomials.
\newblock \emph{IEEE Transactions on Information Theory}, 65\penalty0
  (11):\penalty0 7542--7553, 2019{\natexlab{a}}.

\bibitem[Li et~al.(2019{\natexlab{b}})Li, Qu, and Wang]{li2019compositional}
Kangquan Li, Longjiang Qu, and Qiang Wang.
\newblock Compositional inverses of permutation polynomials of the form
  $x^rh(x^s)$ over finite fields.
\newblock \emph{Cryptography and Communications}, 11\penalty0 (2):\penalty0
  279--298, 2019{\natexlab{b}}.

\bibitem[Li(2017)]{li2017two}
Nian Li.
\newblock On two conjectures about permutation trinomials over $\gf_{3^{2k}}$.
\newblock \emph{Finite Fields and Their Applications}, 47:\penalty0 1--10,
  2017.

\bibitem[Li and Helleseth(2017)]{li2017several}
Nian Li and Tor Helleseth.
\newblock Several classes of permutation trinomials from {Niho} exponents.
\newblock \emph{Cryptography and Communications}, 9\penalty0 (6):\penalty0
  693--705, 2017.

\bibitem[Li and Zeng(2019)]{li2018survey}
Nian Li and Xiangyong Zeng.
\newblock A survey on the applications of {Niho} exponents.
\newblock \emph{Cryptography and Communications}, 11:\penalty0 509--548, 2019.

\bibitem[Liu et~al.(2019)Liu, Chen, Xu, and
  Sun]{liuTripleCyclePermutationsFinite2019}
Xianping Liu, Yuan Chen, Yunge Xu, and Zhimin Sun.
\newblock Triple-{Cycle} permutations over finite fields of characteristic two.
\newblock \emph{International Journal of Foundations of Computer Science},
  30\penalty0 (02):\penalty0 275--292, 2019.

\bibitem[Mesnager(2016)]{mesnager2016constructions}
Sihem Mesnager.
\newblock On constructions of bent functions from involutions.
\newblock In \emph{2016 IEEE International Symposium on Information Theory
  (ISIT)}, pages 110--114, 2016.

\bibitem[Mullen and Wang(2014)]{MullenWang14}
Gary~L Mullen and Qiang Wang.
\newblock Permutation polynomials of one variable.
\newblock In \emph{Handbook of Finite Fields}, pages 215--230. CRC, 2014.

\bibitem[Niederreiter and Winterhof(2005)]{niederreiter2005cyclotomic}
Harald Niederreiter and Arne Winterhof.
\newblock Cyclotomic $r$-orthomorphisms of finite fields.
\newblock \emph{Discrete Mathematics}, 295\penalty0 (1-3):\penalty0 161--171,
  2005.

\bibitem[Niu et~al.(2020{\natexlab{a}})Niu, Li, Qu, and Wang]{niu2019new}
Tailin Niu, Kangquan Li, Longjiang Qu, and Qiang Wang.
\newblock New constructions of involutions over finite fields.
\newblock \emph{Cryptography and Communications}, 12:\penalty0 165--185,
  2020{\natexlab{a}}.

\bibitem[Niu et~al.(2020{\natexlab{b}})Niu, Li, Qu, and Wang]{niu2020general}
Tailin Niu, Kangquan Li, Longjiang Qu, and Qiang Wang.
\newblock A general method for finding the compositional inverses of
  permutations from the {AGW} criterion.
\newblock \emph{arXiv preprint arXiv:2004.12552}, 2020{\natexlab{b}}.

\bibitem[Park and Lee(2001)]{park2001permutation}
Young~Ho Park and June~Bok Lee.
\newblock Permutation polynomials and group permutation polynomials.
\newblock \emph{Bulletin of the Australian Mathematical Society}, 63\penalty0
  (1):\penalty0 67--74, 2001.

\bibitem[Tu and Zeng(2018)]{tu2018two}
Ziran Tu and Xiangyong Zeng.
\newblock Two classes of permutation trinomials with {N}iho exponents.
\newblock \emph{Finite Fields and Their Applications}, 53:\penalty0 99--112,
  2018.

\bibitem[Tuxanidy and Wang(2014)]{tuxanidy2014inverses}
Aleksandr Tuxanidy and Qiang Wang.
\newblock On the inverses of some classes of permutations of finite fields.
\newblock \emph{Finite Fields and Their Applications}, 28:\penalty0 244--281,
  2014.

\bibitem[Wang(2007)]{wang2007cyclotomic}
Qiang Wang.
\newblock Cyclotomic mapping permutation polynomials over finite fields.
\newblock In \emph{Sequences, Subsequences, and Consequences}, pages 119--128.
  Springer, 2007.

\bibitem[Wang(2019)]{Wang2019index}
Qiang Wang.
\newblock Polynomials over finite fields: an index approach.
\newblock In \emph{Combinatorics and Finite Fields. Difference Sets,
  Polynomials, Pseudorandomness and Applications}, pages 319--348. Degruyter,
  2019.

\bibitem[Wu(2014)]{wu2014compositional}
Baofeng Wu.
\newblock The compositional inverse of a class of linearized permutation
  polynomials over $\mathbb{F}_{2^n}$, $n$ odd.
\newblock \emph{Finite Fields and Their Applications}, 29:\penalty0 34--48,
  2014.

\bibitem[Wu and Liu(2013)]{wu2013compositional}
Baofeng Wu and Zhuojun Liu.
\newblock The compositional inverse of a class of bilinear permutation
  polynomials over finite fields of characteristic $2$.
\newblock \emph{Finite Fields and Their Applications}, 24:\penalty0 136--147,
  2013.

\bibitem[Wu et~al.(2017)Wu, Yuan, Ding, and Ma]{wu2017permutation}
Danyao Wu, Pingzhi Yuan, Cunsheng Ding, and Yuzhen Ma.
\newblock Permutation trinomials over $\mathbb{F}_{2^m}$.
\newblock \emph{Finite Fields and Their Applications}, 46:\penalty0 38--56,
  2017.

\bibitem[Wu et~al.(2020)Wu, Li, and
  Wang]{wuCharacterizationsConstructionsTriplecycle2020a}
Mengna Wu, Chengju Li, and Zilong Wang.
\newblock Characterizations and constructions of triple-cycle permutations of
  the form $x^rh(x^s)$.
\newblock \emph{Designs, Codes and Cryptography}, June 2020.
\newblock \doi{10.1007/s10623-020-00768-1}.

\bibitem[Xu et~al.(2018)Xu, Cao, and Ping]{xu2018some}
Guangkui Xu, Xiwang Cao, and Jingshui Ping.
\newblock Some permutation pentanomials over finite fields with even
  characteristic.
\newblock \emph{Finite Fields and Their Applications}, 49:\penalty0 212--226,
  2018.

\bibitem[Zha et~al.(2017)Zha, Hu, and Fan]{zha2017further}
Zhengbang Zha, Lei Hu, and Shuqin Fan.
\newblock Further results on permutation trinomials over finite fields with
  even characteristic.
\newblock \emph{Finite Fields and Their Applications}, 45:\penalty0 43--52,
  2017.

\bibitem[Zheng et~al.(2019)Zheng, Yuan, Li, Hu, and
  Zeng]{zheng2019constructions}
Dabin Zheng, Mu~Yuan, Nian Li, Lei Hu, and Xiangyong Zeng.
\newblock Constructions of involutions over finite fields.
\newblock \emph{IEEE Transactions on Information Theory}, 65\penalty0
  (12):\penalty0 7876--7883, 2019.

\bibitem[Zheng et~al.(2020)Zheng, Wang, Wang, and Wei]{zheng2020inverses}
Yanbin Zheng, Fu~Wang, Libo Wang, and Wenhong Wei.
\newblock On inverses of some permutation polynomials over finite fields of
  characteristic three.
\newblock \emph{Finite Fields and Their Applications}, 66:\penalty0 101670,
  2020.

\bibitem[Zieve(2009)]{zieve2009some}
Michael~E Zieve.
\newblock On some permutation polynomials over $\gf_{q}$ of the form
  $x^{r}h(x^{(q-1)/d})$.
\newblock \emph{Proceedings of the American Mathematical Society},
  137:\penalty0 2209--2216, 2009.

\end{thebibliography}

\end{document}